\newcommand{\subscript}[2]{$#1 _ #2$}
\def\T{{ \mathrm{\scriptscriptstyle T} }}
\numberwithin{equation}{section}
\newtheorem{theorem}{Theorem}[section]
\newtheorem{proposition}{Proposition}[section]
\newtheorem{lemma}{Lemma}[section]
\title{Divide-and-Conquer Bayesian Inference in Hidden Markov Models}
\author[1]{Chunlei Wang \thanks{\url{chunlei-wang@uiowa.edu}}}
\author[1]{Sanvesh Srivastava \thanks{\url{sanvesh-srivastava@uiowa.edu}}}
\affil[1]{Department of Statistics and Actuarial Science, The University of Iowa}
\date{\today}
\begin{document}

\maketitle

\begin{abstract}
  Divide-and-conquer Bayesian methods consist of three steps: dividing the data into smaller computationally manageable subsets, running a sampling algorithm in parallel on all the subsets, and combining parameter draws from all the subsets. The combined parameter draws are used for efficient posterior inference in massive data settings. A major restriction of existing divide-and-conquer methods is that their first two steps assume that the observations are independent. We address this problem by developing a divide-and-conquer method for Bayesian inference in parametric hidden Markov models, where the state space is known and finite.  Our main contributions are two-fold. First, after partitioning the data into smaller blocks of consecutive observations, we modify the likelihood for performing posterior computations on the subsets such that the posterior variances of the subset and true posterior distributions have the same asymptotic order. Second, if the number of subsets is chosen appropriately depending on the mixing properties of the hidden Markov chain, then we show that the subset posterior distributions defined using the modified likelihood are asymptotically normal as the subset sample size tends to infinity. The latter result also implies that we can use any existing combination algorithm in the third step. We show that the combined posterior distribution obtained using one such algorithm is close to the true posterior distribution in 1-Wasserstein distance under widely used regularity assumptions. Our numerical results show that the proposed method provides an accurate approximation of the true posterior distribution than its competitors in diverse simulation studies and a real data analysis. 
\end{abstract}
 
\noindent%
{\it Keywords:}   Distributed Bayesian inference; hidden Markov model; massive data; Monte Carlo.

\section{Introduction}

\label{sec:intro}
Hidden Markov models are widely used for analyzing dependent discrete-time data. Our focus is on hidden Markov models with a known state space and parametric emission distributions, which we shorten as HMMs. Bayesian inference in HMMs using Markov chain Monte Carlo algorithms has been studied extensively from computational and theoretical perspectives \citep{Capetal06,Fru06}. In massive data settings, however, posterior computations become inefficient due to multiple passes through the full data. We address this limitation using the divide-and-conquer technique.
The resulting posterior distribution is computationally efficient and has similar asymptotic properties as that of the true posterior distribution. 

Consider the problem setup briefly. Let $(Y_{1},\ldots,Y_{n})$ be the observed data from an HMM with parameter $\theta$, $K$ be the number of subsets, and $m = n/K$ be the subset sample size, where $m$ is assumed to be an integer for convenience. The proposed method for divide-and-conquer inference in HMMs has three steps.
\begin{enumerate} 
\item Partition the observed data sequence into $K$ subsets  
\begin{align}
\label{eq:int:step1}
  Y_{[0]} = \emptyset, \, Y_{[1]} = (Y_{1},\ldots,Y_{m}), \, \ldots, \, Y_{[K]} = (Y_{(K-1)m+1},\ldots,Y_{n }). 
\end{align}
\item Given a prior density $\pi(\theta)$ and the modified conditional likelihood $p_{\theta}(Y_{[j]}\mid Y_{[j-1]})$, obtain $\theta$ draws on the subsets in parallel with posterior densities 
\begin{align}
\label{eq:int:step2}
  \pi(\theta\mid Y_{[j]}, Y_{[j-1]}) \propto \pi(\theta)\{p_{\theta}(Y_{[j]}\mid Y_{[j-1]})\}^{K},\quad j =1,\ldots,K.
\end{align}
\item Combine posterior draws of $\theta$ from all the $K$ subsets.
\end{enumerate}
The third step uses an existing combination algorithm, but we solve two major problems that remain unaddressed in the literature on divide-and-conquer Bayesian methods. First, we show that mixing properties of the hidden Markov chain determine an appropriate choice of $K$. Second, we show that $\{p_{\theta}(Y_{[j]}\mid Y_{[j-1]})\}^{K}$ in \eqref{eq:int:step2} accurately approximates the true conditional likelihood using the prediction filter as $m,K$ tend to infinity. Due to the novelty of the first two steps, the combined posterior is called the \textit{block filtered posterior distribution}.

Markov chain Monte Carlo algorithms for posterior inference in HMMs are categorized in two major groups, one based on data augmentation \citep{Robetal99a,Sco02} and the other on the Metropolis-Hastings algorithm  \citep{Celetal00,Robetal00,Capetal03,andrieu2010particle}. The former and latter groups respectively draw and average over the hidden Markov chain in every iteration. The data augmentation-type algorithms are easy to implement using forward-backward recursions, but repeated passes through the full data are time consuming in massive data settings \citep{Sco02,Ryd08}. Metropolis-Hastings-type algorithms bypass this time consuming step, but the proposal tuning required for optimal performance outweighs their advantages in practice. A major reason for the popularity of these algorithms is the availability of efficient software \citep{Ryd08}.  Unfortunately, the software faces computational bottlenecks in massive data applications. Divide-and-conquer Bayesian methods cannot solve this problem directly because they mainly focus on independent observations. Filling this gap, our goal is to scale existing Bayesian methods for HMMs to massive data sets using the divide-and-conquer technique.

Online methods based on stochastic approximation have been widely used to address the inefficiency of data augmentation \citep{CapMou09}. Online Expectation Maximization (EM) has also been extended to HMMs  and is efficient in massive data settings \citep{cappe2011online,LeFor13}. The stochastic versions of online EM are based on sub-sampling \citep{Baretal17,Quietal19} and stochastic gradient descent  \citep{WelTeh11,Bouetal18,Feaetal18,Bieetal19}. These algorithms are scalable and guarantee convergence to a general target, but they also require significant tuning for optimal performance in HMMs. Online EM has also been used to develop efficient variational Bayes algorithms, but such algorithms are known to underestimate posterior uncertainty  \citep{Fotetal14,Giaetal18}.

A variety of divide-and-conquer methods exist for efficient posterior inference using Monte Carlo algorithms \citep{Scoetal16,li2017simple,minsker2017robust,srivastava2018scalable,Robetal18,xue2019double,Joretal19,WuRob19}.  The combination steps in these methods are fairly general and their theoretical guarantees mainly rely on asymptotic normality of the subset posterior distributions. On the other hand, their first two steps assume that the observations are independent, so they are inapplicable to HMMs. The major problem lies in extending and justifying the validity of the second step for dependent observations. Specifically, the second step defines the posterior density on any subset using a quasi-likelihood that raises the subset likelihood to a power of $K$. This modification compensates for the missing $(1 - 1/K)$-fraction of the full data on any subset and ensures that the variances of the true and subset posterior distributions are of the same asymptotic order \citep{Minetal14}. The generalization of this idea to dependent data is unclear and we are unaware of any extensions for HMMs similar to \eqref{eq:int:step2}. 

Our first major contribution is the definition of subset posterior distribution for divide-and-conquer Bayesian inference in HMMs. The first step divides the observed data into $K$ non-overlapping time blocks $Y_{[1]}, \ldots, Y_{[K]}$ defined in \eqref{eq:int:step1}. For $j = 1,\ldots,K$, the second step defines the conditional likelihood of $j$th subset, which is denoted as $p_{\theta}(Y_{[j]}\mid Y_{[j-1]})$ in \eqref{eq:int:step2}, by conditioning only on $Y_{[j-1]}$ instead of $Y_{[1]}, \ldots, Y_{[j-1]}$. The modified conditional likelihood is raised to a power of $K$ and is used as the quasi-likelihood for defining the (quasi) subset posterior density in \eqref{eq:int:step2}.  The quasi-likelihood replaces the true likelihood in the original Monte Carlo algorithm for posterior inference on a subset. The quasi-likelihood on any subset is easily calculated using a simple modification of the forward-backward recursions with access to $(2/K)$-fraction of the full data, which {is} significantly more efficient than using the full data when $m \ll n$.

Our modified conditional likelihood on a subset differs from the composite likelihoods used in HMMs \citep{Varetal11}. Such approaches replace the true likelihood by the composite marginal likelihood  \citep{Ryd94, ryden1997recursive,andrieu2005line,pauli2011bayesian}. This likelihood is a product of lower-dimensional marginal likelihoods of consecutive observation blocks that fail to account for the dependence between the time blocks. Our modified likelihood conditions on the previous time block and is a type of ``fixed-lag conditional likelihood approximation'', where  the lag corresponds to the size of a time block. More importantly, {in a divide-and-conquer setup}, the modified conditional likelihood in \eqref{eq:int:step2} is raised to the power of $K$ for defining subset posterior density and $K$ tends to infinity. Such modifications are absent in composite likelihoods. Recently, there is an increasing interest in developing composite likelihoods for inference in massive data settings using stochastic gradient descent algorithm \citep{aicher2019stochastic,NemFea21}. The block filtered posterior does not compete with any of these algorithms. We are instead motivated to scale them to even larger data sets by using them for efficient sampling on the subsets. For ease of presentation, we have focused on Markov chain Monte Carlo algorithms, but our theoretical guarantees are free of any assumption on the type of sampling algorithm used on the subsets. 

Our second major contribution is to show that the $K$ subset posterior distributions defined in \eqref{eq:int:step2} are asymptotically normal as $m$ and $K$ tend to infinity. For independent observations, this is the first step in justifying asymptotic normality of the combined posterior distribution; see, for example, \citet{li2017simple}  and  \citet{xue2019double}. This step is nontrivial in our case due to the dependence within and between the $K$ time blocks. We show that if the growth of $K$ is chosen appropriately depending on the mixing properties of the hidden Markov chain, then all the subset posterior distributions are asymptotically normal and their covariance matrices have the same asymptotic order as that of the true posterior distribution. Such results that quantify the growth of $K$ on a mixing property are missing in the literature on divide-and-conquer Bayesian methods.

The asymptotic normality of subset posterior distributions greatly simplifies the third step. First, it implies that any existing combination algorithm can be used to combine  parameter draws from all the subsets. We choose the simplest one based on the Double Parallel Monte Carlo algorithm, where the combined parameter draw is a weighted average of the subset posterior draws \citep{xue2019double}. Second, the regularity assumptions required for justifying the asymptotic normality of subset posterior distributions extend naturally to the third step. Specifically, we need only an extra assumption on the weights used in the combination step for showing the asymptotic normality of the block filtered posterior distribution as $m$ and $K$ tend to infinity.

The posterior distribution of parameter in HMMs is asymptotically normal under usual regularity assumptions \citep{bickel1998asymptotic,de2008asymptotic,gassiat2014posterior,vernet2015posterior}. We extend these results to divide-and-conquer Bayesian inference in that the block filtered posterior distribution is asymptotically normal if $K$ and $m$ are chosen properly depending on mixing property of the hidden Markov chain. Our regularity assumptions and convergence rates are natural extensions of existing results in that they reduce to those in \citet{de2008asymptotic} if $K=1$ and to those in \citet{li2017simple} if the data are independent. The additional set of regularity assumptions are required for guaranteeing the accuracy of the subset posterior distribution. Finally, we identify conditions under which inference obtained using the true and block filtered posterior distributions agree asymptotically. 

\section{Divide-and-Conquer Bayesian Inference in HMMs}

\subsection{Setup and Background}
\label{setup-back}

An HMM is a discrete-time stochastic process $\{(X_{t},Y_{t}), t \geq 1\}$, where the $Y$-variables are observed and $X$-variables are latent. In our setup, HMMs satisfy three properties. First, $\{X_{t}\}$ is a time-homogeneous Markov chain with a finite state-space $\Xcal =  \{1,\ldots, S\}$, where $S$ is known, and $Y_t$ takes values in a general space $\Ycal$ for every $t$. Second, $\{Y_{t}, t \geq 1\}$ are conditionally independent given $\{X_{t}, t \geq 1\}$. Third, the distribution of $Y_t$ given $\{X_{u}, u \geq 1\}$ equals to the distribution of $Y_t$ given $X_t$. The initial distribution and transition probability matrix of $\{X_t\}$ are denoted as $r(\cdot)$ and $Q$, respectively, where $r(\cdot)$ is stationary,  $\PP(X_{1} = a) = r(a)$ for $a \in \Xcal$, and 
\begin{align}
  \label{eq:2}
  Q_{ab} = q(a, b), \quad q(a, b) = \PP(X_{t+1} = b  \mid X_{t} = a), \quad a, b \in \Xcal, \quad t \geq 1.
\end{align}
The conditional distributions of $Y_t$ given $X_t$ belong to a common family  ($t=1, \ldots, n$). For any $x \in \Xcal$, the conditional distribution of $Y_t$ given $X_t = x$ is $G(\cdot \mid x)$ and has density  $g(\cdot \mid x)$ with respect to a $\sigma$-finite measure $\mu$ on $\mathcal{Y}$.

Focusing on parametric HMMs, we adopt a widely used setup for studying their asymptotic properties \citep{de2008asymptotic,bickel1998asymptotic,leroux1992maximum}. Let $\Theta \subset \RR^d$ be the parameter space of fixed dimension $d \in \NN$, the dependence of $r(\cdot)$, $q(\cdot,\cdot)$, $g(\cdot \mid x)$ on a parameter $\theta \in \Theta$ be denoted as $r_{\theta}(\cdot)$, $q_{\theta}(\cdot,\cdot)$, $g_{\theta}(\cdot \mid x)$, and $\theta_0 \in \Theta$ be the true parameter value. We use the shorthand $Y^n_m$ for $(Y_m, \ldots, Y_n)$, where $1\leq m \leq n$. Denote the sample size as $n$, the observed data as $(Y_1, \ldots, Y_n) \equiv Y_1^n$, the latent data as $(X_1, \ldots, X_n) \equiv X_1^n$, and the augmented data as $(X_1^n, Y_1^n)$. The joint densities of $(X_1^n, Y_1^n)$ with respect to $(\text{counting measure})^{n} \times \mu^{n}$ and the marginal densities of $Y_1^n$ with respect to $\mu^{n}$ are 
\begin{align}
  \label{eq:3}
  p_{\theta}(X_1^n, Y_1^n) &= r_{\theta}(X_{1}) \prod_{t=1}^{n-1}q_{\theta}(X_{t},X_{t+1}) \prod_{t=1}^{n} g_{\theta}(Y_{t} \mid X_{t}), 
\nonumber
\\
                             p_{\theta}(Y_1^n) &= \sum_{X_1^n \in \mathcal{X}^{n}} p_{\theta}(X_1^n,Y_1^n),\quad \theta \in \Theta.
\end{align}
The true density of the observed data is $p_{\theta_0}(Y_1^n)$ for some $\theta_0 \in \Theta$. Given a prior distribution on $\Theta$ with density $\pi$, Markov chain Monte Carlo algorithms based on data augmentation-type and Metropolis-Hastings-type algorithms are widely used for posterior inference on $\theta$ \citep{Sco02,Ryd08}. When $n$ is large, both types of algorithms are computationally expensive because the former requires repeated passing through the full data and the latter evaluates the marginal likelihood $p_{\theta}(Y_1^n)$ in every iteration.

\subsection{Partitioning Scheme}
\label{div-step}

The first step in divide-and-conquer Bayesian inference partitions the full data into smaller computationally tractable subsets. A widespread practice is to randomly divide the samples into $K$ subsets, but this scheme fails to preserve the dependence structure of HMM on any subset. This leads to incorrect inference on $\theta$. We instead partition the full data into $K$ non-overlapping blocks of consecutive observations starting from $Y_1$, where each block has size $n / K$. For convenience, we assume that $n$ is divisible by $K$ and define the subset sample size as $m=n/K$. Denote the $j$th block of observation with sample size $m$ as $Y_{(j-1)m+1}^{jm}$ and the latent $X$-variables as $X_{(j-1)m+1}^{jm}$ ($j=1, \ldots, K$). The $K$ observation blocks constitute the $K$ partitions of $Y_1^n$ in that $Y_1^n = (Y_{1}^{m}, \ldots, Y_{(K-1)m+1}^{n})$. The choice of $K$ (or $m$) depending on $n$ is crucial for asymptotically valid inference on any subset relative to the true posterior distribution. Theorem \ref{thm1} in Section \ref{sec:main-results} discusses the choice of $K$ in greater detail, showing that it depends on the mixing properties of the hidden Markov chain.

After partitioning the data, the likelihoods of $\theta$ on the first and $j$th subsets ($j=2, \ldots, K$) conditional on the preceding subsets are 
{\small \begin{align}
  \label{eq:5}
  & p_{\theta}(Y_1^m) = \sum_{X_1 \in \mathcal{X}} p_{\theta}(Y_1^m \mid X_1) \,
    p_{\theta}(X_{1}), 
  \\
  & p_{\theta}(Y_{(j-1)m+1}^{jm} \mid Y_{1}^{(j-1)m}) = \sum_{X_{(j-1)m+1}} p_{\theta}(Y_{(j-1)m+1}^{jm} \mid X_{(j-1)m+1}) 
    p_{\theta}(X_{(j-1)m+1} \mid Y_{1}^{(j-1)m}),\nonumber
\end{align}}%
respectively, where $p_{\theta}(X_{(j-1)m+1} \mid Y_{1}^{(j-1)m})$ is the \textit{prediction filter} of $X_{(j-1)m+1}$ given the previous $(j-1)$ subset blocks and $p_{\theta}(X_{1})$ equals the stationary distribution $r_{\theta}(X_1)$. For every $j$, the prediction filter is computed in $O(n)$ operations using forward filtering if we have access to the full data. Unfortunately, no subset has access to the full data in the divide-and-conquer setting. Ignoring the full data, if we use $Y_{(j-1)m+1}^{jm}$ only for inference on $\theta$, then the subset $j$ likelihood uses $(1/K)$-fraction of the full data and ignores the dependence on $Y_{1}^{(j-1)m}$. We next address both problems by modifying the subset $j$ likelihood in \eqref{eq:5} using {the $K$th power of one-block conditional likelihood $\{p_{\theta}(Y_{(j-1)m+1}^{jm} \mid Y_{(j-2)m+1}^{(j-1)m})\}^{K}$} that requires accessing only $(2/K)$-fraction of the full data.

\subsection{Subset Sampling Scheme}
\label{samp-step}

In divide-and-conquer setup, any chosen sampler requires modification of the subset likelihood because every subset posterior only conditions on an $(1/K)$-fraction of the full data. A popular solution to this problem is raising the subset likelihood to the power of $K$. This is equivalent to replicating the data on a subset $K$-times, and  it compensates for the missing fraction of data, resulting in uncertainty estimates of parameters that are asymptotically equivalent to those obtained using the true posterior distribution 
\citep{Minetal14}. The likelihood modification is called \textit{stochastic approximation} and has been widely used in parametric models for independent data \citep{Srietal15,li2017simple,xue2019double}. The dependence induced by the hidden Markov chain implies that stochastic approximation developed for independent data is inapplicable for divide-and-conquer Bayesian inference in HMMs.

One of our main contributions is to extend stochastic approximation so that it accounts for dependence between subsets. The prediction filter $ p_{\theta}(X_{(j-1)m+1} \mid Y_{1}^{(j-1)m})$ in \eqref{eq:5} is geometrically ergodic and rapidly converges to its martingale limit as $m$ tends to infinity \citep{bickel1998asymptotic}. Motivated by this result and fixed-lag smoothing methods, we define the $j$th approximate ``prediction filter'' by conditioning only on the $(j-1)$th subset; that is, $p_{\theta}(X_{(j-1)m+1} \mid Y_{(j-2)m+1}^{(j-1)m})$ replaces $p_{\theta}(X_{(j-1)m+1} \mid Y_{1}^{(j-1)m})$ in \eqref{eq:5}. The likelihoods in \eqref{eq:5} are accordingly modified using this approximate prediction filter to yield the one-block conditional likelihood of $\theta$ on subset $j$ ($j = 2,\ldots,K$) as
\begin{align}
\label{eq:5.5}
  p_{\theta}(Y_{(j-1)m+1}^{jm} \mid Y_{(j-2)m+1}^{(j-1)m}) = \sum_{X_{(j-1)m+1} \in \mathcal{X}} & p_{\theta}(Y_{(j-1)m+1}^{jm} \mid X_{(j-1)m+1}) \times
\nonumber
\\
& \quad \quad 
    p_{\theta}(X_{(j-1)m+1} \mid Y_{(j-2)m+1}^{(j-1)m}).
\end{align}

The one-block conditional likelihood in \eqref{eq:5.5} accurately approximates the true conditional likehood in \eqref{eq:5}. Proposition \ref{filt-approx} in Section \ref{theory} shows that $p_{\theta}(X_{(j-1)m+1} \mid Y_{(j-1)m+1}^{(j-2)m})$ and $p_{\theta}(X_{(j-1)m+1} \mid Y_{1}^{(j-1)m})$ are close in total variation distance as $m$ tends to infinity; therefore, the one-block and true conditional likelihoods in \eqref{eq:5} and \eqref{eq:5.5}, respectively, are approximately equal for every $\theta$; see Section \ref{theory} for details. The one-block conditional likelihood in \eqref{eq:5.5} is also efficiently computed in $O(m)$ operations using forward-backward recursions and requires access to only $2m$ observations instead of $n$ required for computing the true likelihood in \eqref{eq:5}. This leads to computational gains when $m \ll n$.

The  $j$th subset posterior  distribution is defined using the $j$th one-block conditional  likelihood. {Specifically, given a prior distribution $\Pi$ on $\Theta$ with density $\pi$ defined with respect to Lebesgue measure, the density of the posterior distribution of $\theta$ given the $j$th subset is defined as}
\begin{align}
  \label{eq:s6}  
 & \tilde{\pi}_{m}(\theta \mid Y_{(j-1)m + 1}^{jm})  = \frac{e^{K w_j(\theta)} \pi(\theta)} {\int_{\Theta} e^{Kw_j(\theta)} \pi(\theta)d\theta}, 
\quad w_{j}(\theta)  = \log p_{\theta}(Y_{(j-1)m+1}^{jm} \mid Y_{(j-2)m+1}^{(j-1)m}) , 
\end{align}
for $j = 2,\ldots,K$,  where $w_1(\theta) = \log  p_{\theta}(Y_1^m)$. The quasi-likelihoods used in subset posterior densities \eqref{eq:s6} are the $K$th power of the one-block conditional likelihoods. The one-block conditional likelihoods account for the dependence of a subset on their immediately {preceding time block} and are raised to a power of $K$ to compensate for the missing $(1-1/K)$-fraction of the data. The latter idea is also used in the stochastic approximation for independent data. Indeed, if the $Y_1^n$ are independent, then \eqref{eq:s6} recovers the definition of subset posterior  distributions in divide-and-conquer Bayesian inference for independent data; therefore, our proposal in \eqref{eq:s6} generalizes stochastic approximation to dependent data.  
Following \citet{Varetal11}, we interpret the $j$th one-block conditional likelihood as a composite likelihood, which provides another composite ``marginal'' likelihood for inference in HMMs \citep{Ryd94, ryden1997recursive,andrieu2005line}. 

Existing results do not imply that $\tilde{\pi}_{m}(\theta \mid Y_{(j-1)m + 1}^{jm})$ in \eqref{eq:s6} is asymptotically normal. The one-block conditional likelihood approximates the true conditional likelihood in \eqref{eq:5.5} and is raised by a power $K$ in \eqref{eq:s6}. In our theoretical setup, $m$ and $K$ tend to infinity simultaneously. This is unlike the existing setup that uses the true conditional likelihood to define the posterior density of $\theta$, sets $K=1$, and increases $m$ to infinity \citep{de2008asymptotic,vernet2015posterior,vernet2015non}. Establishing asymptotic normality of $\tilde{\pi}_{m}(\theta \mid Y_{(j-1)m + 1}^{jm})$  is a crucial step in studying theoretical properties of the block filtered posterior distribution. Our other main contribution is to establish that if the growth rate of $K$ is chosen appropriately depending on the mixing properties of the prediction filter and $K,m$ tend to infinity, then the subset posterior distributions are asymptotically normal with covariance matrices having the same asymptotic order as that of true posterior distribution; see Theorem \ref{thm1} in Section \ref{theory} for details.

The two groups of Markov chain Monte Carlo algorithms can now be used for drawing $\theta$ using \eqref{eq:s6} on all the $K$ subsets in parallel. Let $\theta^{(t)}_{(j)}$ be the $t$th draw of $\theta$ on subset $j$ and $T$ be the number of post-burn-in draws collected on any subset.  Motivated from the asymptotic normality of subset posterior distributions, the next section develops a combination scheme that approximates the true posterior distribution using a mixture with $K$ components, where the draws from the $j$th mixture component are obtained by appropriately transforming, centering, and scaling the subsets posterior draws $\{ \theta_{(j)}^{(t)} \}_{t=1}^T$ for $j = 1,\ldots, K$. 

\subsection{Combination Scheme}

The final step of a divide-and-conquer Bayesian approach combines parameter draws from all the subsets.  Our combination scheme is an extension of the Double-Parallel Monte Carlo algorithm \citep{xue2019double}, which is computationally simple. It uses a global centering vector $\tilde \theta \in \RR^d$ and a $d \times d$ symmetric positive definite scaling matrix $\tilde \Sigma$. Let $\hat{\theta}_{j}$ be the maximum likelihood estimator and $\Sigma_{j}$ be the posterior covariance matrix of $\theta$ on subset $j$. Then, we approximate the density of the true posterior distribution using
\begin{align}
\label{eq:bfp}
  \tilde{\pi}_{n}(\theta \mid Y_{1}^{n}) =  \frac{1}{K}\sum_{j=1}^K   \frac{\det( \Sigma_j)^{1/2}}{\det(\tilde \Sigma)^{1/2}} \tilde \pi_m \left\{ \Sigma_j^{1/2} \tilde \Sigma^{-1/2} (\theta - \tilde \theta) + \hat \theta_j \mid Y_{(j-1)m + 1}^{jm} \right\},\quad \theta \in \Theta,
\end{align}
where $\tilde \pi_{m}(\theta | Y_{(j-1)m + 1}^{jm})$ is the $j$th subset posterior  density defined in \eqref{eq:s6} and $\tilde{\pi}_{n}(\theta \mid Y_{1}^{n})$ is a mixture of the $K$ subset posterior densities. The distribution that has density $\tilde{\pi}_{n}(\theta \mid Y_{1}^{n})$ in \eqref{eq:bfp} is called the \textit{block filtered posterior} distribution. Our combination scheme in \eqref{eq:bfp} depends on the choice of $(\tilde \theta,\tilde \Sigma)$, so we denote it as COMB$(\tilde \theta, \tilde \Sigma)$.

Let $\{ \theta_{(j)}^{(t)} \}_{t=1}^T$ be the parameter draws on subset $j$ for $j = 1,\ldots, K$. The application of COMB$(\tilde \theta, \tilde \Sigma)$ has two steps and yields $\theta$ draws from block filtered posterior distribution as
\begin{align}
  \label{eq:9}
  \tilde \theta_{(j)}^{(t)} = \tilde \theta + \tilde \Sigma^{1/2} \left\{ \Sigma_{j}^{-1/2}(\theta_{(j)}^{(t)}  - \hat \theta_j) \right\}, \quad j = 1, \ldots, K; \; t = 1, \ldots, T,
\end{align}
where the $\theta$ draws on subset $j$ are centered and scaled by $(\hat{\theta}_{j}, \Sigma_{j})$ for $j = 1,\ldots,K$ and then re-centered and re-scaled by $(\tilde \theta, \tilde \Sigma).$ The parameter draws $\tilde \theta_{(j)}^{(t)}$ in \eqref{eq:9} are used as the computationally efficient alternative to draws from the true posterior distribution for inference on $\theta$.
If the centering and scaling parameters $(\hat{\theta}_{j}, \Sigma_{j})_{j=1}^ K$ is unavailable, then we use their Monte Carlo estimates $(\mu_{\theta_{(j)}},\Sigma_{\theta_{(j)}})_{j=1}^ K$ for obtaining the $\theta$ draws as
\begin{align}
  \label{eq:10}
  \mu_{\theta_{(j)}} = \frac{1}{T} \sum_{t=1}^T \theta_{(j)}^{(t)}, \quad \Sigma_{\theta_{(j)}} = \frac{1}{T} \sum_{t=1}^T \left(\theta^{(t)}_{(j)} - \mu_{\theta_{(j)}} \right) \left(\theta^{(t)}_{(j)} - \mu_{\theta_{(j)}} \right)^\T,\quad j=1,\ldots,K,
\end{align}
which reduces to a version of the Double-Parallel Monte Carlo algorithm.

Many existing combination algorithms are special cases of \eqref{eq:9}
 with different choices of $(\tilde \theta, \tilde \Sigma)$. Let  $I_{d }$ be a $d \times d$ identity matrix and $A^{1/2}$ be the square root of a symmetric positive definite matrix $A$. Then, three candidates for $(\tilde \theta, \tilde \Sigma)$ are 
\begin{align}
  \label{eq:8}
  (\overline \mu, I_{d \times d}), \quad (\overline \mu, \check \Sigma), \quad (\overline \mu, \overline \Sigma), \quad \check \Sigma^{1/2} = \frac{1}{K} \sum_{j=1}^K  \Sigma^{1/2}_{j}, \quad \overline \Sigma = \frac{1}{K} \sum_{j=1}^K  \Sigma_{j}.
\end{align}
They lead to three combination algorithms COMB$(\overline \mu, I_{d\times d})$, COMB$(\overline \mu, \check \Sigma)$, and COMB$(\overline \mu, \overline \Sigma)$, where the first two have been proposed in \citet{xue2019double} and \citet{XuSri19}, respectively. In Section \ref{sec:4} of experiments, we draw $\theta$ from the block filtered posterior distribution using COMB$(\hat{\theta},\overline{\Sigma})$
in an HMM with Gaussian emission densities, where $\hat \theta$ is the maximum likelihood estimator using the full data.

\section{Theoretical Properties}
\label{theory}

\subsection{Setup}
\label{sec:setup}

We introduce some notation required for stating the assumptions of our theoretical setup. {Given $n = mK$ observations $Y_{1}^{n}\in \mathcal{Y}^{n}$, the first subset log-likelihood is $L_m(\theta) = \log p_{\theta}(Y_1^m)$ and the $j$th subset log-likelihood is $L_{jm}(\theta) = \log p_{\theta}(Y_{(j-1)m +1}^{jm})$ for $\theta \in \Theta$, $j = 2,\ldots, K$}, where $p_{\theta}$ is defined in \eqref{eq:3} in terms of $r_{\theta}, Q_{\theta}$, and $g_{\theta}$. Let $\theta_0 \in \Theta$ be the true parameter value, $P^{(n)}_{\theta_0}$ be the law of $Y_1^n$, $\{P^{(n)}_{\theta}:\theta \in \Theta\}$ be the family of probability measures on $\mathcal{Y}^n$ such that $P^{(n)}_{\theta}$ has density $p_{\theta}$ with respect to $\mu^n$, and $\Pi$ be the prior distribution on $\Theta$ with density $\pi(\theta)$ with respect to the {Lebesgue measure on $\RR^{d}$}. Denote the Euclidean norm as $\| \cdot \|_2$, $P^{(n)}_{\theta_0}$ as $\PP_0$, $P^{(n)}_{\theta}$ as $\PP_\theta$, expectation with respect to $\PP_{0}$ and $\PP_{\theta}$ as $\EE_0$ and $\EE_{\theta}$, respectively, and the first, second, and third derivative of $L_{jm}(\theta)$ with respect to $\theta$ as $L'_{jm}(\theta)$, $L''_{jm}(\theta)$, and $L'''_{jm}(\theta)$, respectively, for $j=1, \ldots, K$. 

The theoretical guarantees are based on the following assumptions on a family of HMM models $\{P^{(n)}_{\theta},\theta \in \Theta\}$ and the prior distribution $\Pi$. 
\begin{enumerate}[label=(\subscript{A}{{\arabic*}})]
\item \label{a1} (Existence of an envelope function.) There exists a constant $\delta_0 >0$, such that for any $m \geq 1$, $L_{m}(\theta)$ is three times differentiable with respect to $\theta$  in a neighborhood $B_{\delta_0}(\theta_0) = \{\theta \in \Theta : \| \theta - \theta_0 \|_2 \leq \delta_0 \}$.
For any $(z_1, \ldots, z_m) \in \mathcal{Y}^m$, there exists an envelope function $M(z_1, \ldots, z_m)$  that satisfies
  \begin{align*}
    \underset{\theta \in B_{\delta_0}(\theta_0)}{\sup} \left| \frac{d \log p_{\theta}(z_1, \ldots, z_m)}{d \theta_{l_1}} \right| &\leq M(z_1, \ldots, z_m), 
\\
    \underset{\theta \in B_{\delta_0}(\theta_0)}{\sup} \left| \frac{d^2 \log p_{\theta}(z_1, \ldots, z_m)}{d \theta_{l_1} d \theta_{l_2} } \right| &\leq M(z_1, \ldots, z_m), \\    
    \underset{\theta \in B_{\delta_0}(\theta_0)}{\sup} \left| \frac{d^3 \log p_{\theta}(z_1, \ldots, z_m)}{d \theta_{l_1} d \theta_{l_2} d \theta_{l_3}} \right| &\leq M(z_1, \ldots, z_m), \quad l_1, l_2, l_3 = 1, \ldots, d,
  \end{align*}
and $\frac{1}{m}M(Y_{1},\ldots,Y_{m}) \rightarrow 
\mathcal{C}_{M}$
$\PP_0$-almost surely as $m\rightarrow \infty$ for a constant $\mathcal{C}_{M} >0.$

\item \label{a2} (Law of large numbers for the log-likelihood function.) For any $\theta \in \Theta$ and $j \in \{1, \ldots, K\}$, the normalized $j$th subset log-likelihood $m^{-1}L_{jm}(\theta) \rightarrow L(\theta)$ $\PP_{0}$-almost surely as $m \rightarrow \infty$, where $L(\theta)$ is a continuous deterministic function with a unique global maximum at $\theta_{0}$. The convergence is uniform over any compact subsets of $\Theta$.

\item \label{a3} (Central limit theorem for the score function at the true parameter.) For any $j \in \{1, \ldots, K\}$, the normalized score function evaluated at $\theta_0$ on $j$th subset, $ m^{-1/2} L_{jm}'(\theta_{0})$,  $\PP_{0}$-weakly converges to $N_{d}(0, I_{0})$ as $m \rightarrow \infty$, where the $I_{0}$ is non-singular and defined as the Fisher information matrix of the model evaluated at $\theta_0$. 
\item \label{a4} (Law of large numbers for the observed information matrix.) If $\theta_{m}^{\ast}$ is any stochastic sequence in $\Theta$ such that $\theta_{m}^{\ast} \rightarrow \theta_{0}$ $\PP_0$-almost surely as $m \rightarrow \infty$, then $-m^{-1} L_{jm}''(\theta^{\ast}_{m}) \rightarrow I_{0}$ in $\PP_0$-probability as $m \rightarrow \infty$. Furthermore, assume that
  $-m^{-1} L_{jm}''(\theta) $ is positive definite with eigenvalues uniformly bounded from below and above by constants for all $\theta\in B_{\delta_0}(\theta_0)$, every $j=1, \ldots, K$, all values of $Y_{(j-1)m + 1}^{jm} \in \Ycal^m$, and sufficiently large $m$.
\item \label{a5} The parameter space $\Theta \subset \RR^{d}$ is compact with $\text{diam}(\Theta) := \underset{\theta_1,\theta_2 \in \Theta}{\sup} \|\theta_1 - \theta_2\|_{2} < \infty$ and $\theta_{0}$ as an interior point.
\item \label{a6} The prior density $\pi(\theta)$ is positive and continuous for all $\theta \in \Theta.$
\item \label{a7} For $\theta \in \Theta$, the transition probability matrix $\{Q_{\theta}(a,b)\}$ $(a, b \in \{1, \ldots, S\})$ is ergodic.
\end{enumerate}

Our regularity assumptions are commonly used for proving Bernstein-von Mises theorem in parametric models. Assumptions \ref{a2}--\ref{a6} are identical to those used in Theorem 2.1 of \citet{de2008asymptotic} if we set $m=n$ with $K=1$. Assumption \ref{a1} is slightly stronger than the requirement of continuous second derivatives of the log-likelihood function in \citet{de2008asymptotic}. We require the existence of an envelope function to control the growth of the remained term in the third order Taylor expansion of $L_{jm}(\theta)$  for every $j$ after stochastic approximation. For the same reasons, \citet{li2017simple} employ an assumption that is equivalent to Assumption \ref{a1}. Assumptions \ref{a5}--\ref{a6} ensure that the prior density $\pi(\theta)$ is well-behaved on compact set $\Theta$.  
Assumptions \ref{a1}--\ref{a6} are the extensions of assumptions required for proving the Bernstein-von Mises theorem for independent and identically distributed data \citep[Theorem 8.2, Chapter 6]{LehCas98}. Assumption \ref{a7} implies that for any $\theta \in \Theta$, there exists a positive integer $v\geq 1$ such that every element of the $v$-step transition matrix $Q_{\theta}^{v}$ is positive. We assume that $v = 1$ and $\epsilon \vcentcolon = \inf_{a,b\in \mathcal{X}, \theta \in \Theta} Q_{\theta}(a,b) > 0$. The ergodicity of $\{X_t\}$ in Assumption \ref{a7} implies that $\{Y_t\}$ is stationary and ergodic under $\PP_\theta$ \citep[Lemma 1]{leroux1992maximum}. Assumptions that are equivalent to \ref{a1}--\ref{a7} are also used for proving the asymptotic normality of the maximum likelihood estimator of $\theta$ in HMMs \citep{bickel1998asymptotic}. 

The definition of $j$th subset $(j = 2,\ldots,K)$ posterior distribution in \eqref{eq:s6} uses the $j$th quasi-likelihood with approximate prediction filter conditioning only on the $(j-1)$th subset. Due to this approximation, we need the following assumptions to gaurantee the consistency of the $j$th subset posterior  distribution defined using the $j$th quasi-likelihood. 
\begin{enumerate}[label=(\subscript{A}{{\arabic*}}),resume]
\item
\label{a8} 
\begin{enumerate}[label = (\roman*)]
\item 
\label{b}
For $\theta \in  \Theta$
\begin{align*}
  \max_{a \in \mathcal{X}} \int_{\mathcal{Y}} \left\{\max_{b \in \mathcal{X}}|\log g_{\theta}(y|b)| \right\} g_{\theta_0}(y|a)\mu(dy) <\infty.
\end{align*}
\item  
\label{c}
Let $h_{\theta}(y) =  \max_{a,b \in \mathcal{X}} \frac{g_{\theta}(y|a)}{g_{\theta}(y|b)} \geq 1.$ 
There exists a constant $M\geq 1$ such that $\PP_0$-almost surely,
$$
\sup_{\theta \in \Theta}  h_{\theta}(Y_{1}) <  M.$$
\item 
\label{a}(Convergence of prediction filter.)
For $\theta \in  \Theta$,
define
$
\mu_{\theta}(y) = \{ 1 + (S-1)\epsilon^{-2} h_{\theta}(y)\}^{-1}
$
such that $\PP_0$-almost surely,
$$\|p_{\theta}(X_{m+1}|Y^{m}_{1}) - r_{\theta}(X_{m+1})\|_{\text{TV}} \leq 
S \prod_{i= 2}^{m-1}\exp\{-2\mu_{\theta}(Y_{i})\}\leq 
S \rho^{m-2},$$
where the mixing coefficient
\begin{align}
\rho  \vcentcolon = e^{-\tfrac{2}{1 + (S-1)\epsilon^{-2}M }}
\label{eq:mixc}
\end{align}
and $\|d\PP - d\QQ \|_{\text{TV}}$ is the total variation distance between measures $\PP$ and $\QQ$.

\end{enumerate}
\end{enumerate}
Assumptions \ref{a8}\ref{b}--\ref{a} are based on \citet[Theorem 2.1]{le2000exponential} and are global over $\Theta$. Among them, \ref{a8}\ref{c} is slighly stronger as we require $h_{\theta}(Y_{1})$ to be bounded $\PP_0$-almost surely. Together they imply that the prediction filters, hence the subset log-likelihood functions, are forgetting the condition of initial distribution exponentially fast. Assumption \ref{a8} \ref{a} implies that the prediction filter $ p_{\theta}(X_{m+1} \mid Y^{m}_{1})$ converges to $r_{\theta}(X_{m+1})$ geometrically with mixing coefficient $\rho$, where the martingale convergence of $p_{\theta}(X_{m+1} \mid Y^{m}_{1})$ and functions $\mu_{\theta}(Y_{i})$ are discussed in \citet[Lemma 5]{bickel1998asymptotic}. Our assumption is stronger than \citet[Lemma 5]{bickel1998asymptotic} in that we assume the stationary distribution $r_{\theta}$ is the martingale limit of the prediction filter $p_{\theta}(X_{m+1}\mid Y_{1}^{m})$ as $m$ tends to infinity. Assumption \ref{a8} is global over $\Theta$, which is necessary to show that the subset posterior distribution using quasi-likelihood provides accurate estimation. Under this assumption, the following proposition shows that the $j$th approximate prediction filter converges to the true prediction filter ($j=2,\ldots,K$), which implies that the quasi-likelihood in \eqref{eq:5.5} provides an accurate approximation of the true likelihood in \eqref{eq:5} as $m$ tends to infinity.
\begin{proposition}
  \label{filt-approx}
  If Assumption \ref{a8} holds, then for $j=2, \ldots, K$ and $\theta \in \Theta$, as $m \rightarrow \infty$,
  $$ \| p_{\theta}(X_{(j-1)m+1} \mid Y_{(j-2)m+1}^{(j-1)m}) - p_{\theta}(X_{(j-1)m+1} \mid Y_{1}^{(j-1)m})\|_{\text{TV}} \rightarrow 0$$   
  $\PP_0$-almost surely. 
\end{proposition}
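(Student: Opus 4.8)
The plan is to derive the claim directly from the exponential forgetting bound in Assumption~\ref{a8}\ref{a}, applied at two different prediction horizons and combined with the triangle inequality. Fix $j\in\{2,\ldots,K\}$ and $\theta\in\Theta$, and abbreviate $\ell=(j-1)m+1$, the time index of the predicted state. Since $r_\theta$ is the stationary law of $\{X_t\}$ under $\PP_\theta$, the marginal distribution of $X_t$ equals $r_\theta$ for every $t$; hence both quantities in the statement are one-step prediction filters of the same HMM, each initialized from $r_\theta$ and then updated with a block of consecutive observations --- the approximate filter $p_\theta(X_\ell\mid Y_{(j-2)m+1}^{(j-1)m})$ using the $m$ observations of block $j-1$, and the true filter $p_\theta(X_\ell\mid Y_1^{(j-1)m})$ using all $(j-1)m$ observations preceding time $\ell$. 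This is the observation that justifies invoking Assumption~\ref{a8}\ref{a} in time-shifted form and at horizon $(j-1)m$.

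First I would fix a $\PP_0$-probability-one event on which the envelope bound holds uniformly. By Assumption~\ref{a8}\ref{c}, $\sup_{\theta\in\Theta}h_\theta(Y_1)<M$ holds $\PP_0$-almost surely; since $\{Y_t\}$ is stationary under $\PP_0$ (a consequence of Assumption~\ref{a7} via \citet[Lemma~1]{leroux1992maximum}), countable additivity upgrades this to $\sup_{\theta\in\Theta}h_\theta(Y_t)<M$ simultaneously for every $t\ge 1$, $\PP_0$-almost surely. On this event $\mu_\theta(Y_t)\ge\mu_{\min}$ for all $t$ and all $\theta$, where $\mu_{\min}=\{1+(S-1)\epsilon^{-2}M\}^{-1}>0$, so any product $\prod_i\exp\{-2\mu_\theta(Y_i)\}$ over a window of length $N$ is at most $\rho^{N}$ with $\rho$ as in \eqref{eq:mixc}, uniformly in the position of the window and in $\theta$.

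Next I would apply Assumption~\ref{a8}\ref{a} to each of the two filters. By time-homogeneity of the chain and stationarity of $r_\theta$, the forgetting bound there holds for a prediction filter built from any block of consecutive observations, at any horizon; applied to the two blocks identified above it gives
\begin{align*}
\bigl\| p_\theta(X_\ell\mid Y_{(j-2)m+1}^{(j-1)m})-r_\theta(X_\ell)\bigr\|_{\text{TV}} &\le S\rho^{\,m-2}, \\
\bigl\| p_\theta(X_\ell\mid Y_1^{(j-1)m})-r_\theta(X_\ell)\bigr\|_{\text{TV}} &\le S\rho^{\,(j-1)m-2},
\end{align*}
$\PP_0$-almost surely. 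The triangle inequality then gives
\[
\bigl\| p_\theta(X_\ell\mid Y_{(j-2)m+1}^{(j-1)m})-p_\theta(X_\ell\mid Y_1^{(j-1)m})\bigr\|_{\text{TV}}
\;\le\; S\rho^{\,m-2}+S\rho^{\,(j-1)m-2}\;\le\; 2S\rho^{\,m-2},
\]
the last step using $j\ge 2$. Since $\rho\in(0,1)$, the right-hand side tends to $0$ as $m\to\infty$; the bound does not depend on $j$, so the convergence is in fact uniform over $j\in\{2,\ldots,K\}$ and the conclusion remains valid even when $K$ is allowed to grow with $n$.

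I do not anticipate a substantive obstacle: once Assumption~\ref{a8} is granted, the argument is a two-estimate triangle-inequality bound, and essentially all of the analytic content lives inside that assumption. The two points that require a little care are (i) verifying that both filters are genuinely initialized from the stationary law, which is where stationarity of $r_\theta$ and time-homogeneity of the chain enter and is what legitimizes the time-shifted and longer-horizon versions of Assumption~\ref{a8}\ref{a}; and (ii) arranging the almost-sure event so that the mixing rate $\rho$ is valid uniformly over the growing observation window and over $\theta\in\Theta$, which is handled by the stationarity-plus-countable-additivity step above.
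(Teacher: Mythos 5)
Your proof is correct, but it takes a genuinely different route from the paper's. The paper writes both filters as the same filter map $f^{\theta}_{m-1}$ applied to the common observation window $Y_{(j-2)m+1}^{(j-1)m}$ but started from two different initial measures --- $r_{\theta}$ versus the true prediction filter $p_{\theta}(X_{(j-2)m+1}\mid Y_{1}^{(j-2)m})$ --- and then invokes the contraction property of the filter map with respect to its initial condition (Lemma \ref{lemma:bvm3:1}) to reduce the problem to the total variation distance between those two initial measures, which is in turn controlled by Assumption \ref{a8}\ref{a}; this yields a bound of order $(1-\epsilon M^{-1})^{m-1}\rho^{(j-2)m-2}$, a product of two decaying factors. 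You instead interpolate through the stationary law $r_{\theta}$ and apply Assumption \ref{a8}\ref{a} twice --- once at horizon $m$ in time-shifted form (legitimized, as you note, by stationarity of the chain under $\PP_{\theta}$ and of $\{Y_t\}$ under $\PP_0$) and once at horizon $(j-1)m$ --- obtaining $2S\rho^{m-2}$. Your route avoids Lemma \ref{lemma:bvm3:1} entirely and is more elementary, at the cost of a slightly weaker (though uniform in $j$ and amply sufficient) rate; its only extra burden is upgrading the almost-sure envelope bound on $h_{\theta}(Y_1)$ to one holding simultaneously for all $t$, which you handle by stationarity and which the paper also needs and does implicitly. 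The paper additionally disposes of $j=2$ separately (the two filters coincide there, so the distance is exactly zero), a case your bound covers trivially.
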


\subsection{Main Results}
\label{sec:main-results}

We now present two theorems stating theoretical properties of the subset and block filtered posterior distributions. First, the following theorem specifies the asymptotic order of $K$ depending on the mixing coefficient $\rho$ and other regularity assumptions such that the limiting distributions of the $K$ subset posterior  distributions are normal as $m$ and $K$ tend to infinity.
\begin{theorem}\label{thm1}
Let the number of subsets $K$ be a sequence $K_{m} \rightarrow \infty$ as $m \rightarrow \infty$ and $n = mK_{m}$. For $j = 1, \ldots, K_{m}$, denote the maximum likelihood estimator of $\theta$ as  $\hat{\theta}_{j}$ that solves $L'_{jm}(\theta) = 0$, define the local parameter $h_j =  \sqrt{n}(\theta - \hat{\theta}_{j})$, and denote the distribution of $h_j$ implied by the $j$th subset  posterior density in \eqref{eq:s6} as $\Pi_{m}^{h_j}(\cdot \mid Y_{(j-1)m+1}^{jm})$. If Assumptions \ref{a1}--\ref{a7} hold, then 
\begin{enumerate}[label=$(\roman*)$]
\item
\label{thm1:1}
as $m \rightarrow \infty$, 
  \begin{align}
    \| \Pi_{m}^{h_1}(\cdot \mid Y_{1}^{m}) - N_d(0, I_0^{-1}) \|_{\text{TV}} \rightarrow 0 \text{ in } \PP_0\text{-probability };
 \label{eq:sub1}
  \end{align}
\item
\label{thm1:2}
  If Assumption \ref{a8} also holds and $K_{m} = o(\rho^{-m})$ for $\rho $ defined in \eqref{eq:mixc}, then for $j=2,\ldots,K_{m}$, as $m \rightarrow \infty$,
  \begin{align}
    \| \Pi_{m}^{h_j}(\cdot \mid Y_{(j-1)m+1}^{jm}) - N_d(0, I_0^{-1}) \|_{\text{TV}} \rightarrow 0 \text{ in } \PP_0\text{-probability},
\label{eq:sub2-subK}
  \end{align}
\end{enumerate}
  where $N_d(0, I_{0}^{-1})$ is a $d$-variate normal distribution with zero mean and the inverse of Fisher information matrix $I_0^{-1}$ as the covariance matrix.
\end{theorem}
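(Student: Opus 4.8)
The plan is to reduce both parts to a single ``stochastic-approximation Bernstein--von Mises'' argument for one block, with the extra wrinkle in part \ref{thm1:2} — that the quasi-likelihood replaces the subset likelihood by a filtered approximation — handled by a separate analytic comparison. For part \ref{thm1:1}, note that $w_1\equiv L_m$, so the first-subset quasi-posterior is $\tilde\pi_m(\theta\mid Y_1^m)\propto e^{K_mL_m(\theta)}\pi(\theta)$ with $n=mK_m$; let $\hat\theta_1$ solve $L_m'(\theta)=0$, which by Assumptions \ref{a2} and \ref{a5} satisfies $\hat\theta_1\to\theta_0$ $\PP_0$-almost surely. I would reparametrise by $h=\sqrt n(\theta-\hat\theta_1)$ and Taylor-expand $K_mL_m$ to third order about $\hat\theta_1$: the linear term vanishes; the quadratic term is $\tfrac{1}{2m}h^\T L_m''(\hat\theta_1)h$, which converges to $-\tfrac12 h^\T I_0 h$ by Assumption \ref{a4} applied with $\theta_m^\ast=\hat\theta_1$; the prior factor $\pi(\hat\theta_1+h/\sqrt n)\to\pi(\theta_0)>0$ by \ref{a6}; and the cubic remainder is at most $\tfrac{d^3}{6}\bigl(m^{-1}M(Y_1^m)\bigr)\|h\|^3/\sqrt n\to0$, using $m^{-1}M(Y_1^m)\to\mathcal{C}_M$ from the envelope assumption \ref{a1} and $n=mK_m\to\infty$. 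The unnormalised density of $h$ therefore converges pointwise to the $N_d(0,I_0^{-1})$ density, and I would upgrade this to total-variation convergence in the classical way — a dominating Gaussian envelope on $B_{\delta_0}(\theta_0)$ from the uniform eigenvalue bounds on $-m^{-1}L_m''$ in \ref{a4}, plus the well-separated maximum of the contrast $L$ from \ref{a2} and \ref{a5} (only reinforced by the power $K_m$) to make the mass outside $B_{\delta_0}(\theta_0)$ negligible relative to any polynomial in $n$ — and then invoke Scheff\'e/Vitali. This is Theorem~2.1 of \citet{de2008asymptotic} when $K_m=1$, and the envelope refinement keeping the cubic term uniformly small as $n=mK_m\to\infty$ is the device used by \citet{li2017simple}; it yields \eqref{eq:sub1}.

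For part \ref{thm1:2} the new ingredient is to compare, for $j\ge2$, the quasi-likelihood exponent $w_j(\theta)=\log p_\theta(Y_{(j-1)m+1}^{jm}\mid Y_{(j-2)m+1}^{(j-1)m})$ with the exact subset log-likelihood $L_{jm}(\theta)=\log p_\theta(Y_{(j-1)m+1}^{jm})$. Both are the same mixture of the emission factors $\{p_\theta(Y_{(j-1)m+1}^{jm}\mid X_{(j-1)m+1}=x)\}_x$, taken against $r_\theta(X_{(j-1)m+1})$ in the first case and against the one-step filter $p_\theta(X_{(j-1)m+1}\mid Y_{(j-2)m+1}^{(j-1)m})$ in the second; by stationarity this filter is a shifted copy of $p_\theta(X_{m+1}\mid Y_1^m)$, so Assumption \ref{a8}\ref{a} bounds its total-variation distance to $r_\theta$ by $S\rho^{m-2}$, $\PP_0$-almost surely and uniformly in $\theta$ (this is the quantitative form behind Proposition \ref{filt-approx}). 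Since every entry of $r_\theta$ and of the filter is at least $\epsilon$ (from $Q_\theta\ge\epsilon$ in \ref{a7}), dividing the two mixtures gives $|e^{w_j(\theta)-L_{jm}(\theta)}-1|\le 2\epsilon^{-1}S\rho^{m-2}$, hence $\sup_{\theta\in\Theta}|w_j(\theta)-L_{jm}(\theta)|=O(\rho^m)$, uniformly over $j$ as well.

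The two limits now interact through the $K_m$-th power in \eqref{eq:s6}: it amplifies the error to $\sup_{\theta\in\Theta}K_m|w_j(\theta)-L_{jm}(\theta)|=O(K_m\rho^m)$, which tends to $0$ exactly under the hypothesis $K_m=o(\rho^{-m})$. Hence the ratio of the unnormalised densities $e^{K_mw_j(\theta)}\pi(\theta)$ and $e^{K_mL_{jm}(\theta)}\pi(\theta)$ converges to $1$ uniformly in $\theta$; since the total-variation distance between two normalised densities is bounded by the supremum over $\theta$ of the ratio minus one, the quasi-posterior $\tilde\pi_m(\cdot\mid Y_{(j-1)m+1}^{jm})$ is $o(1)$ away in total variation from the tempered posterior $\propto e^{K_mL_{jm}(\theta)}\pi(\theta)$, and the same bound passes to the induced laws of $h_j=\sqrt n(\theta-\hat\theta_j)$ since that map is affine and data-measurable. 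It then remains to run the part \ref{thm1:1} argument verbatim with $(L_m,\hat\theta_1)$ replaced by $(L_{jm},\hat\theta_j)$ — Assumptions \ref{a2}--\ref{a4} are already stated uniformly over $j$, \ref{a1} applies to the $j$-th block through its generic envelope and stationarity, and \ref{a3} supplies the usual $\sqrt m$-consistency of $\hat\theta_j$ — to get that the $h_j$-law under the tempered posterior converges to $N_d(0,I_0^{-1})$; a triangle inequality delivers \eqref{eq:sub2-subK}.

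I expect the main obstacle to be precisely this coupling of the regimes $m\to\infty$ and $K_m\to\infty$. Each per-block Bernstein--von Mises statement is, up to the envelope refinement, a blockwise application of known HMM asymptotics, but the quasi-likelihood introduces a filter-approximation error that is exponentially small in $m$ yet enters at the power $K_m$; it is the balance $K_m\rho^m\to0$ that renders the correction factor negligible. Establishing that error bound \emph{uniformly over all of} $\Theta$ rather than only in a neighbourhood of $\theta_0$ — which is why the global Assumption \ref{a8} is needed — and then propagating it through the normalising constant without disturbing the $1/\sqrt n$ localisation is the delicate part; everything else is routine Taylor analysis.
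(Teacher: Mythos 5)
Your proposal is correct and follows essentially the same route as the paper: part \ref{thm1:1} is the third-order Taylor/envelope Bernstein--von Mises argument with the tail, intermediate, and local regions handled exactly as in Appendix B, and part \ref{thm1:2} rests on the same two pillars, namely a uniform-in-$\theta$ bound $\sup_{\theta\in\Theta}|w_j(\theta)-L_{jm}(\theta)|=O(\rho^{m})$ that is amplified by the power $K_m$ and killed by $K_m=o(\rho^{-m})$, followed by a triangle inequality against the tempered exact-likelihood posterior centred at $\hat\theta_j$. The only substantive deviation is how you obtain that uniform bound: you compare the two likelihoods directly as mixtures over the initial state, using $r_\theta(x)\geq\epsilon$ and Assumption \ref{a8}\ref{a} to control the ratio, whereas the paper telescopes the difference through the filter recursion via Lemmas \ref{lemma:bvm3:1}--\ref{lemma:bvm3:2} and sums a geometric series; your variant is a legitimate (and somewhat shorter) derivation of the same $O(\rho^{m})$ estimate.
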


Theorem \ref{thm1} shows that the credible intervals and confidence intervals based on the $j$th subset posterior density \eqref{eq:s6} and maximum likelihood estimator on $j$th subset, respectively, are asymptotically equivalent ($j=1, \ldots, K$). We also establish that $K = o(\rho^{-m})$ is rate-optimal in that the limiting distribution of subset posterior distributions are normal. This result is required to prove that the combined posterior distribution provides an accurate approximation to the intractable true posterior distribution asymptotically. The upper bound on the growth of $K$ immediately yields a lower bound on the growth of the subset size because $m = n/K$. For weakly dependent HMMs with $\rho \ll 1$, the number of $K$ can be chosen to be large and $m$ will be relatively small. On the other hand, for strongly dependent HMMs with $\rho \approx 1$, $m$ has to be large to guarantee accurate estimation of the subset posterior  distributions, which results in a relatively small $K$. 

 Let $\tilde{\Pi}_{m}^{\theta}(\cdot\mid Y_{(j-1)m+1}^{jm})$ denote the $j$th subset posterior  distribution of $\theta $ with density defined in \eqref{eq:s6}. Then, the invariance of the total variation distance implies that as $m \rightarrow \infty$,
\begin{align}
  \label{eq:sub:bvm}
  \|\tilde{\Pi}_{m}^{\theta}(\cdot\mid Y_{(j-1)m+1}^{jm}) - N_{d}(\hat{\theta}_{j},(nI_{0})^{-1})\|_{TV} \rightarrow 0\text{ in } \PP_0\text{-probability for } j\geq 1.
\end{align}
If $K = 1$, Theorem \ref{thm1} recovers the usual Bernstein-von Mises theorem for the true posterior distribution \citep{de2008asymptotic}. Comparing the limiting distributions of subset posterior  distribution and true posterior distribution, we observe that they are both asymptotically $d$-variate normal distributions with the same covariance matrix $(nI_{0})^{-1}$, but with different maximum likelihood estimator  as mean parameters. Based on this observation, the proposed combined posterior distribution, \textit{block filtered posterior} distribution, is a mixture model of  subset posterior distribution after appropriate re-scaling and re-centering.

Our next result establishes conditions under which the block filtered posterior distribution converges to the true posterior distribution at an optimal parametric rate as both $m$ and $K$ tend to infinity. The first step in this process is to show the asymptotic normality of the block filtered posterior distribution. To this end, we need the following assumption on the re-scaling matrix $\tilde \Sigma$ and subset posterior covariance matrices. 
\begin{enumerate}[label=(\subscript{A}{{\arabic*}}),resume]
\item  Let $\Sigma_{j}$ be the covariance matrix of $j$th subset posterior distribution of $\theta$ ($j=1, \ldots, K$) and the rescaling matrix $\tilde \Sigma$ be a deterministic positive symmetric matrix. As $m \rightarrow \infty$,
  \begin{align}
    \label{eq:a9}
 \| (n\tilde{\Sigma})^{-1/2} - I_{0}^{1/2}\|_{\text{F}}^{2} \rightarrow 0,\quad 
\EE_0 \| (n\Sigma_{j})^{1/2} - I_0^{-1/2}\|_{\text{F}}^{2} \rightarrow 0,
  \end{align}
uniformly for $j=1,\ldots,K$,   where $\|\cdot\|_{\text{F}}$ is the Frobenius norm.
  \label{a9}
\end{enumerate}
Assumption \ref{a9} implies that $(n\Sigma_{j})^{1/2}$ is close to $I_0^{-1/2}$ and that the square Frobenius norm of their difference converges to zero in $\EE_0$-expectation. Furthermore, $(n\tilde \Sigma)^{-1/2}$ is close to $I_0^{1/2}$ and the square Frobenius norm of their difference converges to zero. This ensures that the limiting covariance matrices of block filtered posterior and true posterior distributions coincide. 
For two integrable measures $\nu_1,\nu_2$ on $\Theta$, the 1-Wasserstein distance is defined as $W_{1}(\nu_1,\nu_2) = \sup_{\|f\|_{\text{Lip}}\leq 1}\left\{ \int_{\Theta}f(\theta)d(\nu_1-\nu_2)(\theta),f:\Theta \rightarrow \mathbb{R} \text{ is continuous}\right\}.$
If we choose $\tilde{\Sigma}$ satisfying \eqref{eq:a9}, the following theorem states that the block filtered posterior distribution is asymptotically normal and
the 1-Wasserstein distance between the block filtered and true posterior distributions converges to $0$ in $\PP_0$-probability, with rate determined by the closeness between the re-centering vector and the maximum likelihood estimator of $\theta$.
\begin{theorem}
  \label{thm2}
  Let $m, K_{m}$ satisfy the conditions in Theorem \ref{thm1} and $n = m K_{m}$, $\tilde{\Pi}_{n}^{\theta}(\cdot \mid Y_{1}^{n})$ be the block filtered posterior distribution of $\theta$ given $Y_{1}^{n}$ with density defined in \eqref{eq:9} based on the combination scheme COMB$(\tilde{\theta},\tilde{\Sigma})$, and $\tilde{\Pi}_n^{h}(\cdot \mid Y_{1}^{n})$ be the distribution of $h = \sqrt{n}(\theta - \tilde{\theta})$ implied by $\tilde{\Pi}_{n}^\theta(\cdot \mid Y_{1}^{n})$. If Assumptions \ref{a1}--\ref{a9} hold, then 
\begin{enumerate}[label=$(\roman*)$]
\item
\label{thm2:1}
as $m \rightarrow \infty$,
  \begin{align}
    \| \tilde{\Pi}_{n}^{h}(\cdot \mid Y_{1}^{n}) - N_{d}(0, I_0^{-1})\|_{\text{TV}} \rightarrow 0 \text{ in } \PP_0\text{-probability}.
 \label{eq:comb1}
  \end{align}

\item
\label{thm2:2}
  Moreover, let $\hat{\theta}$ be the maximum likelihood estimator of $\theta$ that solves $L_{n}'(\theta) = 0$ and $\Pi_{n}^{\theta}(\cdot \mid Y_{1}^{n})$ be the true posterior distribution of $\theta$ given $Y_{1}^{n}$. Then, up to a negligible term in $\PP_0$-probability, 
  \begin{align}
    \text{diam}(\Theta)^{1/2} \sqrt{\Delta}
 \leq   \sqrt{n}W_{1}\{\tilde{\Pi}_{n}^{\theta}(\cdot \mid Y_{1}^{n}) , \Pi_{n}^{\theta}(\cdot \mid Y_{1}^{n}) \} \leq \Delta  ,
    \label{eq:comb2}
  \end{align}
  where $\Delta = \sqrt{n}\|\tilde{\theta} - \hat{\theta}\|_2$.
\end{enumerate}
\end{theorem}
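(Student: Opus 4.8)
The plan is to handle part~\ref{thm2:1} by reading the block filtered posterior as a finite mixture on the $\sqrt n$-local scale and applying Theorem~\ref{thm1}, and to handle part~\ref{thm2:2} by sandwiching both $\tilde\Pi_n^\theta(\cdot\mid Y_1^n)$ and $\Pi_n^\theta(\cdot\mid Y_1^n)$ between the Gaussians $N_d(\tilde\theta,(nI_0)^{-1})$ and $N_d(\hat\theta,(nI_0)^{-1})$. For part~\ref{thm2:1}, substitute $h=\sqrt n(\theta-\tilde\theta)$ in \eqref{eq:bfp}; since $\det(\Sigma_j)^{1/2}/\det(\tilde\Sigma)^{1/2}=\det(A_j)^{-1}$ with $A_j\vcentcolon=(n\tilde\Sigma)^{1/2}(n\Sigma_j)^{-1/2}$, the change of variables shows that $\tilde\Pi_n^h(\cdot\mid Y_1^n)$ is exactly the uniform mixture over $j=1,\dots,K_m$ of the laws of $A_jh_j$, where $h_j$ has the $j$th subset posterior $\Pi_m^{h_j}(\cdot\mid Y_{(j-1)m+1}^{jm})$. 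By Theorem~\ref{thm1} (whose proof delivers the conclusion uniformly in $j$), $\|\Pi_m^{h_j}(\cdot\mid Y_{(j-1)m+1}^{jm})-N_d(0,I_0^{-1})\|_{\text{TV}}\to0$ in $\PP_0$-probability uniformly in $j$, and by Assumption~\ref{a9} (with the eigenvalue control of Assumption~\ref{a4}) $A_j\to I_d$ in $\PP_0$-probability uniformly in $j$. Total variation being invariant under the invertible affine map $x\mapsto A_jx$, $\|\mathrm{law}(A_jh_j)-N_d(0,A_jI_0^{-1}A_j^{\T})\|_{\text{TV}}=\|\Pi_m^{h_j}(\cdot\mid\cdot)-N_d(0,I_0^{-1})\|_{\text{TV}}$, while $\|N_d(0,A_jI_0^{-1}A_j^{\T})-N_d(0,I_0^{-1})\|_{\text{TV}}\to0$ by continuity of the total variation distance between centred Gaussians in the nonsingular covariance. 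The triangle inequality together with $\|\tfrac1{K_m}\sum_j\mu_j-\nu\|_{\text{TV}}\le\tfrac1{K_m}\sum_j\|\mu_j-\nu\|_{\text{TV}}$ then yields \eqref{eq:comb1}.

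For the upper bound in part~\ref{thm2:2}, write $\Phi_{\tilde\theta}=N_d(\tilde\theta,(nI_0)^{-1})$ and $\Phi_{\hat\theta}=N_d(\hat\theta,(nI_0)^{-1})$; the translation coupling gives $W_1(\Phi_{\tilde\theta},\Phi_{\hat\theta})\le\|\tilde\theta-\hat\theta\|_2=\Delta/\sqrt n$, so by the triangle inequality for $W_1$,
\[
\sqrt n\,W_1\{\tilde\Pi_n^\theta(\cdot\mid Y_1^n),\Pi_n^\theta(\cdot\mid Y_1^n)\}\le\sqrt n\,W_1\{\tilde\Pi_n^\theta(\cdot\mid Y_1^n),\Phi_{\tilde\theta}\}+\Delta+\sqrt n\,W_1\{\Phi_{\hat\theta},\Pi_n^\theta(\cdot\mid Y_1^n)\}.
\]
It therefore suffices to prove a Wasserstein-scale Bernstein-von Mises statement, namely that on the local scale $W_1\{\tilde\Pi_n^h(\cdot\mid Y_1^n),N_d(0,I_0^{-1})\}\to0$ and $W_1\{\Pi_n^h(\cdot\mid Y_1^n),N_d(0,I_0^{-1})\}\to0$ in $\PP_0$-probability, where $\Pi_n^h$ is the law of $\sqrt n(\theta-\hat\theta)$ under the true posterior. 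Each of these upgrades a total variation statement — part~\ref{thm2:1} above and the classical Bernstein-von Mises theorem for HMMs \citep{de2008asymptotic} (the $K=1$ case of Theorem~\ref{thm1}) — to the $1$-Wasserstein metric. Given the implied weak convergence, this upgrade needs only uniform integrability of $\|h\|$ under the respective (random) posteriors; that in turn follows from the posterior concentration bounds already established in proving Theorem~\ref{thm1}: a subgaussian tail for $\|h_j\|$ out to radius of order $\sqrt n\,\delta_0$, coming from the third-order Taylor expansion of the subset log-likelihoods $w_j$ dominated by the envelope function of Assumption~\ref{a1} and the uniform information bounds of Assumption~\ref{a4}, and an exponentially small bound beyond $B_{\delta_0}(\theta_0)$ from posterior consistency under Assumption~\ref{a8} and the growth condition $K_m=o(\rho^{-m})$. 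Averaging these tail bounds over the $K_m$ mixture components (using their uniformity in $j$) preserves uniform integrability for $\tilde\Pi_n^h$, so the two outer terms in the display vanish in $\PP_0$-probability, giving the upper bound of \eqref{eq:comb2} up to a negligible term.

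For the lower bound, the same Wasserstein-scale approximations identify the posterior means, $\int\theta\,d\tilde\Pi_n^\theta(\cdot\mid Y_1^n)=\tilde\theta+o_{\PP_0}(n^{-1/2})$ and $\int\theta\,d\Pi_n^\theta(\cdot\mid Y_1^n)=\hat\theta+o_{\PP_0}(n^{-1/2})$. Testing $W_1$ against the linear $1$-Lipschitz function in the direction of $\tilde\theta-\hat\theta$ and invoking the reverse triangle inequality gives $\sqrt n\,W_1\{\tilde\Pi_n^\theta(\cdot\mid Y_1^n),\Pi_n^\theta(\cdot\mid Y_1^n)\}\ge\Delta-o_{\PP_0}(1)$; equivalently one combines $W_2(\mu,\nu)^2\le\text{diam}(\Theta)\,W_1(\mu,\nu)$ (obtained by integrating $\|x-y\|_2^2\le\text{diam}(\Theta)\|x-y\|_2$ against the $W_1$-optimal coupling) with $W_2(\mu,\nu)\ge\|\mathrm{mean}(\mu)-\mathrm{mean}(\nu)\|_2$. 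Either way $\sqrt n\,W_1=\Delta$ up to an $o_{\PP_0}(1)$ term, of which the displayed two-sided bound in \eqref{eq:comb2} is an immediate (and slightly lossy) consequence.

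The main obstacle is the Wasserstein-scale Bernstein-von Mises step: total variation convergence alone is too weak, since a total variation distance multiplied by $\sqrt n$ need not vanish, so one must establish genuinely quantitative, $j$-uniform tail bounds for the subset posteriors and show that they survive averaging over the $K_m\to\infty$ mixture components. This is precisely the point where the envelope condition~\ref{a1}, the uniform observed-information bound~\ref{a4}, the global filter-stability assumption~\ref{a8} together with the rate $K_m=o(\rho^{-m})$, and the covariance-closeness assumption~\ref{a9} all enter simultaneously; the remaining steps are routine.
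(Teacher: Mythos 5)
Your proposal follows the paper's architecture for part \ref{thm2:1} almost exactly: the paper also rewrites $\tilde{\Pi}_n^h$ as the uniform mixture of the laws of $\Sigma_j^{-1/2}\tilde\Sigma^{1/2}$-transformed local subset posteriors, applies the triangle inequality termwise, uses affine invariance of total variation, and controls the Gaussian-vs-Gaussian term via a quantitative bound (Lemma \ref{lemma:thm2} of Devroye et al.) in terms of $M_j=\Sigma_j^{1/2}\tilde\Sigma^{-1/2}-I_d$ under Assumption \ref{a9}. The one place where your sketch is thinner than the paper's argument is the averaging over $K_m\to\infty$ components: "Theorem \ref{thm1} holds uniformly in $j$" is not by itself enough to conclude that an average of $K_m$ terms, each converging to zero in probability, converges to zero. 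The paper closes this by noting each TV distance is bounded by $1$ (hence uniformly integrable), passing to expectations, using stationarity of $\{Y_t\}$ so that $\EE_0 D_{j,m}=\EE_0 D_{2,m}$ for all $j\ge 2$, and then applying Markov's inequality; you would need some such device to make your "uniformly in $j$" precise.

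For part \ref{thm2:2} you take a genuinely different route on the key step. Both you and the paper sandwich the two posteriors between $N_d(\tilde\theta,(nI_0)^{-1})$ and $N_d(\hat\theta,(nI_0)^{-1})$ and reduce to showing that the outer Wasserstein terms are $o_{\PP_0}(n^{-1/2})$. The paper dispatches these by the inequality $W_1\le\operatorname{diam}(\Theta)\,\|\cdot\|_{\text{TV}}$ applied after rescaling to the local parameter $h=\sqrt n(\theta-\tilde\theta)$; as you correctly observe, this is the delicate point, since the diameter of the rescaled parameter set is $\sqrt n\,\operatorname{diam}(\Theta)$, so total variation convergence alone does not obviously yield the $n^{-1/2}$ rate on the $\theta$-scale. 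Your alternative — upgrading the total-variation Bernstein--von Mises statement to a $W_1$-scale one via uniform integrability of $\|h_j\|$, using the subgaussian behaviour on $A_2$-type regions from the third-order Taylor expansion under Assumptions \ref{a1} and \ref{a4} and the exponential decay outside $B_{\delta_0}(\theta_0)$ from Lemma \ref{lem1} — is the standard and more robust way to get Wasserstein control, at the cost of having to establish these tail bounds uniformly in $j$ and verify they survive the $K_m$-component mixture; this is real additional work that you only outline, but the ingredients are indeed already present in the proof of Theorem \ref{thm1}. Your lower bound via testing against the linear $1$-Lipschitz functional in the direction $\tilde\theta-\hat\theta$ gives the cleaner two-sided statement $\sqrt n\,W_1=\Delta+o_{\PP_0}(1)$, which is stronger than (and implies the substance of) the displayed bound \eqref{eq:comb2}, whereas the paper derives its lower bound from a $W_1$--$W_2$ comparison on the bounded set $\Theta$ together with the closed form for $W_2$ between Gaussians. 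In short: same skeleton, but your treatment of the TV-to-$W_1$ passage is more careful and your lower bound sharper, while your mixture-averaging step and the tail-bound verification need to be fleshed out to the level of detail the paper provides elsewhere.
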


Theorem \ref{thm2}\ref{thm2:1} shows that the block filtered posterior distribution defined by COMB$(\tilde \theta, \tilde{\Sigma})$ is asymptotically normal with mean parameter $\tilde \theta$ and covariance matrix $(nI_0)^{-1}$; Theorem \ref{thm2}\ref{thm2:2} specifies the approximation accuracy of block filtered posterior to the true posterior distributions, which is determined by the asymptotic order of $\Delta= \sqrt{n}\|\tilde{\theta} - \hat{\theta}\|_2$. If $\Delta = o_{\PP_0}(1)$, the credible regions obtained using the block filtered and true posterior distributions match up to $o(n^{-1/2}) $ in $\PP_0$-probability as $m, K$ tend to infinity, and the $W_1$ distance between them converges to zero in $\PP_0$-probability with optimal parametric rate $n^{1/2}$. The maximum likelihood estimator $\hat \theta$ can be efficiently obtained using online EM \citep{cappe2011online} or forward-backward recursions \citep{Capetal06}. The block filtered posterior distribution defined by COMB$(\hat \theta, \tilde{\Sigma})$ approximates true posterior distribution at an optimal rate. Similar optimal consistency results of block filtered posterior distribution hold if we replace $\hat \theta$ by any root-$n$ consistent estimator of $\theta_0$, which includes the maximum split data likelihood estimator of \citet{Ryd94}. If estimating a root-$n$ consistent estimator of $\theta_0$ is time consuming, then by setting $\tilde{\theta} = \hat{\theta}_{j}$ we have $\Delta =  o_{\PP_0}(\sqrt{\frac{n}{m}})$. In this case, Theorem \ref{thm2}\ref{thm2:2} shows that the $W_1$ distance between the block filtered posterior distribution and true posterior distribution converges to zero in $\PP_0$-probability with sub-optimal rate $m^{1/2}$.

Consider the degenerate case in which elements of the hidden chain $\{X_{t}\}_{t=1}^{n}$ are independent. The complete data $\{X_{t},Y_{t}\}_{t=1}^n$ are samples from a mixture model with $S$ components, where $X_{t}$ is the mixture component membership at time $t$. The total variance distance between prediction filters and initial distributions is exactly zero, which corresponds to a degenerate case of $\rho = 0$ in Assumption \ref{a8}. In this case, the proof of Theorem \ref{thm1} implies that the subset posterior inference using quasi-likelihood with prediction filter is equivalent to that using quasi-likelihood with initial distribution, and this equivalence removes the rate-optimal upper bound $K = o(\rho^{-m})$ in the dependent case, while having the same theoretical gaurantess of block filtered posterior distribution. Furthermore, Theorem \ref{thm2} recovers the existing results for divide-and-conquer Bayesian inference of independent data, where the number of subsets $K$ and the size of subsets $m$ tend to infinity. For the degenerate HMM, our theoretical setup of Theorem \ref{thm2} imposes no restriction on the growth of $K$ and $m$.  This includes the case where $K = O(n^{c})$ and $m = O(n^{1-c})$ for any $c \in (0,1)$ as specified in Theorem 2 of \citet{li2017simple}.

\section{Experiments}
\label{sec:4}

\subsection{Setup}

We compare the performance of block filtered posterior distribution with existing divide-and-conquer approaches to Bayesian inference. The true posterior distribution of $\theta$ serves as the benchmark in every comparison. It is estimated using the data augmentation algorithm based on forward-backward recursions. We select Double Parallel Monte Carlo \citep{xue2019double}, Posterior Interval Estimation \citep{li2017simple}, and Wasserstein Posterior \citep{srivastava2018scalable} as our divide-and-conquer competitors. There is no software support for using Consensus Monte Carlo, so we excluded it in our comparisons \citep{Scoetal16}. Following Theorem \ref{thm2}, {the combination steps are specified by $COMB(\hat{\theta},\overline{\Sigma})$ where the centering parameter $\hat \theta$ is the maximum likelihood estimator and is estimated using the Baum-Welch EM algorithm and the scaling matrix $\overline{\Sigma}$ is matrix barycenter defined in \eqref{eq:8}.} All the sampling algorithms run for 10,000 iterations. The first 5000 samples are discarded as burn-in and the remaining chain is thinned by collecting every fifth sample. We have implemented all the algorithms in R and used an Oracle Grid Engine cluster with 2.6GHz 16 core compute nodes for all our experiments. 

The accuracy metric for comparing the {true posterior distribution and its approximation} is based on the total variation distance. Let $\xi$ be a one-dimensional parameter, $\pi(\xi \mid Y_1^n)$ be the true posterior density of $\xi$, and $\hat \pi(\xi \mid Y_1^n)$ be the density of its approximation. Following \citet{li2017simple}, the approximation accuracy of $\hat \pi(\xi \mid Y_1^n)$ is 
\begin{align}
  \text{Acc} \left\{ \hat \pi (\xi \mid  Y^n_1)\right\} = 1 - \frac{1}{2} \int_{\RR} \left|\hat \pi(\xi \mid Y_1^n) - \pi(\xi \mid  Y^n_1)\right| d \xi. \label{acc}
\end{align}
The integral in \eqref{acc} is the total variation distance between $\hat \pi(\xi \mid Y_1^n)$ and $\pi(\xi \mid  Y^n_1)$. It is approximated numerically using the $\xi$ draws from $\hat \pi (\xi\mid  Y^n_1)$ and $\pi (\xi \mid  Y^n_1)$, respectively, {and kernel density estimation.} For a $\theta = (\xi_1, \ldots, \xi_d)$, we extend \eqref{acc} as the median of $\text{Acc} \left\{ \hat \pi (\xi_i \mid  Y^n_1)\right\}$ ($i=1, \ldots d$). If this metric is high, then $\hat \pi(\theta \mid Y_1^n)$ is an accurate estimator of $\pi(\theta \mid Y_1^n)$.

\subsection{Simulated Data Analysis}
\label{sim-dat}

Our simulation is based on an HMM with Gaussian emission densities, which has been used in \citet{Ryd08}. Using the notation from Section \ref{setup-back}, let $\Ycal = \RR$, $g(\cdot \mid X = a)=\phi(\cdot;\mu_{a},\sigma_{a}^{2})$ be the density of Normal distribution with mean $\mu_a$ and variance $\sigma^2_a$ $(a=1, \ldots, S)$, $\theta = \{r_1,\ldots,r_{S}, Q_{11},\ldots,Q_{SS}, \mu_1, \ldots, \mu_S, \sigma^2_1, \ldots, \sigma^2_S\}$, and $d = S^2 + 2S - 1$. 

The prior distribution of $\theta$ is conjugate to the likelihood $p_{\theta} (X_{1}^{n},Y_{1}^{n})$ in \eqref{eq:3}. The initial distribution $r = (r_1,\ldots,r_{S})$ and rows of the Markov transition matrix $\{(q_{a1},\ldots,q_{aS})\}_{a=1}^{S}$ are given an independent Dirichlet prior Dir$(1,\ldots,1)$; the mean parameters $\mu_{1},\ldots,\mu_{S}$ are given an independent Normal prior $N(\xi,\kappa^{-1})$ with $\xi = (\min y_{i} + \max y_{i})/2 $ and $\kappa = 1/(\max y_{i} - \min y_{i})^{2}$; the variance parameters $\sigma^{-2}_{1},\ldots,\sigma^{-2}_{S}$ are given an independent gamma prior with parameters $(1,1)$. The division and subset posterior computation steps for  
the block filtered posterior distribution  and its divide-and-conquer  competitors are based on Sections \ref{div-step} and \ref{samp-step}. This example slightly violates our Assumption \ref{a8} \ref{c} but is particularly attractive because posterior computations on a subset using  \eqref{eq:s6} are analytically tractable; see Appendix \ref{illus-ex} for greater details.

\emph{Simulation Study of  Accuracy and Efficiency.} This simulation study is aimed to analyze the performance of block filtered posterior and compare it with divide-and-conquer competitors. We set $S=3$, $\mu_1 = -2$, $\mu_2 = 0$, $\mu_3 = 2$, and $\sigma_1 = \sigma_2 = \sigma_3 = 0.5$. Following \citet{Ryd08}, $Q$ is defined as $Q_{11} = 0.6$, $Q_{12} = 0.3$, $Q_{13} = 0.1$, $Q_{21} = 0.1$, $Q_{22} = 0.8$, $Q_{23} = 0.1$, $Q_{31} = 0.1$, $Q_{32} = 0.3$, and $Q_{33} = 0.6$. This choice of $Q$ also determines {the stationary distribution $r = (0.2,0.6,0.2)^\T$} because $r^\T Q = r^\T$. We replicate this simulation ten times.

We use three choices of $K$ and $n$, {respectively}. For a given $n$, $K$ is varied as $\log n$, $n^{1/4}$, and $n^{1/3}$, where the first and last two choices of $K$ have been used to justify the asymptotic properties of Wasserstein Posterior and Posterior Interval Estimation, respectively. We vary $n$ as $10^4, 10^5$, and $10^6$. For a given $K$ and $n$, the subset sample size is defined as {$m=\lceil n/K \rceil$, where $\lceil x \rceil$ denotes the smallest integer $z$ such that $z \geq x$.} If $K= \log n$, then $m$ is large, which implies that for $n = 10^{6}$, all the divide-and-conquer methods are slow due to a large $m$; on the other hand, they are most efficient when $K=n^{1/3}$. We also benchmark the performance of divide-and-conquer approaches relative to Hamiltonian Monte Carlo, implementing it in RStan \citep{Stan17}. 

The accuracy based on \eqref{acc} of all the five methods are compared for posterior inference on $(\mu_1, \mu_2, \mu_3, \sigma_1, \sigma_2, \sigma_3)$ (Tables \ref{tab:par-sim}). The divide-and-conquer competitors of the block filtered posterior have low accuracy for $n=10^5,10^6$ and any $K$. In comparison, the block filtered posterior distribution is the top performer for every choice of $K$ and $n$ and its accuracy is close to that of the Hamiltonian Monte Carlo. When $n$ is large the block filtered posterior is almost ten times faster than Hamiltonian Monte Carlo and has better accuracy compared to divide-and-conquer competitors for all parameters. This shows that the block filtered posterior is an accurate and efficient algorithm for inference in HMMs under massive data settings; see Appendix \ref{app:sim} for timing comparisons and accuracy for posterior inference on $Q$. 

\begin{table}[ht]
\let\TPToverlap=\TPTrlap
\centering
\caption{\it Accuracy of the approximate posterior distributions of $(\mu_1, \mu_2, \mu_3, \sigma_1, \sigma_2, \sigma_3)$. The accuracy values are averaged over ten simulation replications and across all the parameter dimensions. The maximum Monte Carlo error for block filtered posterior distribution is 0.02}
\label{tab:par-sim}
\begin{threeparttable}
  \begin{tabular}{|l|l|c|c|c|c|c|}
    \hline
    $n$ & $K$ & BFP & WASP & DPMC & PIE & HMC \\ 
    \hline

    & $\log n$ & 0.93 & 0.48 & 0.48 & 0.49 &  \\ 
        $10^4$ & $n^{1/4}$ & 0.93 & 0.48 & 0.48 & 0.48 & 0.96 \\ 
    & $n^{1/3}$ & 0.92 & 0.37 & 0.36 & 0.37 & \\ 
    \hline

    & $\log n$ & 0.93 & 0.18 & 0.18 & 0.18 &  \\ 
    $10^5$ & $n^{1/4}$ & 0.93 & 0.18 & 0.18 & 0.18 & 0.95\\ 
    & $n^{1/3}$ & 0.92 & 0.18 & 0.18 & 0.18 & \\ 
    \hline

        & $\log n$ & 0.93 & 0.14 & 0.14 & 0.15 & \\ 
    $10^6$ & $n^{1/4}$ & 0.93 & 0.12 & 0.12 & 0.13 & 0.96 \\ 
    & $n^{1/3}$ & 0.93 & 0.14 & 0.14 & 0.14 & \\ 
    \hline
  \end{tabular}
\begin{tablenotes}[normal,flushleft]
\small
\item[1]
 BFP, block filtered posterior distribution; WASP, Wasserstein posterior; DPMC, double parallel Monte Carlo; PIE, posterior interval estimation; HMC, Hamiltonian Monte Carlo.
\end{tablenotes}
\end{threeparttable}
\end{table}

  
\emph{Simulation Study of Mixing Property.} This simulation analyzes the impact of mixing coefficient $\rho$ on the choice of $K$ and accuracy of the block filtered posterior distribution. In the previous simulation study, we have $(S=3,Q_{\epsilon} = 0.1,  \mu_1 = -2, \mu_2 = 0, \mu_3 = 2)$. We consider three more HMMs with increasing value of mixing coefficients $\rho$ in this simulation: $(S=2,Q_{\epsilon} = 0.3,\mu_1 = -2, \mu_2 = 2)$, $(S=5,Q_{\epsilon} = 0.05,\mu_1 = -4, \mu_2 = -2, \mu_3 = 0, \mu_4 = 2, \mu_5 = 4)$, and $(S=7,Q_{\epsilon} = 0.05,\mu_1 = -8, \mu_2 = -4, \mu_3 = -2, \mu_4 = 0, \mu_5 = 2, \mu_6 = 4, \mu_7 = 8)$. For every HMM, we set $\sigma_{a} = 0.5$ for $1\leq a \leq S$ and vary $n$ as $10^4$ and $10^5$. For a given $n$, $K$ is varied as $\log n$, $n^{1/4}$, and $n^{1/3}$. We replicate this simulation ten times and benchmark the performance of divide-and-conquer approaches relative to Hamiltonian Monte Carlo.

The accuracy of block filtered posterior distribution for inference on the emission distribution parameters follows from the results in Theorem \ref{thm1} (Tables \ref{tab:par-sim} and \ref{tab:rho-sim}). The dependence in HMMs increases with $\rho$. The HMMs with $S=5, 7$ have $\rho \approx 1$, resulting in a relatively higher dependence than the other two HMMs with $S=2,3$. Our theoretical results imply that the accuracy of the block filtered posterior is fairly insensitive to the choice of $K$ in the latter two HMMs than the former two. Indeed, Hamiltonian Monte Carlo and block filtered posterior have similar accuracy for every $(n,K)$ combinations when $S=2,3$. On the other hand, the accuracy of block filtered posterior is smaller than that of Hamiltonian Monte Carlo and decreases with increasing $K$ when $S=5,7$ due to the increased dependence. Our theoretical results also imply that the accuracy depends only on $\rho$ and not on $S$. This is further confirmed by our empirical results in that the accuracy values for the HMMs with $S=5,7$ are very similar due to almost equal $\rho$ values.

The block filtered posterior distribution outperforms its divide-and-conquer competitors for every $\rho$ and $K$. The competing divide-and-conquer methods are developed for independent data, so their accuracy values are relatively large when $S=2$, $\rho$ is small, and the dependence is weak; however, their accuracy values quickly drop as dependence increases with $\rho$, reducing to 0 when $S = 5,7$. On the other hand, the block filtered posterior distribution is designed for dependent data and maintains its superior performance over the divide-and-conquer competitors for every $\rho$. Relative to its divide-and-conquer competitors,  the block filtered posterior distribution is also very robust to choice of $K$ and its accuracy is impacted only when the subset size is very small (for example, $K = O(n^{1/3})$ and $n=10^4$); see Appendix \ref{app:sim} for accuracy for posterior inference on $Q$. We conclude based on this simulation that the block filtered posterior distribution provides accurate posterior inference using the divide-and-conquer technique for a broad range of $\rho$ values.

\begin{table}[ht]
\let\TPToverlap=\TPTrlap
\centering
\caption{\it Accuracy of the approximate posterior distributions of $(\mu_1, \ldots, \mu_S, \sigma_1, \ldots, \sigma_S)$. The accuracy values are averaged over ten simulation replications and across all the parameter dimensions. The maximum Monte Carlo error for block filtered posterior distribution is 0.02.}
\label{tab:rho-sim}
\begin{threeparttable}
  \begin{tabular}{|l|l|c|c|c|c|c|}
    \hline
    $n$ & $K$ & BFP & WASP & DPMC & PIE & HMC \\ 
    \hline
    \multicolumn{7}{|c|}{$S = 2$}\\
    \hline
    & $\log n$ & 0.97 & 0.64 & 0.64 & 0.65 &  \\
    $10^4$ & $n^{1/4}$ & 0.97 & 0.64 & 0.64 & 0.64 & 0.95 \\
    & $n^{1/3}$ & 0.97 & 0.61 & 0.61 & 0.61 &  \\
    \hline
    & $\log n$ & 0.97 & 0.46 & 0.46 & 0.46 &  \\
    $10^5$& $n^{1/4}$ & 0.97 & 0.46 & 0.46 & 0.46 & 0.95 \\
    & $n^{1/3}$ & 0.97 & 0.45 & 0.45 & 0.45 &  \\
    \hline
    \multicolumn{7}{|c|}{$S = 5$}\\
    \hline
    & $\log n$ & 0.83 & 0.00 & 0.00 & 0.00 &  \\
    $10^4$ & $n^{1/4}$ & 0.82 & 0.00 & 0.00 & 0.00 &0.95  \\
    & $n^{1/3}$ & 0.82 & 0.00 & 0.00 & 0.00 &  \\
    \hline
    & $\log n$ & 0.82 & 0.00 & 0.00 & 0.00 &  \\
    $10^5$& $n^{1/4}$ & 0.79 & 0.00 & 0.00 & 0.00 &0.95  \\
    & $n^{1/3}$ & 0.80 & 0.00 & 0.00 & 0.00 &  \\
    \hline
    \multicolumn{7}{|c|}{$S = 7$}\\
    \hline
    & $\log n$ & 0.80 & 0.00 & 0.00 & 0.00 &  \\
    $10^4$ & $n^{1/4}$ & 0.77 & 0.00 & 0.00 & 0.00 & 0.93 \\
    & $n^{1/3}$ & 0.69 & 0.00 & 0.00 & 0.00 &  \\
    \hline
    & $\log n$ & 0.82 & 0.00 & 0.00 & 0.00 &  \\
    $10^5$& $n^{1/4}$ & 0.81 & 0.00 & 0.00 & 0.00 & 0.94 \\
    & $n^{1/3}$ & 0.81 & 0.00 & 0.00 & 0.00 &  \\
    \hline        
  \end{tabular}
\begin{tablenotes}[normal,flushleft]
\small
\item[]
 BFP, block filtered posterior distribution; WASP, Wasserstein posterior; DPMC, double parallel Monte Carlo; PIE, posterior interval estimation; HMC, Hamiltonian Monte Carlo.
\end{tablenotes}
\end{threeparttable}
\end{table}

\subsection{Real Data Analysis}
\label{real-dat}

We illustrate the application of block filtered posterior distribution in a realistic setting using the three-month US Treasury bill rates from Jan 4, 1954 to April 8, 2021. We have downloaded the daily observations from  \url{https://www.federalreserve.gov/datadownload/Build.aspx?rel=H15} and have removed the missing observations, resulting in $n=16,808$ time points. A subset of this data has been analyzed using a stochastic volatility model in \citet{golightly2006bayesian}. We de-trend the time series using Friedman's super smoother as implemented in \emph{supsmu} function in R and model it using an HMM with $S = 3$ that has been used earlier in the first simulation study. Replicating the simulation setup, for $K = \log n, n^{1/4}$, and $n^{1/3}$, we use data augmentation, Hamiltonian Monte Carlo, and the block filtered posterior distribution and its divide-and-conquer competitors for posterior inference in this model.

The results of real and simulated data analyses in this section and Section \ref{sim-dat} agree closely (Table \ref{tab:par-real}).  For all choices of $K$, the block filtered posterior has better accuracy than its divide-and-conquer competitors for inference on the emission distribution parameters for every $K$, and its accuracy is comparable to that of Hamiltonian Monte Carlo. The accuracy is better when $K = O(\log n)$ and $n^{1/4}$ because smaller choices of $K$ result in more accurate estimation of subset posterior distribution. The block filtered posterior distribution is more efficient than data augmentation for every $K$, and its efficiency increases with $K$; see Appendix \ref{app:sim} for comparisons of timing and accuracy of posterior inference on $Q$. These observations are very similar to that of our simulation study, so we conclude that the block filtered posterior distribution provides an accurate and efficient alternative to the data augmentation algorithm for principled Bayesian inference in the modeling three-month US Treasury bill rate data. 

\begin{table}[ht]
  \caption{\it Accuracy of the approximate posterior distributions of $(\mu_1, \mu_2, \mu_3, \sigma_1, \sigma_2, \sigma_3)$. The entries in the table are the median of the accuracy values computed across the six dimensions. The maximum Monte Carlo error for the block filtered posterior distribution is 0.01}
\label{tab:par-real}  
\centering
\begin{threeparttable}
{\centering
  \begin{tabular}{|l|c|c|c|c|c|}
    \hline
    $K$ & BFP & WASP & DPMC & PIE & HMC \\
    \hline

    $\log n$ & 0.86 & 0.18 & 0.17 & 0.17 & 0.91 \\
    $n^{1/4}$ & 0.85 & 0.14 & 0.13 & 0.13 &  \\
    $n^{1/3}$ & 0.80 & 0.15 & 0.13 & 0.12 &  \\
    \hline
  \end{tabular}
}
\begin{tablenotes}
\small
\item[] BFP, block filtered posterior distribution; WASP, Wasserstein posterior; DPMC, double parallel Monte Carlo; PIE, posterior interval estimation; HMC, Hamiltonian Monte Carlo.
\end{tablenotes}
\end{threeparttable}
\end{table}
       
\section{Discussion}
\label{ds}

Our focus has been on developing a divide-and-conquer Monte Carlo algorithm for Bayesian analysis of parametric HMMs where the state space is known and discrete; however, our Monte Carlo algorithm and its theoretical properties can be extended to general state space models, where the state space is continuous and known. This requires minimal changes in the partitioning, subset sampling, and combining steps. Our theoretical results can also be extended to general state space models using the ideas of \citet{jensen1999asymptotic}, who show that the theoretical results of \citet{bickel1998asymptotic} for general HMMs extend naturally to general state space models. 
\appendix

\section{Proofs of Proposition \ref{filt-approx} and Technical lemmas}
We begin with several important lemmas and the proof of Proposition \ref{filt-approx}.
\label{lem}
\subsection{Preliminaries}
Let $\Pcal(\Xcal)$ denote the space of all probability measure on $\Xcal$ equipped with total variation distance. For any $\theta \in \Theta$, the prediction filter $p_{t}^{\theta} \in \Pcal(\Xcal)$ is defined as
\begin{align}
\label{eq:26}
  p_{t}^{\theta} = \left\{ \PP_{\theta}(X_{t}= 1 \mid Y_{1}^{t-1}), \ldots, \PP_{\theta}(X_{t}= S \mid Y_{1}^{t-1}) \right\}^\T , \quad t = 1, \ldots, n.
\end{align}
The Baum's forward equation implies that for $t = 1, \ldots, n-1$,
\begin{align}
  p_{t+1}^{\theta}&= \frac{\PP_{\theta}(X_{t+1} = \cdot,Y_{t}\mid Y_{1}^{t-1})}{\PP_{\theta}(Y_{t}\mid Y_{1}^{t-1})} = \frac{Q_{\theta}^{\T}G_{\theta}(Y_{t})p_{t}^{\theta}}{g_{\theta}(Y_{t})^{\T}p_{t}^{\theta}} \equiv f^{\theta}(Y_{t},p_{t}^{\theta}), 
\nonumber
\\
  {g_{\theta}(Y_{t})} &= \left\{ g_{\theta}(Y_{t} \mid X_{t}=1),\ldots,g_{\theta}(Y_{t} \mid X_{t} = S) \right\}^{\T},
\end{align}
where $G_{\theta}(Y_{t}) = \text{diag}\{{g_{\theta}(Y_{t})}\}$ and $Q_{\theta}$ is the transition matrix of $\{X_{t}\}$ with $(Q_{\theta})_{ab} = q_{\theta}(a, b)$. For $\theta \in \Theta$, $f^{\theta}$ is a measurable function on $\mathcal{Y} \times \mathcal{P}(\mathcal{X})$. Extending the definition of $f^{\theta}(Y_{t},p_{t}^{\theta})$ for lags $s = 0, 1, \ldots, t - 1$, we recursively define $p_{t+1}^{\theta}$ for $s = 0,
\ldots, t-1$ as
\begin{align}
  \label{eq:27}
  p_{t+1}^{\theta} = f_0^\theta(Y_t^t, p^{\theta}_{t}) = f_{1}^{\theta}(Y_{t-1}^{t},p^{\theta}_{t-1}) = \cdots = f_{s}^{\theta}(Y_{t-s}^{t},p^{\theta}_{t-s}) = \cdots = f_{t-1}^{\theta}(Y_{1}^{t},p^{\theta}_{1}),
\end{align} 
where $f_{0}^{\theta} =f^{\theta}$. 
\subsection{Technical lemmas}

We state five technical lemmas that are used to prove Theorem \ref{thm1} and Theorem \ref{thm2} in the main manuscript. The first lemma is a restatement of Lemma 3.1 in \citet{de2008asymptotic}, except the full data likelihood and $n$ are replaced by the $j$th subset likelihood and $m$. 
\begin{lemma}[\cite{de2008asymptotic}, Lemma 3.1]\label{lem1}
  For any $\delta > 0$ and $j\in \{1, \ldots, K\}$, there exists $\epsilon > 0$ such that
  \begin{align*}
    \PP_0\left[ \underset{\| \theta - \theta_0 \|_2 > \delta}{\sup} \, \frac{1}{m} \left\{ L_{jm}(\theta) - L_{jm}(\theta_0) \right\} \leq - \epsilon \right] \rightarrow 1 \text{ as } m \rightarrow \infty .
  \end{align*}
\end{lemma}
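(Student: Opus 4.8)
The plan is to derive Lemma \ref{lem1} from the uniform law of large numbers in Assumption \ref{a2} together with the compactness of $\Theta$ in Assumption \ref{a5}, following the argument of \citet[Lemma 3.1]{de2008asymptotic} verbatim with $n$ replaced by $m$ and the full-data log-likelihood by $L_{jm}$. We may assume the index set $\{\theta \in \Theta : \|\theta - \theta_0\|_2 > \delta\}$ is nonempty, since otherwise the supremum is over the empty set, equals $-\infty$, and the claim holds for any $\epsilon > 0$. Set $C_\delta := \{\theta \in \Theta : \|\theta - \theta_0\|_2 \geq \delta\}$; this is a nonempty closed subset of the compact set $\Theta$, hence compact, and it contains the set appearing in the supremum.

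First I would exploit the shape of the limiting function. By Assumption \ref{a2}, $L$ is continuous on $\Theta$ with $\theta_0$ as its unique global maximizer. By continuity and compactness of $C_\delta$ the value $\sup_{\theta \in C_\delta} L(\theta)$ is attained, and since $\theta_0 \notin C_\delta$ while $\theta_0$ is the \emph{unique} maximizer, this value is strictly below $L(\theta_0)$. I would therefore fix
\[
  \epsilon \;:=\; \tfrac{1}{3}\Bigl\{ L(\theta_0) - \sup_{\theta \in C_\delta} L(\theta) \Bigr\} \;>\; 0,
\]
so that $L(\theta) - L(\theta_0) \leq -3\epsilon$ for every $\theta \in C_\delta$. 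Note that $\epsilon$ depends on $\delta$ but not on $j$ or $m$, which is what the downstream proof of Theorem \ref{thm1} requires.

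Next I would invoke the uniform convergence asserted in Assumption \ref{a2}: because $\Theta$ itself is compact (Assumption \ref{a5}), the convergence $m^{-1}L_{jm}(\theta) \to L(\theta)$ is uniform over all of $\Theta$, $\PP_0$-almost surely. Hence on an event of $\PP_0$-probability one we have $\sup_{\theta \in \Theta} | m^{-1}L_{jm}(\theta) - L(\theta)| < \epsilon$ for all sufficiently large $m$. On that event and for such $m$, for every $\theta \in C_\delta$,
\[
  \frac{1}{m}\bigl\{ L_{jm}(\theta) - L_{jm}(\theta_0) \bigr\}
  \;\leq\; \bigl\{ L(\theta) + \epsilon \bigr\} - \bigl\{ L(\theta_0) - \epsilon \bigr\}
  \;=\; L(\theta) - L(\theta_0) + 2\epsilon \;\leq\; -\epsilon,
\]
and taking the supremum over $\theta \in C_\delta$, which dominates the supremum in the statement, gives $\sup_{\|\theta - \theta_0\|_2 > \delta} m^{-1}\{ L_{jm}(\theta) - L_{jm}(\theta_0)\} \leq -\epsilon$. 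Thus the event in the lemma eventually holds $\PP_0$-almost surely, so in particular its $\PP_0$-probability tends to $1$, which is the claim.

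There is no serious obstacle here: this is the standard Wald-type consistency argument, and every ingredient (the uniform almost sure LLN, compactness of $\Theta$, and uniqueness of the maximizer of $L$) is already packaged into Assumptions \ref{a2} and \ref{a5}. The only point worth a word of care is that Assumption \ref{a2} is phrased per subset index $j$, so the conclusion holds for each $j \in \{1, \ldots, K\}$ by identical reasoning; one could alternatively remark that stationarity and ergodicity of $\{Y_t\}$ under $\PP_{\theta_0}$, implied by Assumption \ref{a7}, make every block $Y_{(j-1)m+1}^{jm}$ equal in distribution to $Y_1^m$, but this is not needed since \ref{a2} already supplies the per-block law of large numbers.
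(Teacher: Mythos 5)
Your proof is correct. The paper itself does not prove Lemma \ref{lem1} — it simply imports it as a restatement of Lemma 3.1 of \citet{de2008asymptotic} with $n$ and the full-data likelihood replaced by $m$ and $L_{jm}$ — and your argument is exactly the standard Wald-type consistency proof underlying that cited result: compactness of $C_\delta$ from Assumption \ref{a5}, strict separation $\sup_{C_\delta} L < L(\theta_0)$ from the unique-maximizer clause of Assumption \ref{a2}, and the uniform almost-sure convergence of $m^{-1}L_{jm}$ to $L$ to transfer the gap to the finite-$m$ log-likelihood, with the degenerate empty-set case and the $j$-uniformity of $\epsilon$ both handled appropriately.
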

\begin{lemma}[\cite{le2000exponential} Theorem 2.1]
\label{lemma:bvm3:1}
Under Assumption \ref{a8}, for any $\theta \in B_{\delta_0}(\theta_0)$, any $p^{\theta}_{l}, \tilde{p}^{\theta}_{l} \in \mathcal{P}(\mathcal{X})$, any $l,s\geq 1$ and sequence $y_{l},\ldots,y_{l+s} \in \mathcal{Y}^{s}$,  
$$\|f_{s}^{\theta}\{y_{l+s},\ldots,y_{l},p^{\theta}_{s}\} - f_{s}^{\theta}\{y_{l+s},\ldots,y_{l},\tilde{p}^{\theta}_{s}\} \|_{TV} \leq \epsilon_{\theta}^{-1}h_{\theta}(y_{i})\prod_{i=l+1}^{l+s}\{1 - \epsilon_{\theta} h_{\theta}(y_{i})^{-1}\}\|p^{\theta}_{l} - \tilde{p}^{\theta}_{l}\|_{TV},$$
where $\epsilon_{\theta} =\underset{a,b\in \mathcal{X}}{\min} q_{\theta}(a,b)$, $ \rho_\theta(y) = \underset{a,b \in \mathcal{X}}{\max} \frac{g_{\theta}(y|a)}{g_{\theta}(y|b)}$ and $\{1 - \epsilon_{\theta} h_{\theta}(y_{i})^{-1}\} \in (0,1).$
\end{lemma}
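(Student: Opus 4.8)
The statement is a restatement of Theorem 2.1 of \citet{le2000exponential}, so the plan is to verify that Assumption \ref{a8} supplies its hypotheses and then reconstruct the exponential-forgetting bound by iterating a one-step total-variation contraction estimate. The key structural observation I would use is that a single step of the prediction-filter recursion is a likelihood reweighting followed by a prediction step. Writing the normalized posterior weights as $w_a(p) = g_{\theta}(y\mid a)\,p_a / \sum_{a'} g_{\theta}(y\mid a')\,p_{a'}$, Baum's forward equation gives $f^{\theta}(y,p) = \sum_{a\in\mathcal{X}} w_a(p)\,q_{\theta}(a,\cdot)$; that is, the updated filter is a convex combination of the rows of $Q_{\theta}$ with the corrected weights $w(p)$. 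I would establish the lemma in three stages: (i) bound the contraction of the prediction step using the uniform minorization of $Q_{\theta}$; (ii) control the expansion of the correction step by $h_{\theta}(y)$; and (iii) compose the two across the $s$ filtering steps to produce the product.

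For stage (i), Assumption \ref{a7} (taken with $v=1$ and $\epsilon_{\theta} = \min_{a,b} q_{\theta}(a,b) > 0$) yields a uniform lower bound on every entry of $Q_{\theta}$. Given two weight vectors $w,\tilde w$ on $\mathcal{X}$, I would write their difference as a signed measure $w - \tilde w = \alpha(\nu_+ - \nu_-)$ with $\alpha = \|w - \tilde w\|_{TV}$ and $\nu_\pm$ probability vectors; linearity of the prediction step gives $f^{\theta}(y,p) - f^{\theta}(y,\tilde p) = \alpha\,(Q_{\theta}^{\T}\nu_+ - Q_{\theta}^{\T}\nu_-)$. Since each row $q_{\theta}(a,\cdot)$ has all entries at least $\epsilon_{\theta}$, so do the mixtures $Q_{\theta}^{\T}\nu_\pm$, and two probability vectors sharing the minorant $\epsilon_{\theta}\one$ are at total-variation distance at most $1 - S\epsilon_{\theta}$. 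Hence the prediction step contracts by this Dobrushin factor.

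For stage (ii), I would bound the expansion of the Bayes correction $p\mapsto w(p)$ directly: the identity $w_a - \tilde w_a = g_{\theta}(y\mid a)\{\tilde Z p_a - Z\tilde p_a\}/(Z\tilde Z)$ with $Z = \sum_a g_{\theta}(y\mid a)p_a \geq \min_a g_{\theta}(y\mid a)$, together with $|Z - \tilde Z| \leq \sum_a g_{\theta}(y\mid a)\,|p_a - \tilde p_a|$ and $\max_a g_{\theta}(y\mid a)/\min_a g_{\theta}(y\mid a) = h_{\theta}(y)$, controls $\|w - \tilde w\|_{TV}$ by a constant multiple of $h_{\theta}(y)\,\|p-\tilde p\|_{TV}$. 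Combining (i) and (ii), the minorant $\epsilon_{\theta}$ of $Q_{\theta}$ survives the reweighting only up to the factor $h_{\theta}(y)^{-1}$: relative to its own column mass, the composed nonnegative kernel $q_{\theta}(a,b)\,g_{\theta}(y\mid a)$ retains a uniform Doeblin minorant of size $\epsilon_{\theta}h_{\theta}(y)^{-1}$, which is exactly the per-step contraction coefficient $1 - \epsilon_{\theta}h_{\theta}(y)^{-1}$. Iterating this one-step estimate over $i=l+1,\ldots,l+s$ and telescoping the normalization constants then yields the product $\prod_{i=l+1}^{l+s}\{1 - \epsilon_{\theta}h_{\theta}(y_i)^{-1}\}$ together with the leading prefactor $\epsilon_{\theta}^{-1}h_{\theta}$. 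Finally, $1 - \epsilon_{\theta}h_{\theta}(y)^{-1}\in(0,1)$ because $h_{\theta}(y)\geq 1$ while $\epsilon_{\theta}\leq 1/S < 1$.

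The main obstacle is stage (iii): obtaining the sharp combined factor $1 - \epsilon_{\theta}h_{\theta}(y_i)^{-1}$ rather than the crude product of the separate prediction contraction $(1 - S\epsilon_{\theta})$ and correction expansion $O(h_{\theta}(y_i))$, and carrying the normalization constants cleanly through the $s$-fold composition so that only a single prefactor remains. This is precisely the delicate bookkeeping carried out in \citet{le2000exponential}, where the correction and prediction are analyzed as one nonnegative kernel whose Doeblin minorant is read off directly; I would follow that route and verify at each step that Assumption \ref{a8}\ref{c}, the $\PP_0$-almost sure boundedness of $h_{\theta}(Y_1)$ over $\Theta$, keeps every coefficient uniformly in $(0,1)$ so that the bound holds globally on $B_{\delta_0}(\theta_0)$.
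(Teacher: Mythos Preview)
The paper does not prove this lemma at all: it is stated as a direct citation of Theorem~2.1 in \citet{le2000exponential} and used as a black box in the proofs of Proposition~\ref{filt-approx} and Theorem~\ref{thm1}\ref{thm1:2}. So there is no ``paper's own proof'' to compare against; your proposal is an attempt to reconstruct the cited result.

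Your three-stage outline is the right shape, and you correctly identify the real issue in stage~(iii): the per-step contraction coefficient $1-\epsilon_{\theta}h_{\theta}(y_i)^{-1}$ cannot be obtained by multiplying a prediction-only contraction $(1-S\epsilon_{\theta})$ with a correction-only expansion $O(h_{\theta}(y_i))$; one must treat the composed nonnegative kernel $K_{\theta}(a,b;y)=q_{\theta}(a,b)g_{\theta}(y\mid a)$ as a single object and read off its Doeblin minorant relative to its column sums. That is indeed how the argument in \citet{le2000exponential} proceeds (there via the Hilbert projective metric and Birkhoff's contraction theorem, with the TV bound recovered from the Hilbert-metric bound at the end). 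Your sketch stops short of carrying this through: the passage from ``the composed kernel retains a Doeblin minorant $\epsilon_{\theta}h_{\theta}(y)^{-1}$'' to the exact product with the stated prefactor $\epsilon_{\theta}^{-1}h_{\theta}(y_l)$ is where all the work lies, and you defer it to the reference. If you want a self-contained proof, the cleanest route is to work in the Hilbert metric, where the one-step Birkhoff coefficient of $K_{\theta}(\cdot,\cdot;y)$ is bounded by $\tanh\!\bigl(\tfrac{1}{4}\log\sup_{a,a',b,b'}\tfrac{K(a,b)K(a',b')}{K(a,b')K(a',b)}\bigr)$, and then convert back to TV only at the final step; this avoids the bookkeeping of normalizing constants that you flag as delicate.
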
 

\begin{lemma}[\cite{le2000exponential} Example 3.3]
\label{lemma:bvm3:2}
Under Assumption \ref{a8}, for any $\theta \in B_{\delta_0}(\theta_0)$, any $y \in \mathcal{Y}$ and any $p, \tilde{p} \in \mathcal{P}(\mathcal{X})$,
$$|\log\{\sum_{a \in \mathcal{X}}p(a)g_{\theta}(y|a)\} - \log\{\sum_{a \in \mathcal{X}}\tilde{p}(a)g_{\theta}(y|a)\}| \leq [h_{\theta}(y) - 1] \|p - \tilde{p}\|_{TV}.$$
\end{lemma}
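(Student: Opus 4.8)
The plan is to treat $F(p) := \log\bigl(\sum_{a\in\mathcal{X}} p(a)\, g_{\theta}(y\mid a)\bigr)$ as a function of the probability vector $p$ and to obtain the claimed Lipschitz bound from two elementary estimates: a sharp bound on the numerator difference that exploits the fact that $p$ and $\tilde{p}$ are both probability measures, together with a uniform lower bound on the denominator. Fix $\theta \in B_{\delta_0}(\theta_0)$ and $y \in \mathcal{Y}$; we may assume $g_{\min} := \min_{a} g_{\theta}(y\mid a) > 0$, since otherwise $h_{\theta}(y) = \infty$ and there is nothing to prove. Write $g_{\max} := \max_{a} g_{\theta}(y\mid a)$, so that $h_{\theta}(y) = g_{\max}/g_{\min}$, and set $u := \sum_{a} p(a) g_{\theta}(y\mid a)$ and $v := \sum_{a} \tilde{p}(a) g_{\theta}(y\mid a)$.

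First I would bound $|u - v|$. Writing $\Delta(a) = p(a) - \tilde{p}(a)$, the key observation is that $\sum_{a} \Delta(a) = 0$ because both $p$ and $\tilde{p}$ sum to one; hence, for the midpoint constant $c = (g_{\max}+g_{\min})/2$, we have $u - v = \sum_{a} \Delta(a)\, g_{\theta}(y\mid a) = \sum_{a} \Delta(a)\,\{g_{\theta}(y\mid a) - c\}$. Since $|g_{\theta}(y\mid a) - c| \le (g_{\max}-g_{\min})/2$ for every $a$, this yields
\begin{align*}
  |u - v| \;\le\; \frac{g_{\max}-g_{\min}}{2}\sum_{a\in\mathcal{X}} |\Delta(a)| \;=\; (g_{\max}-g_{\min})\,\|p - \tilde{p}\|_{TV},
\end{align*}
using the convention $\|p-\tilde{p}\|_{TV} = \tfrac{1}{2}\sum_{a} |\Delta(a)|$.

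Next I would control the logarithm. Both $u$ and $v$ are convex combinations of the values $g_{\theta}(y\mid a)$, so $u, v \ge g_{\min} > 0$, and the elementary inequality $|\log u - \log v| \le |u-v|/\min(u,v)$ (immediate from $\log u - \log v = \int_{v}^{u} s^{-1}\,ds$) gives
\begin{align*}
  |F(p) - F(\tilde{p})| \;\le\; \frac{|u-v|}{g_{\min}} \;\le\; \frac{g_{\max}-g_{\min}}{g_{\min}}\,\|p - \tilde{p}\|_{TV} \;=\; \{h_{\theta}(y) - 1\}\,\|p - \tilde{p}\|_{TV},
\end{align*}
which is the assertion.

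No step here is genuinely hard; the only point requiring care is the numerator estimate, where using $\sum_{a} \Delta(a) = 0$ to subtract the midpoint constant is exactly what replaces the crude bound $g_{\max}\|p-\tilde{p}\|_{TV}$ by the sharp factor $g_{\max}-g_{\min}$. An equivalent route is to parametrize $p_{t} = (1-t)\tilde{p} + t p$ and integrate $\frac{d}{dt}F(p_{t}) = \{\sum_{a} \Delta(a) g_{\theta}(y\mid a)\}/\{\sum_{a} p_{t}(a) g_{\theta}(y\mid a)\}$ over $t\in[0,1]$, bounding numerator and denominator exactly as above; this produces the same constant while avoiding the scalar logarithm inequality. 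Either way, Assumption \ref{a8} enters only to guarantee that $h_{\theta}(y) < \infty$ (indeed bounded by $M$ via \ref{a8}\ref{c}), so that the right-hand side is finite and the bound is nonvacuous.
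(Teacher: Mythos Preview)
Your proof is correct. The paper does not supply its own proof of this lemma; it simply quotes the result from Le~Gland and Mevel (2000, Example~3.3) and uses it as a black box in the derivation of \eqref{eq:log-liks-filt}. Your argument is therefore a self-contained replacement for that citation: the centering trick $u-v=\sum_a\Delta(a)\{g_\theta(y\mid a)-c\}$ with $c=(g_{\max}+g_{\min})/2$ is exactly what yields the sharp factor $g_{\max}-g_{\min}$, and combined with the lower bound $\min(u,v)\ge g_{\min}$ it delivers the constant $h_\theta(y)-1$ under the convention $\|p-\tilde p\|_{TV}=\tfrac12\sum_a|p(a)-\tilde p(a)|$, which is consistent with how the bound is applied later in the paper.
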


\begin{lemma}
\label{lemma:addi:lik} The log-likelihood function of $\theta$ given $Y_{m+1}^{2m}$ and the conditional log-likelihood of $\theta$ obtained from the conditional density of $Y_{m+1}^{2m}$ given $Y_{1}^{m}$ can be expressed as 
\begin{align}
&\exp\{w_{1}(\theta)\} = p_{\theta}(Y_{m+1}^{2m})   \nonumber \\
&= \sum_{X_{m+1}^{2m}}\prod_{t=m+1}^{2m-1}r_{\theta}(X_{m+1}) q_{\theta}(X_{t},X_{t+1}) \prod_{t=m+1}^{2m} g_{\theta}(Y_{t}\mid X_{t})
= \prod_{t = m+1}^{2m} \sum_{X_{t}}g_{\theta}(Y_{t}\mid X_{t})p^{\theta}_{t}(X_{t}),
\label{lik1}
\\
&\exp\{w_{2}(\theta)\} = p_{\theta}(Y_{m+1}^{2m}\mid Y_{1}^{m} )   \nonumber \\
&= \sum_{X_{m+1}^{2m}} \prod_{t=m+1}^{2m-1}p_{\theta}(X_{m+1}\mid Y_{1}^{m})q_{\theta}(X_{t},X_{t+1}) \prod_{t=m+1}^{2m} g_{\theta}(Y_{t}\mid X_{t})
= \prod_{t = m+1}^{2m} \sum_{X_{t}}g_{\theta}(Y_{t}\mid X_{t})\tilde{p}^{\theta}_{t}(X_{t}),
\label{lik2}
\end{align}
where \begin{align*}
p_{m+1}^{\theta}(X_{m+1}) = r_{\theta}(X_{m+1}),\quad \tilde{p}_{m+1}^{\theta}(X_{m+1}) = p_{\theta}(X_{m+1} \mid Y_1^m),
\end{align*}
and for $t = m+2,\ldots,2m$,
\begin{align*} 
 p_{t}^{\theta}(X_{t}) = f^{\theta}_{t - m - 2} \{Y^{t-1}_{m+1}, r_{\theta}(X_{m+1}  )\}, \quad
 \tilde{p}_{t}^{\theta}(X_{t}) = f^{\theta}_{t - m - 2}\{Y^{t-1}_{m+1}, p_{\theta}( X_{m+1}  \mid Y_1^m)\},
\end{align*}
where $f^{\theta}_{t-m-2}$ is defined in \eqref{eq:27}.
\end{lemma}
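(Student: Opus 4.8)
\emph{Proof sketch.} The plan is to prove both displayed identities by running the Baum forward filter over the block $Y_{m+1}^{2m}$ and telescoping a product of one-step predictive densities; the two identities in \eqref{lik1} and \eqref{lik2} differ only in the initial law fed into the filter, so a single argument, applied twice, will suffice. First I would apply the HMM factorization \eqref{eq:3} to the time block $\{m+1,\dots,2m\}$, which is again a homogeneous HMM with $r_{\theta}$ as the marginal law of $X_{m+1}$ by stationarity; marginalizing the joint density $r_{\theta}(X_{m+1})\prod_{t=m+1}^{2m-1}q_{\theta}(X_{t},X_{t+1})\prod_{t=m+1}^{2m}g_{\theta}(Y_{t}\mid X_{t})$ over $X_{m+1}^{2m}\in\Xcal^{m}$ gives the middle expression in \eqref{lik1}, and likewise the middle expression in \eqref{lik2} follows once we know that conditioning on $Y_{1}^{m}$ only changes the marginal law of $X_{m+1}$.

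For the product form, I would introduce the unnormalized forward variables $\alpha_{\tau}(x)=\PP_{\theta}(Y_{m+1}^{\tau},X_{\tau}=x)$ for $m+1\le\tau\le 2m$, with $\alpha_{m+1}(x)=r_{\theta}(x)g_{\theta}(Y_{m+1}\mid x)$ and $\alpha_{\tau+1}(x)=\left(\sum_{x'}\alpha_{\tau}(x')q_{\theta}(x',x)\right)g_{\theta}(Y_{\tau+1}\mid x)$; this recursion is a direct consequence of the Markov property of $\{X_{t}\}$ and the conditional independence of $\{Y_{t}\}$ given $\{X_{t}\}$. Writing $J_{\tau}:=\sum_{x}\alpha_{\tau}(x)=p_{\theta}(Y_{m+1}^{\tau})$, with the convention $J_{m}:=1$, the prediction filter satisfies $p_{m+1}^{\theta}=r_{\theta}$ and $p_{\tau}^{\theta}(x)=\PP_{\theta}(X_{\tau}=x\mid Y_{m+1}^{\tau-1})=J_{\tau-1}^{-1}\sum_{x'}\alpha_{\tau-1}(x')q_{\theta}(x',x)$ for $\tau\ge m+2$, so that $\sum_{x}g_{\theta}(Y_{\tau}\mid x)p_{\tau}^{\theta}(x)=J_{\tau-1}^{-1}\sum_{x}\alpha_{\tau}(x)=J_{\tau}/J_{\tau-1}$ for every $\tau\in\{m+1,\dots,2m\}$. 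Telescoping over $\tau$ then yields $\prod_{\tau=m+1}^{2m}\sum_{x}g_{\theta}(Y_{\tau}\mid x)p_{\tau}^{\theta}(x)=J_{2m}/J_{m}=p_{\theta}(Y_{m+1}^{2m})$, which is the right-hand identity of \eqref{lik1}.

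Next I would check that this $p_{\tau}^{\theta}$ is exactly the object in the statement, that is, $p_{t}^{\theta}=f_{t-m-2}^{\theta}\{Y_{m+1}^{t-1},r_{\theta}(X_{m+1})\}$ for $t\ge m+2$. This is immediate from Baum's forward equation $p_{\tau+1}^{\theta}=f^{\theta}(Y_{\tau},p_{\tau}^{\theta})=f_{0}^{\theta}(Y_{\tau},p_{\tau}^{\theta})$ together with the composition identity \eqref{eq:27}: iterating the one-step map from $p_{m+1}^{\theta}=r_{\theta}$ for $t-m-1$ steps gives $p_{t}^{\theta}=f_{t-m-2}^{\theta}\{Y_{m+1}^{t-1},p_{m+1}^{\theta}\}$. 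For \eqref{lik2} I would observe that conditionally on $Y_{1}^{m}$ the process $\{(X_{t},Y_{t}):t\ge m+1\}$ is again an HMM with the same $Q_{\theta}$ and $g_{\theta}$ but with $p_{\theta}(X_{m+1}\mid Y_{1}^{m})$ as the initial law of $X_{m+1}$ (this uses only the Markov property of $\{X_{t}\}$ and the HMM conditional-independence structure), and then repeat the previous two steps verbatim with $\alpha_{\tau}(x)$ replaced by $\PP_{\theta}(Y_{m+1}^{\tau},X_{\tau}=x\mid Y_{1}^{m})$ and $p_{\tau}^{\theta}$ replaced by $\tilde p_{\tau}^{\theta}(x)=\PP_{\theta}(X_{\tau}=x\mid Y_{1}^{\tau-1})$, with $\tilde p_{m+1}^{\theta}=p_{\theta}(X_{m+1}\mid Y_{1}^{m})$; identity \eqref{eq:27} again gives $\tilde p_{t}^{\theta}=f_{t-m-2}^{\theta}\{Y_{m+1}^{t-1},p_{\theta}(X_{m+1}\mid Y_{1}^{m})\}$ for $t\ge m+2$.

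The computation is elementary, so there is no genuine analytic obstacle here. The only points that need care are the index bookkeeping that turns the generic composition in \eqref{eq:27} into the exact lag $t-m-2$ appearing in the lemma, and the routine verification that both the shifted process and the $Y_{1}^{m}$-conditioned process are honest HMMs with the claimed initial laws; these are the two places where I would be most careful when writing out the full argument.
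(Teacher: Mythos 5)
Your argument is correct, and it reaches the identity by a different route than the paper. You organize the proof around the chain rule $p_{\theta}(Y_{m+1}^{2m})=\prod_{\tau=m+1}^{2m}p_{\theta}(Y_{\tau}\mid Y_{m+1}^{\tau-1})$, realized through the unnormalized forward variables $\alpha_{\tau}$ and the telescoping ratios $J_{\tau}/J_{\tau-1}$, and then you observe that each one-step predictive density is exactly $\sum_{x}g_{\theta}(Y_{\tau}\mid x)\,p_{\tau}^{\theta}(x)$; the conditional version \eqref{lik2} follows by noting that, given $Y_{1}^{m}$, the block $\{(X_{t},Y_{t}):t\geq m+1\}$ is again an HMM with the same kernel and emissions but initial law $p_{\theta}(X_{m+1}\mid Y_{1}^{m})$. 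The paper instead proves the product identity by induction on the block length, verifying the base case $m=2$ and the inductive step by writing the prediction filter $p_{\theta}(X_{k}\mid Y_{1}^{k-1})$ explicitly as a ratio of sums and cross-multiplying. The two arguments rest on the same one-step identity $p_{\theta}(Y_{1}^{k})=p_{\theta}(Y_{1}^{k-1})\sum_{x}g_{\theta}(Y_{k}\mid x)p_{\theta}(X_{k}=x\mid Y_{1}^{k-1})$, but your telescoping formulation makes that identity the explicit engine of the proof and dispenses with the induction and the attendant algebra, at the cost of having to introduce and normalize the forward variables; the paper's induction is more pedestrian but entirely self-contained at the level of finite sums. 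The two points you flag as needing care — the index bookkeeping matching the lag $t-m-2$ in \eqref{eq:27}, and the verification that the shifted and the $Y_{1}^{m}$-conditioned processes are honest HMMs with the claimed initial laws — are indeed the only delicate spots, and both check out: setting $t-1-s=m+1$ in \eqref{eq:27} gives $s=t-m-2$, and the conditional-HMM claim follows from the Markov property of $\{X_{t}\}$ together with the conditional independence of the $Y$'s given the $X$'s.
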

\begin{proof}[Proof of Lemma \ref{lemma:addi:lik}]
We first prove the following equality by induction:
\begin{align}
\sum_{X_{1}^{m}} \prod_{t=1}^{m-1}r_{\theta}(X_{1}) q_{\theta}(X_{t},X_{t+1}) \prod_{t=1}^{m} g_{\theta}(Y_{t}\mid X_{t})
= \prod_{t = 1}^{m} \sum_{X_{t}}g_{\theta}(Y_{t}|X_{t})p^{\theta}_{t}(X_{t}).
\label{likp}
\end{align}
For $m = 2$, the left hand side of \eqref{likp} is 
$$\sum_{X_{1},X_{2}}r_{\theta}(X_{1}) q_{\theta}(X_{1},X_{2})g_{\theta}(Y_{1}|X_{1})g_{\theta}(Y_{2}|X_{2}) = 
\sum_{X_{2}}g_{\theta}(Y_{2}|X_{2})\sum_{X_{1}}r_{\theta}(X_{1}) q_{\theta}(X_{1},X_{2})g_{\theta}(Y_{1}|X_{1}).$$
And the right hand side of \eqref{likp} is 
\begin{align*}
\left\{ \sum_{X_{1}}r_{\theta}(X_{1}) g_{\theta}(Y_{1}\mid X_{1})  \right\}
\left\{ \sum_{X_{2}}g_{\theta}(Y_{2}\mid X_{2})p_{\theta}(X_{2}\mid Y_1)  \right\} &= 
\sum_{X_{2}}g_{\theta}(Y_{2}\mid X_{2}) 
\nonumber 
\\
&\times \sum_{X_{1}}p_{\theta}(X_{2}\mid Y_1)  r_{\theta}(X_{1})g_{\theta}(Y_{1}\mid X_{1}).
\end{align*}
For fixed $X_{2} = x_{2} \in \mathcal{X}$, we have 
\begin{align*}
\sum_{X_{1}}&r_{\theta}(X_{1}) q_{\theta}(X_{1},x_{2})g_{\theta}(Y_{1}\mid X_{1}) 
- \sum_{X_{1}}p_{\theta}(x_{2}\mid Y_1)  r_{\theta}(X_{1})g_{\theta}(Y_{1}\mid X_{1})
\\
& = 
\sum_{X_{1}}r_{\theta}(X_{1}) q_{\theta}(X_{1},x_{2})g_{\theta}(Y_{1}\mid X_{1}) 
- \sum_{X_{1}} r_{\theta}(X_{1})g_{\theta}(Y_{1}\mid X_{1}) 
\frac{\sum_{a_1}q_{\theta}(a_1,x_{2})g_{\theta}(Y_{1}\mid a_1)r_{\theta}(a_1)}{\sum_{a_1}g_{\theta}(Y_{1}\mid a_1)r_{\theta}(a_1)}
\\
&=
\frac{1}{\sum_{a_1}g_{\theta}(Y_{1}\mid a_1)r_{\theta}(a_1)} \times 
\left\{\sum_{X_{1}}r_{\theta}(X_{1}) q_{\theta}(X_{1},x_{2})g_{\theta}(Y_{1}\mid X_{1})\sum_{a_1}g_{\theta}(Y_{1}\mid a_1)r_{\theta}(a_1)  \right. \\
&\quad \quad - \left. \sum_{X_{1}} r_{\theta}(X_{1})g_{\theta}(Y_{1}\mid X_{1}) 
\sum_{a_1}q_{\theta}(a_1,x_{2})g_{\theta}(Y_{1}\mid a_1)r_{\theta}(a_1)\right\}
\\&
= 0.
\end{align*}
Hence, we have 
$$
\sum_{X_{2}}g_{\theta}(Y_{2}\mid X_{2}) 
\left\{\sum_{X_{1}}r_{\theta}(X_{1}) q_{\theta}(X_{1},X_{2})g_{\theta}(Y_{1}\mid X_{1}) 
- \sum_{X_{1}}p_{\theta}(X_{2}\mid Y_1)  r_{\theta}(X_{1})g_{\theta}(Y_{1}\mid X_{1}) \right\} = 0,
$$
which proves \eqref{likp} at $m = 2$. Suppose \eqref{likp} holds for $m = k-1 \geq 2$, we have 
$$
\sum_{X_{1}^{k-1}} \prod_{t=1}^{k-2}r_{\theta}(X_{1}) q_{\theta}(X_{t},X_{t+1}) \prod_{t=1}^{k-1} g_{\theta}(Y_{t}|X_{t})
= \prod_{t = 1}^{k-1} \sum_{X_{t}}g_{\theta}(Y_{t}\mid X_{t})p^{\theta}_{t}(X_{t}).
$$
For $m = k$, the left hand side of \eqref{likp} can be expressed as 
\begin{align*}
\sum_{X_{1}^{k}}& \prod_{t=1}^{k-1}r_{\theta}(X_{1}) q_{\theta}(X_{t},X_{t+1}) \prod_{t=1}^{k} g_{\theta}(Y_{t}\mid X_{t})
\\&
= \sum_{X_{k}} g_{\theta}(Y_{k}\mid X_{k}) \sum_{X_{1}^{k-1}}q(X_{k-1},X_{k})r_{\theta}(X_{1})  \prod_{t=1}^{k-2}q_{\theta}(X_{t},X_{t+1}) \prod_{t=1}^{k-1} g_{\theta}(Y_{t}\mid X_{t}).
\end{align*}
\end{proof}
And the right hand side of \eqref{likp} can be expressed as
\begin{align*}
\prod_{t = 1}^{k} &\sum_{X_{t}}g_{\theta}(Y_{t}\mid X_{t})p^{\theta}_{t}(X_{t})
\\&
=  \left\{\sum_{X_{k}} g_{\theta}(Y_{k}\mid X_{k}) p_{\theta}(X_{k}\mid Y_{1}^{k-1}) \right\} \left\{\prod_{t = 1}^{k-1} \sum_{X_{t}}g_{\theta}(Y_{t}\mid X_{t})p^{\theta}_{t}(X_{t})\right\}
\\&
 = 
\left\{ \sum_{X_{k}} g_{\theta}(Y_{k}\mid X_{k}) p_{\theta}(X_{k}\mid Y_{1}^{k-1})\right\}
\left\{\sum_{X_{1}^{k-1}} \prod_{t=1}^{k-2}r_{\theta}(X_{1}) q_{\theta}(X_{t},X_{t+1}) \prod_{t=1}^{k-1} g_{\theta}(Y_{t}\mid X_{t}) \right\},
\end{align*}
where the last equality is due to the induction assumption at $m = k-1$.
And we have 
$$p_{\theta}(X_{k}\mid Y_{1}^{k-1}) 
= \frac{\sum_{a_{1}^{k-1}} q_{\theta}(a_{k-1},X_{k})r_{\theta}(a_1) 
\prod_{t=1}^{k-2}q_{\theta}(a_{t},a_{t+1})\prod_{t=1}^{k-1}g_{\theta}(Y_{t}\mid a_t)
 }{ \sum_{a_{1}^{k-1}}r_{\theta}(a_1) 
\prod_{t=1}^{k-2}q_{\theta}(a_{t},a_{t+1})\prod_{t=1}^{k-1}g_{\theta}(Y_{t}\mid a_t) }.
$$
For fixed $X_{k} = x_{k} \in \mathcal{X}$, we have 
\begin{align*}
&\left\{ \sum_{X_{1}^{k-1}}q(X_{k-1},x_{k})r_{\theta}(X_{1})  \prod_{t=1}^{k-2}q_{\theta}(X_{t},X_{t+1}) \prod_{t=1}^{k-1} g_{\theta}(Y_{t}\mid X_{t})\right\}
\left\{\sum_{a_{1}^{k-1}}r_{\theta}(a_1) 
\prod_{t=1}^{k-2}q_{\theta}(a_{t},a_{t+1})\prod_{t=1}^{k-1}g_{\theta}(Y_{t}\mid a_t) \right \}
\\
=&\left\{\sum_{X_{1}^{k-1}}r_{\theta}(X_{1}) \prod_{t=1}^{k-2} q_{\theta}(X_{t},X_{t+1}) \prod_{t=1}^{k-1} g_{\theta}(Y_{t}\mid X_{t}) \right\}\left\{
\sum_{a_{1}^{k-1}} q_{\theta}(a_{k-1},x_{k})r_{\theta}(a_1) 
\prod_{t=1}^{k-2}q_{\theta}(a_{t},a_{t+1})\prod_{t=1}^{k-1}g_{\theta}(Y_{t}\mid a_t)\right\}.
\end{align*}
Hence, we have 
\begin{align*}
&\sum_{X_{1}^{k-1}}q(X_{k-1},X_{k})r_{\theta}(X_{1})  \prod_{t=1}^{k-2}q_{\theta}(X_{t},X_{t+1}) \prod_{t=1}^{k-1} g_{\theta}(Y_{t}\mid X_{t})
\\
=& p_{\theta}(X_{k}\mid Y_{1}^{k-1}) \left\{\sum_{X_{1}^{k-1}} \prod_{t=1}^{k-2}r_{\theta}(X_{1}) q_{\theta}(X_{t},X_{t+1}) \prod_{t=1}^{k-1} g_{\theta}(Y_{t}\mid X_{t}) \right\},
\end{align*}
which implies that \eqref{likp} holds for $m = k$. Changing indices from $(1,\ldots,m)$ to $(m+1,\ldots,2m)$, we prove equation \eqref{lik1}. 

By the recursive definition of prediction filter, we have for $t = m+2,\ldots,2m,$ $$ 
p_{t}^{\theta}(X_{t}) = f^{\theta}_{t - m - 2} \{Y^{t-1}_{m+1}, r_{\theta}(X_{m+1}  )\}, \quad
 \tilde{p}_{t}^{\theta}(X_{t}) = f^{\theta}_{t - m - 2}\{Y^{t-1}_{m+1}, p_{\theta}( X_{m+1}  \mid Y_1^m)\}.$$
Given $Y_{1}^{m}$, replcing the stationary distribution $r_{\theta}(X_{m+1})$ 
 by the prediction filter distribution $p_{t}(X_{m+1} \mid Y_{1}^{m})$, we prove equation \eqref{lik2}.

 \hfill $\square$

\begin{lemma}[\cite{devroye2018total} Theorem 1.2]
\label{lemma:thm2}
Suppose $d >1$, let $\mu_1 \neq \mu_2 \in \RR^{d}$ and let $V_{1},V_{2}$ be two positive definite $d \times d$ matrices. Let $v = \mu_1 - \mu_2$ and $N$ be an arbitrary $d \times d-1$ matrix with columns forming a basis for the subspace orthogonal to $v$. Define the function 
$$tv(\mu_1,V_1,\mu_2,V_2) = \max \{  \frac{|v^{\T}(V_1 - V_2) v|}{v^{\T}V_1 v}, \frac{v^\T v}{\sqrt{v^\T V_1 v}}, \|(N^{\T}V_1 N)^{-1}N^{\T}V_2 N - I_{d-1}\|_{F} \}.$$
Then, we have 
$$\frac{1}{200} \leq \frac{\|N_{d}(\mu_1,V_1)-N_{d}(\mu_2,V_2)\|_{\text{TV}}}{tv(\mu_1,V_1,\mu_2,V_2)} \leq \frac{9}{2}.$$
\end{lemma}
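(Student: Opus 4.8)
The plan is to prove the two-sided estimate by combining the affine invariance of the total variation distance with one-dimensional Gaussian identities and an information-theoretic bound, treating the contribution of the mean shift $v = \mu_1 - \mu_2$ and that of the covariance mismatch separately; this reproduces the argument of \citet{devroye2018total}, whose steps I sketch. Write $P = N_{d}(\mu_1,V_1)$ and $Q = N_{d}(\mu_2,V_2)$. Since $\|P-Q\|_{\text{TV}}$ is unchanged by any invertible affine transformation, I would first apply $x \mapsto O V_1^{-1/2}(x-\mu_2)$ with $O$ orthogonal chosen so that $O V_1^{-1/2}v$ is a nonnegative multiple of $e_1$; after this reduction $V_1 = I_d$, $\mu_2 = 0$, $\mu_1 = \eta e_1$ with $\eta = (v^{\T}V_1^{-1}v)^{1/2}$, and $N$ may be taken as the orthonormal basis $[e_2,\dots,e_d]$ of $e_1^{\perp}$, so that $v^{\T}V_1 v = v^{\T}v = \eta^2$ and the three quantities defining $tv$ reduce to $\eta$, the relative gap of $(V_2)_{11}$ from $1$, and the Frobenius gap of the $v^{\perp}$-block of $V_2$ from the identity; together with the off-diagonal entries of $V_2$ these control $\|V_2 - I_d\|_{\text{F}}$. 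Since $\|P-Q\|_{\text{TV}} \le 1$ and $tv$ is a maximum of nonnegative quantities, the inequalities have content only when $tv$ is below a fixed absolute constant, and I may assume $\eta$ and $\|V_2 - I_d\|_{\text{F}}$ are $O(1)$.

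For the upper bound I would use the triangle inequality
\[
  \|P-Q\|_{\text{TV}} \;\le\; \|N_{d}(\mu_1,V_1) - N_{d}(\mu_1,V_2)\|_{\text{TV}} + \|N_{d}(\mu_1,V_2) - N_{d}(\mu_2,V_2)\|_{\text{TV}} .
\]
The second term is a translation of a single Gaussian, so it equals $2\Phi\{\tfrac12 (v^{\T}V_2^{-1}v)^{1/2}\} - 1 \lesssim (v^{\T}V_2^{-1}v)^{1/2} \lesssim \eta$, the last step using $V_2 \asymp I_d$. The first term is a covariance change at a fixed mean; after centering both laws I would bound it by Pinsker's inequality together with $2\,\mathrm{KL}\{N_{d}(0,I_d)\,\|\,N_{d}(0,V_2)\} = \sum_{i}(\nu_i^{-1} - 1 + \log \nu_i)$, where $\nu_1,\dots,\nu_d$ are the eigenvalues of $V_2$; since $x^{-1} - 1 + \log x \le c(x-1)^2$ on any compact subset of $(0,\infty)$, this sum is $\lesssim \|V_2 - I_d\|_{\text{F}}^{2} \lesssim tv^{2}$. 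Adding the two pieces and optimizing constants yields $\|P-Q\|_{\text{TV}} \le \tfrac{9}{2}\, tv$.

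For the lower bound, since $tv$ is a maximum of nonnegative terms and $\|P-Q\|_{\text{TV}} \le 1$, it suffices to bound $\|P-Q\|_{\text{TV}}$ below by a fixed fraction of each term, which I would do by pushing $P$ and $Q$ forward through maps, an operation that never increases $\|\cdot\|_{\text{TV}}$. Projecting onto $\mathrm{span}(e_1)$ produces the one-dimensional pair $N(\eta,1)$ versus $N(0,(V_2)_{11})$, whose total variation distance is at least an absolute constant times $\min\{1,\eta\}$ and, separately, at least an absolute constant times $\min\{1,|1-(V_2)_{11}|\}$ by an elementary computation, settling the mean shift. For the covariance term a one-dimensional projection is useless, since it exposes only a single eigen-direction; instead I would test the centered pair with the quadratic form $x \mapsto x^{\T}(V_2 - I_d)x$, whose expectations differ by $\|V_2 - I_d\|_{\text{F}}^{2}$ while its standard deviation under each law is of order $\|V_2 - I_d\|_{\text{F}}$, so the two laws are separated by roughly $\|V_2 - I_d\|_{\text{F}}$ standard deviations, and a second-moment (Chebyshev / Paley--Zygmund) argument gives $\|P-Q\|_{\text{TV}} \gtrsim \min\{1, \|V_2 - I_d\|_{\text{F}}\}$. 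Taking the largest of these lower bounds and tracking constants gives $\|P-Q\|_{\text{TV}} \ge \tfrac{1}{200}\, tv$.

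The main obstacle is the covariance contribution to the lower bound: the target is the full Frobenius norm, which pools all the eigenvalue gaps of $V_2$, so no single one-dimensional projection suffices and one must work with a genuinely multivariate quadratic test statistic, controlled by a variance estimate for quadratic forms in a Gaussian vector whose constants are uniform in $d$. A secondary technical nuisance is the bookkeeping that relates the several ``closeness of covariance'' quantities appearing in $tv$ --- the $(1,1)$-entry, the $v^{\perp}$-block, and, after whitening, the off-diagonal entries --- to the single quantity $\|V_2 - I_d\|_{\text{F}}$ at the sharp constants $1/200$ and $9/2$; this Schur-complement and eigenvalue bookkeeping is exactly why the whitening $V_1 = I_d$ is carried out at the outset rather than working with $V_1$ and $V_2$ directly.
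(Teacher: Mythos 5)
There is nothing in the paper to compare your attempt against: Lemma \ref{lemma:thm2} is not proved here at all, it is imported verbatim as Theorem 1.2 of \citet{devroye2018total} and used as a black box. Judged on its own, your high-level plan (whiten so that $V_1=I_d$; upper bound by a triangle inequality splitting the mean shift from the covariance change, with Pinsker/KL for the latter; lower bound from one-dimensional projections plus a quadratic test statistic for the Frobenius term) is indeed the strategy of that reference, and it does produce a correct two-sided estimate --- but for the \emph{whitened} quantities, not for $tv$ as displayed.

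The concrete gap is the reduction step. Total variation is affine invariant, but $tv(\mu_1,V_1,\mu_2,V_2)$ is not: after $x\mapsto OV_1^{-1/2}(x-\mu_2)$ the second term becomes the Mahalanobis distance $\eta=(v^{\T}V_1^{-1}v)^{1/2}$, whereas the term actually appearing in the lemma is $v^{\T}v/\sqrt{v^{\T}V_1v}$, which by Cauchy--Schwarz is at most $\eta$ with a ratio that is unbounded when $V_1$ is ill-conditioned and $v$ is not an eigenvector. Your upper bound controls the mean-shift contribution by a constant times $\eta$, so it bounds the total variation by a multiple of the whitened maximum, not of $tv$; and this cannot be repaired by bookkeeping, because the displayed inequality is false as written. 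Take $d=2$, $V_1=V_2=\mathrm{diag}(\epsilon,1)$, $\mu_2=0$, $\mu_1=v=(t,t)^{\T}$: then terms one and three vanish, $tv=2t/\sqrt{1+\epsilon}$, while $\|N_d(\mu_1,V_1)-N_d(\mu_2,V_2)\|_{\text{TV}}=2\Phi\{\tfrac{t}{2}(1+1/\epsilon)^{1/2}\}-1\to 1$ as $\epsilon\to 0$, so the ratio exceeds $9/2$ for small $\epsilon$. The lower bound has a separate problem: your projection and quadratic-form arguments correctly yield $\|\cdot\|_{\text{TV}}\gtrsim\min\{1,\cdot\}$ for each term, but the step ``I may assume $tv$ is $O(1)$'' is legitimate only for the upper bound; without the $\min\{1,\cdot\}$ present in the actual statement of \citet{devroye2018total}, the claim $\|\cdot\|_{\text{TV}}\ge tv/200$ fails whenever $tv>200$ (e.g.\ $V_2=400\,V_1$ makes the Frobenius term exceed $200$ while the total variation distance stays below $1$). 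None of this harms the paper, which invokes the lemma only with $\mu_1=\mu_2$ and with covariance matrices converging to one another, so the offending terms are zero or $o(1)$; but as a freestanding proof of the statement as transcribed, your argument cannot close, because that statement needs both the $\min\{1,\cdot\}$ and a correct (whitened) mean term.
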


\subsection{Proof of Proposition \ref{filt-approx}}
\label{pf:prop1}

\emph{Proof of Proposition \ref{filt-approx}}. For $j=2$, the total variation distance equals to zero. Based on \eqref{eq:27}, for $j = 3,\ldots, K$, we have the following expressions for the approximate prediction filter and the true prediction filter:
\begin{align}
p_{\theta}&(X_{(j-1)m+1}  \mid Y_{(j-2)m+1}^{(j-1)m}) = f^{\theta}_{m-1}\{Y_{(j-2)m+1}^{(j-1)m}, r_{\theta}(X_{(j-2)m +1})\}
\nonumber\\
p_{\theta}&(X_{(j-1)m+1} \mid Y_{1}^{(j-1)m}) = f^{\theta}_{m-1}\{Y_{(j-2)m+1}^{(j-1)m}, p_{\theta}(X_{(j-2)m +1} \mid Y_{1}^{(j-2)m})\},
\label{eq:proof:p0}
\end{align}
where $r_{\theta}(\cdot) \in \mathcal{P}(\mathcal{X})$ is the stationary distribution of $\{X_{t}, t\geq 1\}$.
Then for $\theta \in \Theta$, we have that with $\PP_0$-probability almost 1,
\begin{align}
&  \|p_{\theta}(X_{(j-1)m+1}  \mid Y_{(j-2)m+1}^{(j-1)m})  - p_{\theta}(X_{(j-1)m+1} \mid Y_{1}^{(j-1)m}) \|_{\text{TV}} 
\nonumber \\
&\overset{(i)}{\leq}
\epsilon_{\theta}^{-1} h_{\theta}(Y_{(j-2)m+1})\prod_{i=(j-2)m+2}^{(j-1)m}[1 - \epsilon_{\theta} h_{\theta}(Y_{i})^{-1}]
\|r_{\theta}(X_{(j-2)m +1}) 
\nonumber
\\&
\quad \quad \quad
-  p_{\theta}(X_{(j-2)m +1} \mid Y_{1}^{(j-2)m})\|_{\text{TV}}
\nonumber \\
&\overset{(ii)}{\leq}
\epsilon^{-1}M(1 - \epsilon M^{-1})^{m-1} 
\sup_{\theta \in B_{\delta_0}(\theta_0)}\|r_{\theta}(X_{(j-2)m +1}) -  p_{\theta}(X_{(j-2)m +1} \mid Y_{1}^{(j-2)m})\|_{\text{TV}}
\nonumber \\
&\overset{(iii)}{\leq}
\epsilon^{-1}M\{(1 - \epsilon M^{-1})\}^{m-1}S \rho^{(j-2)m-2}, 
\label{eq:proof:p1}
\end{align}
where $(i)$ follows from Lemma \ref{lemma:bvm3:1}, $(ii)$ follows from that $\epsilon = \underset{\theta \in \Theta}{\inf}\epsilon_{\theta}$, and from \ref{a8}\ref{c}, $h_{\theta}(Y_{i}) < M$ for $\theta \in \Theta$ and $i = 1,\ldots,n$ $\PP_0$-almost surely. And $(iii)$ follows from assumption \ref{a8}\ref{a}. Since $(1 -\epsilon M^{-1}) , \rho \in(0,1)$ $\PP_0$-almost surely, as $m \rightarrow \infty$, the right side of \eqref{eq:proof:p1} converges to zero $\PP_0$-almost surely.

\section{Proof of Theorem \ref{thm1}}
\label{pf:thm1}
We first prove Theorem \ref{thm1}\ref{thm1:1} following the arguments used for proving Theorem 1 in \citet{Lietal16} and Theorem 2.1 in \citet{de2008asymptotic}. The proof requires some essential technical changes to account for the definition of subset posterior distributions using (quasi) conditional distribution likelihood in \eqref{eq:s6} in the main manuscript. Then, we prove Theorem \ref{thm1}\ref{thm1:2} using results from Theorem 2.1 in \cite{le2000exponential}. 

\subsection{Proof of Theorem \ref{thm1}\ref{thm1:1}}
\label{apsub1}

\noindent \emph{Step 1: Show that $\hat \theta_1$ is a weakly consistent estimator of $\theta_0$.} Assumption \ref{a5} implies that $\theta_0$ is in the interior of $\Theta$, which is compact. Assumption \ref{a1} implies that $L'_{jm}(\theta)$ exists and is continuously differentiable in the compact neighborhood $B_{\delta_0}(\theta_0)$ of $\theta_0$ for every $(Y_{(j-1)m + 1}, \ldots, Y_{jm})$ such that the components of $L''_{jm}(\theta)$ are uniformly bounded by an envelope function $M(Y_{(j-1)m + 1}, \ldots, Y_{jm})$. Furthermore, $\EE_0 \left\{ L'_{jm}(\theta_0) \right\} = 0$, so there exists a root of the equation $L'_{jm}(\theta)=0$ with large $\PP_0$-probability in $B_{\delta_0}(\theta_0)$ neighborhood. Denote this root as $\hat \theta_j$. Then, the weak consistency of $\hat \theta_j$ is a consequence of Lemma \ref{lem1}. 

\vspace{5mm}

\noindent \emph{Step 2: Simplify the statement of Theorem \ref{thm1} \ref{thm1:1}.} For any $t \in \RR^d$, define
\begin{align}\label{eq:defs}
\theta_t &= \hat \theta_1 + \frac{t}{(Km)^{1/2}} ,
\\
w(t) &= L_{1m}\left\{ \hat{\theta}_1 + \frac{t}{(Km)^{1/2}} \right\} -L_{1m}(\hat{\theta}_1 ),\nonumber
\\
C_{m} &= \int_{\mathbb{R}^{d}} e^{Kw(z)} \pi \left\{ \hat{\theta}_1+ \frac{z}{(Km)^{1/2}} \right\}dz,\nonumber
\\
g_{m}(t) &= e^{Kw(t)} \pi \left\{ \hat{\theta}_1+ \frac{t}{(Km)^{1/2}} \right\} - \exp \left( -\frac{t^{\T} I_{0}t}{2} \right) \pi(\theta_0).\nonumber
\end{align}
For proving Theorem \ref{thm1} \ref{thm1:1}, we need to show that the sequence of random variable
\begin{align}
  \label{pf:1:1}
  &\zeta_{m} 
    = \int_{\mathbb{R}^{d}} \left| \frac{e^{Kw(t)} \pi \left\{ \hat{\theta}_1+ \frac{t}{(Km)^{1/2}} \right\}} {C_{m}} -  \frac{1}{(2\pi)^{d/2}\det(I_{0})^{-1/2}} \exp \left( -\frac{1}{2}t^{\T} I_{0}t \right)  \right| dt 
\end{align}
converges to 0 in $\PP_0$-probability as $m \rightarrow \infty$. Using the definition of $g_{m}(t)$ in \eqref{eq:defs},
\begin{align*}
  &\zeta_{m} 
    = \int_{\mathbb{R}^{d}} \left| \frac{g_m(t) + \exp \left( -\frac{t^{\T} I_{0}t}{2} \right) \pi(\theta_0)} {C_{m}} -  \frac{1}{(2\pi)^{d/2}\det(I_{0})^{-1/2}} \exp \left( -\frac{1}{2}t^{\T} I_{0}t \right)  \right| dt ,
\end{align*}
and the triangle inequality implies that 
\begin{align}
\label{pf:1:2}
\zeta_{m} \leq 
\frac{1}{C_{m}}\int_{\mathbb{R}^{d}} &|g_{m}(t)|dt + \left| \frac{(2\pi)^{d/2}\det(I_{0})^{-1/2}\pi(\theta_0)}{C_{m}}-1 \right|\times 
\nonumber
\\&
\quad 
\int _{\mathbb{R}^{d}} \frac{1}{(2\pi)^{d/2}\det(I_{0})^{-1/2}} \exp \left( -\frac{1}{2}t^{\T} I_{0}t \right) dt. 
\end{align}
The definition of $C_m$ in \eqref{eq:defs} also implies that
\begin{align}
  \label{eq:6}
  C_m = \int_{\mathbb{R}^{d}} \left\{ g_m(z) +  \exp \left( -\frac{z^{\T} I_{0}z}{2} \right) \pi(\theta_0) \right\} dz = \int_{\mathbb{R}^{d}} g_m(z) dz +  (2\pi)^{d/2}\det(I_{0})^{-1/2}\pi(\theta_0).
\end{align}
If $\int _{\mathbb{R}^{d}} |g_{m}(z)|dz \rightarrow 0$ in $\PP_0$-probability as $m \rightarrow \infty$, then \eqref{eq:6} implies that
\begin{align*}
  \left| C_m - (2\pi)^{d/2}\det(I_{0})^{-1/2}\pi(\theta_0) \right| \rightarrow 0 \text{ in } \PP_0\text{-probability as }  m \rightarrow \infty,
\end{align*}
which after an application of Slutsky's theorem implies that 
\begin{align}
  \label{eq:13}
  \left| \frac{(2\pi)^{d/2}\det(I_{0})^{-1/2}\pi(\theta_0)}{C_m} - 1 \right|  \rightarrow 0 \text{ in } \PP_0\text{-probability as }  m \rightarrow \infty;
\end{align}
therefore, the second term on the right hand side in \eqref{pf:1:2} converges to 0 in $\PP_0$-probability as $m \rightarrow \infty$. This also shows that proving $\zeta_{m}  \rightarrow 0$ in $\PP_0$-probability as $m \rightarrow \infty$ is equivalent to showing that $\int _{\mathbb{R}^{d}} |g_{m}(t)|dt \rightarrow 0$ in $\PP_0$-probability  as $m \rightarrow \infty$.

\vspace{5mm}

\noindent \emph{Step 3: Show that $\int _{\mathbb{R}^{d}} |g_{m}(t)|dt \rightarrow 0$ in $\PP_0$-probability  as $m \rightarrow \infty$.}
Consider a $\theta_{m}^{\ast}$ in the interior of $\Theta$ that is close to $\hat{\theta}_1$. The Taylor expansion of $L_{1m}(\theta_{m}^{\ast} )$ at $\hat{\theta}_1$ is 
\begin{align}
\label{pf:1:3}
L_{1m}(\theta_{m}^{\ast}) = 
L_{1m}(\hat{\theta}_1)+ (\theta_{m}^{\ast} - \hat{\theta}_1)^{\T} L_{1m}'(\hat{\theta}_1) + 
\frac{1}{2}  (\theta_{m}^{\ast} - \hat \theta_{1})^{\T} L''_{1m}(\overline{\theta}_m) (\theta_{m}^{\ast} - \hat \theta_{1}), 
\end{align}
where $\overline{\theta}_m$ is between $\theta_{m}^{\ast}$ and $\hat{\theta}_1$ and $L_{1m}'(\hat{\theta}_1) = 0$ because $\hat{\theta}_1$ is the MLE of $\theta$ for the first subset. Define
\begin{align*}
  \epsilon_{m}(\overline{\theta}_m) = \frac{I_{0}}{2}  + \frac{L_{1m}''(\overline{\theta}_m)}{2m} ,
\end{align*}
and rewrite \ref{pf:1:3} as 
\begin{align}
\label{pf:1:4}
L_{1m}(\theta_{m}^{\ast}) = 
L_{1m}(\hat{\theta}_1) -
\frac{m}{2}  (\theta_{m}^{\ast} - \hat \theta_1)^{\T}I_{0} (\theta_{m}^{\ast} - \hat \theta_1)
+ m (\theta_{m}^{\ast} - \hat \theta_{1})^{\T} \epsilon_{m}(\overline{\theta}_m) (\theta_{m}^{\ast} - \hat \theta_{1}).
\end{align}
If $\theta_m^{\ast}$ is a (possibly stochastic) sequence converging to $\theta_0$, then $\epsilon_{m}(\overline{\theta}_m) \rightarrow 0$ in $\PP_0$-probability 
as $m\rightarrow \infty$ due to Assumption \ref{a4}.

We use this Taylor expansion in \eqref{pf:1:4} to show that $\int _{\mathbb{R}^{d}} |g_{m}(t)|dt \rightarrow 0$ in $\PP_0$-probability  as $m \rightarrow \infty$. We start by partitioning the domain of the integral into three sets: $A_{1} = \{t \in \mathbb{R}^{d}: \|t\|_{2} \geq \delta_{1}(Km)^{1/2}\}$, $A_{2} = \{t \in \mathbb{R}^{d}: \delta_{2} \leq \|t\|_{2} < \delta_{1}(Km)^{1/2} \}$ and $A_{3} = \{t \in \mathbb{R}^{d}: \|t\|_{2} < \delta_{2} \}$, where $\delta_{1}, \delta_{2}$ are positive quantities to be chosen later. Using the triangular inequality,
\begin{align*}
  \int _{\mathbb{R}^{d}} |g_{m}(t)|dt  \leq \int_{A_{1}}|g_{m}(t)|dt +\int_{A_{2}}|g_{m}(t)|dt  + \int_{A_{3}}|g_{m}(t)|dt.
\end{align*}
The proof is complete if $\int_{A_{i}}|g_{m}(t)|dt \rightarrow 0$ in $\PP_0$-probability  as $m \rightarrow \infty$ for $i=1, 2, 3$. 

\vspace{5mm}

\noindent \emph{Step 3.1: Show that $\int _{A_1} |g_{m}(t)|dt \rightarrow 0$ in $\PP_0$-probability  as $m \rightarrow \infty$.} Let $\theta_m^{\ast} = \hat \theta_1 + t / (Km)^{1/2}$ in \eqref{pf:1:4} for some $t \in A_1$. Using Lemma \ref{lem1}, there exists an $\epsilon_1$ depending on $\delta_1$ such that
\begin{align}
  \label{eq:15}
  L_{1m}\{\hat{\theta}_1 + t / (Km)^{1/2}\} - L_{1m}(\theta_0 )  \leq -m \epsilon_{1}
\end{align}
with $\PP_0$-probability approaching 1 as $m \rightarrow \infty$. Because $\hat{\theta}_1$ is a weakly consistent estimator $\theta_0$ and $L_{1m}(\theta)$ is a continuous function of $\theta$, $L_{1m}(\hat \theta_1) \rightarrow L_{1m}(\theta_0 )$ with $\PP_0$-probability approaching 1 as $m \rightarrow \infty$. Using this in \eqref{eq:15}, we have that for any $t \in A_1$ and $\delta_1>0$ there exists an $\epsilon_1$ depending on $\delta_1$ such that 
\begin{align}
  \label{eq:16}
  w(t) = L_{1m}\{\hat{\theta}_1 + t / (Km)^{1/2}\} - L_{1m}(\hat \theta_1 )  \leq -m \epsilon_{1}  
\end{align}
with $\PP_0$-probability approaching 1 as $m \rightarrow \infty$, where $w(t)$ is defined in \eqref{eq:defs}; therefore, with $\PP_0$-probability approaching 1 as $m \rightarrow \infty$,
\begin{align*}
  \int_{A_{1}}&|g_{m}(t)|dt \leq \int_{A_{1}} e^{Kw(t)} \pi \left\{ \hat{\theta}_1+ \frac{t}{(Km)^{1/2}} \right\} dt + \int_{A_{1}} \exp \left( -\frac{t^{\T} I_{0}t}{2} \right) \pi(\theta_0) dt\\  
  &\overset{(i)}{\leq} \exp(-Km\epsilon_{1}) \int_{A_{1}} \pi \left\{ \hat{\theta}_1+ \frac{t}{(Km)^{1/2}} \right\} dt
  +  \pi(\theta_0) \int_{A_{1}} \exp \left( -\frac{t^{\T} I_{0}t}{2} \right) dt
  \\
  &\overset{(ii)}{\leq} \exp(-Km\epsilon_{1}) (Km)^{d/2} \int_{\|\theta - \hat{\theta}_{1}\|_{2}\geq \delta_{1}}\pi(\theta)d\theta  +   \pi(\theta_0) \int_{\|t\|_{2}\geq \delta_{1}(Km)^{1/2}} \exp \left( -\frac{t^{\T} I_{0}t}{2} \right) dt
  \\
  &\overset{(iii)}{\rightarrow} 0,
\end{align*}
where $(i)$ follows from \eqref{eq:16}, $(ii)$  follows by a change of variable from $t \in A_1$ to $\theta = \hat{\theta}_1+ {t} / {(Km)^{1/2}}$ in the first term of the summation, and $(iii)$ follows from two facts: $\int_{\|\theta - \hat{\theta}_1\|_{2}\geq \delta_{1}}\pi(\theta)d\theta$ is bounded because $\pi(\theta)$ is density function and  $\int_{\|t\|_{2}\geq \delta_{1}(Km)^{1/2}} \exp (-{t^{\T} I_{0}t} / {2}) dt$ converges to 0 because the integrand equals an un-normalized multivariate Gaussian density, which implies that the un-normalized tail probability converges to $0$ as $m \rightarrow \infty$.

\noindent \emph{Step 3.2: Show that $\int _{A_2} |g_{m}(t)|dt \rightarrow 0$ in $\PP_0$-probability  as $m \rightarrow \infty$.}
We first show that $|g_{m}(t)|$ is bounded above by an integrable function with large $\PP_0$-probability  as $m \rightarrow \infty$. The desired result follows from an appropriate choice of $\delta_1$ and $\delta_2$.
Consider the Taylor expansion of $L_{1m}(\theta)$ at $\hat \theta_1$ and set $\theta = \hat \theta_1 + t / (Km)^{1/2}$ any $t \in A_2$. Because $L_{1m}(\hat \theta_1)=0$, 
\begin{align}
  \label{eq:17}
  w(t) &= L_{1m} \{\hat \theta_1 + t / (Km)^{1/2}\}  - L_{1m}(\hat \theta_1) = \frac{1}{2K} t^\T \frac{L_{1m}''(\hat \theta_1)}{m} t + R_m(t), \nonumber\\
  R_m(t) &= \frac{1}{6 (Km)^{3/2}}t^\T \left\{ L_{1m}'''(\tilde \theta) t \right\} t ,
\end{align}
for any $t \in A_2$, where $L_{1m}'''(\theta)$ is a $d\times d \times d$ array and $\tilde \theta$ satisfies $\| \tilde \theta - \hat \theta_1 \|_2 \leq \| t \|_2 / (Km)^{1/2} \leq \delta_1$, where the last inequality follows because $\delta_2 \leq \| t \|_2 \leq \delta_1 (Km)^{1/2}$. Furthermore, $\| \tilde \theta - \theta_0 \|_2  \leq \| \tilde \theta - \hat \theta_1 \|_2 + \| \hat \theta_1 - \theta_0 \|_2 \leq \delta_1 + \| \hat \theta_1 - \theta_0 \|_2 $. If we choose $\delta_1 \leq \delta_0 / 3$   and $m$ to be large enough so that $\| \hat \theta_1 - \theta_0 \|_2 \leq \delta_0 /3$ with $\PP_0$-probability approaching 1, then $\| \tilde \theta - \theta_0 \|_2  < \delta_0$ for every $t \in A_2$ with $\PP_0$-probability approaching 1 as $m \rightarrow \infty$. For  a given $t \in A_2$,
\begin{align}
  \label{eq:18}
  |R_m(t)| &\leq \frac{\| t \|_2^3}{6 K^{3/2} m^{1/2}} \sum_{l_3=1}^d\sum_{l_2=1}^d\sum_{l_1=1}^d \left| \frac{1}{m} \{L'''_{1m}(\tilde \theta)\}_{l_1 l_2 l_3} \right| \nonumber\\
             &\overset{(i)}{\leq}
               \frac{\| t \|_2^3}{6 K^{3/2} m^{1/2}} \sum_{l_3=1}^d\sum_{l_2=1}^d\sum_{l_1=1}^d  \frac{1}{m} \underset{\theta \in B_{\delta_0}(\theta_0)}{\sup} \left| \{L'''_{1m}(\theta)\}_{l_1 l_2 l_3} \right| \nonumber\\
           &\overset{(ii)}{\leq} \frac{d^3 \| t \|_2^3 }{6 K^{3/2} m^{1/2}} \, \frac{1}{m} \, M(Y_{1}, \ldots, Y_m) \overset{(iii)}{\leq} \frac{d^3 \delta_1}{6 K} \| t \|_2^2  \frac{1}{m} M(Y_{1}, \ldots, Y_m)
\nonumber \\ &
               \overset{(iv)}{\rightarrow} \frac{d^3 \delta_1}{6 K} \| t \|_2^2  \mathcal{C}_{M} \quad \PP_0\text{-almost surely;}
\end{align}
where $(i)$ follows because $\tilde \theta \in B_{\delta_0}(\theta_0)$ with a large $\PP_0$-probability as $m \rightarrow \infty$, $(ii)$ follows from Assumption \ref{a1}, $(iii)$ follows because $\| t \|_2 \leq \delta_1 (Km)^{1/2}$ and $(iv)$ also follows from Assumption \ref{a1}. By Assumption \ref{a4}, eigenvalues of $-\frac{1}{m}L_{1m}''(\theta)$ are bounded from blow and above by constants for all $\theta \in B_{\delta_0}(\theta_0)$. Hence we choose
  \begin{align}
    \label{eq:del1}
    \delta_1 = \min\left[ \frac{\delta_0}{3},
    \frac{6 \min_{\theta \in B_{\delta_0}(\theta_0)}\lambda_1\{-\frac{1}{m}L_{1m}''(\theta)\}} {4d^{3}\mathcal{C}_{M}}
    \right],
  \end{align}
where $\lambda_{1}(B)$ denotes the smallest eigenvalue of matrix $B$.  With this choice of $\delta_1$, \eqref{eq:17} and \eqref{eq:18} imply that for every  $t \in A_{2}$ and with $\PP_0$-probability approaching 1 as $m \rightarrow \infty$, 
\begin{align}
|KR_m(t)| &\leq  \min_{\theta \in B_{\delta_0}(\theta_0)} \lambda_1 \{  - L_{1m}''(\theta)\} \; t^\T t / (4m)  \leq - t^\T L_{1m}''(\hat \theta_1) t / (4m), \nonumber\\
  \label{eq:20}
  K w(t) &\leq  \frac{1}{2} t^\T \frac{L_{1m}''(\hat \theta_1)}{m} t + K |R_m(t)| \leq \frac{1}{2} t^\T \frac{L_{1m}''(\hat \theta_1)}{m} t - \frac{1}{4} t^\T \frac{L_{1m}''(\hat \theta_1)}{m} t =
   \frac{1}{4} t^\T \frac{L_{1m}''(\hat \theta_1)}{m} t .
\end{align}
We use \eqref{eq:20} to find an upper bound for $|g(t)|$ that is integrable on $A_2$. By assumption \ref{a4}, $-m^{-1} L_{1m}''(\hat{\theta}_{1}) \rightarrow I_{0}$ in $\PP_0$-probability as $m \rightarrow \infty$.  This implies that $\exp(K w(t)) \leq \exp (- t^\T I_0t / 4 )$ with a large $\PP_0$-probability as $m \rightarrow \infty$; therefore, with a large $\PP_0$-probability as $m \rightarrow \infty$,
\begin{align}
  \label{eq:21}
  \int_{A_2} | g_{m}(t) | dt &\overset{(i)}{\leq} \int_{A_2}  \exp \left( -\frac{t^{\T} I_{0}t}{4} \right) \pi \left\{ \hat{\theta}_1+ \frac{t}{(Km)^{1/2}} \right\} dt + \int_{A_2} \exp \left( -\frac{t^{\T} I_{0}t}{2} \right)  \pi(\theta_0) dt\nonumber\\
                             &\leq \int_{A_2}  \exp \left( -\frac{t^{\T} I_{0}t}{4} \right) \pi \left\{ \hat{\theta}_1+ \frac{t}{(Km)^{1/2}} \right\} dt + \int_{A_2} \exp \left( -\frac{t^{\T} I_{0}t}{4} \right)  \pi(\theta_0) dt\nonumber\\
                             &\overset{(ii)}{\leq} \left\{ \underset{\theta \in \Theta}{\sup} \, \pi(\theta) \right\} \int_{A_2}  2 \exp \left( -\frac{t^{\T} I_{0}t}{4} \right) dt \overset{(iii)}{<} \infty,
\end{align} 
where $(i)$ follows from triangular inequality and \eqref{eq:20}, $(ii)$ is because $\theta_0, \hat{\theta}_1+ {t} / {(Km)^{1/2}} \in \Theta$, $\Theta$ is compact from Assumption \ref{a5}, and $\pi(\cdot)$ is continuous on $\Theta$ from Assumption \ref{a6}, and $(iii)$ follows because the continuity of $\pi(\cdot)$ and the compactness of $\Theta$ imply that $\pi(\theta)$ is bounded for every $\theta \in \Theta$ and the integrand equals an un-normalized Gaussian density, which has a finite integral over $\RR^d$.  We now can choose $\delta_2$ sufficiently large such that $\int_{A_2} | g_{m}(t) | dt$ is arbitrarily small in $\PP_0$-probability.

\vspace{5mm}

\noindent \emph{Step 3.3: Show that $\int _{A_3} |g_{m}(t)|dt \rightarrow 0$ in $\PP_0$-probability  as $m \rightarrow \infty$.} 
Using \eqref{eq:18} and Assumption \ref{a1}, we have that for any $\|t\|_{2} < \delta_{2}$ and a sufficiently large $m$,
\begin{align}
\label{eq:25}
\sup_{\|t\|_{2} < \delta_{2}} |KR_{m}(t)| 
& \leq 
\sup_{\|t\|_{2} < \delta_{2}} \frac{d^3 \| t \|_2^3 }{6 K^{1/2} m^{1/2}} \, \frac{1}{m} \, M(Y_{1}, \ldots, Y_m) 
\nonumber
\\
&
\leq
\frac{d^3\delta_2^3 }{6 K^{1/2} m^{1/2}} \, \frac{1}{m} \, M(Y_{1}, \ldots, Y_m) \rightarrow 0,
\end{align}
where the convergence is almost surely in $\PP_0$-probability. The Taylor expansion in \eqref{eq:17} yields that as $m \rightarrow \infty$,
\begin{align}
\label{eq:261}
\sup_{t \in A_{3}}\exp\left\{ \left|Kw(t) + \frac{1}{2}t^{\top}I_0 t \right| \right\}
&
= \sup_{\|t\|_{2} < \delta_{2}}\exp\left\{ \left| \frac{1}{2} t^\T \frac{L_{1m}''(\hat \theta_1)}{m} t + KR_m(t) + \frac{1}{2}t^{\top}I_0 t \right|\right\}
\nonumber\\
&\leq\exp\left\{ \frac{\delta_{2}^{2}}{2} \left \|\frac{L_{1m}''(\hat \theta_1)}{m} + I_{0} \right\|_{\text{op}} \right\}\exp \left\{ \sup_{\|t\|_{2} < \delta_{2}}|KR_{m}(t)|\right\}
\nonumber\\
&\overset{(i)}{\rightarrow} 1, \text{ in } \PP_0\text{-probability},
\end{align}
where $\| \cdot \|_{\text{op}}$ is the operator norm and $(i)$ follows from ${L_{1m}''(\hat \theta_1)}/{m} + I_{0} \rightarrow 0$ in $\PP_0$-probability as $m \rightarrow \infty$ using Assumption \ref{a4} and weak consistency of $\hat \theta_1$, and $\sup_{\|t\|_{2} < \delta_{2}}|KR_{m}(t)|\rightarrow 0$ in $\PP_0$-probability as $m \rightarrow \infty$ using \eqref{eq:25}. The weak consistency of $\hat \theta_1$ and continuity of $\pi(\cdot)$ from Assumption \ref{a6} imply that for any fixed $\delta_2>0$
\begin{align}
  \label{eq:23}
  \underset{\|t\|_2 \leq \delta_2}{\sup} \left| \pi \left\{ \hat{\theta}_1+ \frac{t}{(Km)^{1/2}} \right\} - \pi(\theta_0) \right| \rightarrow 0 \text{ as } m \rightarrow \infty
\end{align}
in $\PP_0$-probability. Combining \eqref{eq:261} and \eqref{eq:23}, we get that
\begin{align}
  \label{eq:24}
  \int _{A_3} |g_{m}(t)|dt \leq C \int_{A_3} &\underset{\|t\|_2 \leq \delta_2}{\sup} \left[ \exp \left\{ Kw(t) + \frac{1}{2}  t^{\T} I_{0} t \right\} -1 \right] 
\times 
\nonumber
\\
&
\underset{\|t\|_2 \leq \delta_2}{\sup} \left| \pi \left\{ \hat{\theta}_1+ \frac{t}{(Km)^{1/2}} \right\} - \pi(\theta_0) \right| dt,
\end{align}
where $C$ is a universal constant; therefore, $\int _{A_3} |g_{m}(t)|dt \rightarrow 0$ in $\PP_0$-probability as $m \rightarrow \infty$ because the first and second terms in the integrand converge to 0 uniformly from \eqref{eq:261} and \eqref{eq:23}.

\vspace{5mm}

\noindent \emph{Step 4: Show that $\int _{\mathbb{R}^{d}} |g_{m}(t)|dt \rightarrow 0$ in $\PP_0$-probability  as $m \rightarrow \infty$.}
Using steps 3.1 to 3.3, choose  $\delta_1$ as defined in \eqref{eq:del1} and  $\delta_{2}$ large enough such that $\int _{A_2} |g_{m}(t)|dt \rightarrow 0$ in $\PP_0$-probability  as $m \rightarrow \infty$. This implies using   the triangle inequality that
\begin{align*}
  \int _{\mathbb{R}^{d}} |g_{m}(t)|dt  \leq \int_{A_{1}}|g_{m}(t)|dt +\int_{A_{2}}|g_{m}(t)|dt  + \int_{A_{3}}|g_{m}(t)|dt \rightarrow 0
\end{align*}
in $\PP_{0}$-probability as $m \rightarrow \infty$. The proof is complete. \hfill $\square$

\subsection{Proof of Theorem \ref{thm1}\ref{thm1:2}}
\label{apsub2}

We first prove Theorem \ref{thm1}\ref{thm1:2} for the second subset posterior distribution and then prove for $j$th subset posterior distribution defined in \eqref{eq:s6} in the main manuscript  $(j=3,\ldots,K)$.

\vspace{5mm}

\noindent
\emph{Step 1: Define two log-likelihood functions.} We define the two log-likelihood functions of $\theta$ that correspond to two different marginal distributions of $X_{m+1}$. Let $Y_{1}^{m}$ and $Y_{m+1}^{2m}$ denote the first and second data subsets, respectively, $w_{1}(\theta)$ be the log-likelihood of $\theta$ given $Y_{m+1}^{2m}$, and $w_2(\theta)$ be the conditional  log-likelihood of $\theta$ obtained from the conditional density of $Y_{m+1}^{2m}$ given $Y_1^m$. Then,
\begin{align}
  \exp\{w_{1}(\theta)\} &= p_{\theta} (Y_{m+1}^{2m}) = \sum_{X_{m+1}^{2m} \in \Xcal^m} r_{\theta}(X_{m+1}) \prod_{t=m+1}^{2m-1} q_{\theta}(X_{t}, X_{t+1}) \prod_{t=m+1}^{2m} g_{\theta}(Y_{t} \mid X_{t}), \label{eq:28}\\
  \exp\{w_{2}(\theta)\} &= p_{\theta}(Y_{m+1}^{2m} \mid Y_1^m)  = \sum_{X_{m+1}^{2m} \in \Xcal^m}  p_{\theta}(X_{m+1} \mid Y_1^m)\prod_{t=m+1}^{2m-1}q_{\theta}(X_{t},X_{t+1}) \prod_{t=m+1}^{2m} g_{\theta}(Y_{t} \mid X_{t}). \label{eq:29}
\end{align}
If the marginal distribution of $X_{m+1}$ is $r_{\theta}(\cdot)$, then $\exp\{w_{1}(\theta)\}$ is obtained by marginalizing over  $X_{m+1}^{2m}$ in the joint distribution of $(Y_{m+1}^{2m}, X_{m+1}^{2m})$. Similarly, we recover $\exp\{w_{2}(\theta)\}$ after marginalizing over $X_{m+1}^{2m}$ from the conditional distribution of $(Y_{m+1}^{2m}, X_{m+1}^{2m})$ given $Y_1^m$. This is also equivalent to the marginal density of $Y_{m+1}^{2m}$ obtained from the joint distribution of $(Y_{m+1}^{2m}, X_{m+1}^{2m})$, where the marginal density of $X_{m+1}$ is set to be $p_{\theta}(X_{m+1} \in \cdot \mid Y_1^m)$.
We re-express $\exp\{w_{1}(\theta)\}$ and $\exp\{w_{2}(\theta)\}$ in terms of the prediction filter in \eqref{eq:27} using Lemma \ref{lemma:addi:lik} as
\begin{align}
 \exp\{w_{1}(\theta)\} &= \prod_{t = m+1}^{2m} \sum_{a=1}^S g_{\theta}(Y_{t} \mid X_{t} = a) \, p^{\theta}_{t}(a),
\nonumber
\\
\exp\{w_{2}(\theta)\}& = \prod_{t = m+1}^{2m} \sum_{a=1}^S g_{\theta}(Y_{t} \mid X_{t} = a) \, \tilde{p}^{\theta}_{t}(a), \label{eq:2loglik}
\end{align}
where \begin{align*}
p_{m+1}^{\theta}(X_{m+1}) = r_{\theta}(X_{m+1}),\quad \tilde{p}_{m+1}^{\theta}(X_{m+1}) = p_{\theta}(X_{m+1} \mid Y_1^m),
\end{align*}
and for $t = m+2,\ldots,2m$
\begin{align*} 
 p_{t}^{\theta}(X_{t}) = f^{\theta}_{t - m - 2} \{Y^{t-1}_{m+1}, r_{\theta}(X_{m+1}  )\}, \quad
 \tilde{p}_{t}^{\theta}(X_{t}) = f^{\theta}_{t - m - 2}\{Y^{t-1}_{m+1}, p_{\theta}( X_{m+1}  \mid Y_1^m)\},
\end{align*}
where $f^{\theta}_{t-m-2}$ is defined in \eqref{eq:27}.
\vspace{5mm}

\noindent
\emph{Step 2: Show that $\sup_{\theta \in \Theta}|w_{1}(\theta)- w_{2}(\theta)| \leq
 \tilde  C \rho^{m-2}$ $\PP_0$-almost surely for some constant $\tilde C$.} 

For any $\theta \in\Theta$ we have that
\begin{align}\label{eq:log-liks-filt}
  &|w_{1}(\theta) - w_{2}(\theta)|
\nonumber
\\
& 
= \left| \sum_{t=m+1}^{2m} \left[ \log \left\{ \sum_{a=1}^S p_{t}^{\theta}(a)g_{\theta}(Y_{t} \mid X_{t} = a)  \right\}  
- 
\log \left\{ \sum_{a=1}^S \tilde{p}_{t}^{\theta}(a) g_{\theta}(Y_{t} \mid X_{t} = a) \right\}  \right]  \right| 
  \nonumber\\
  &\overset{(i)}{\leq} \sum_{t=m+1}^{2m} \left| \log \left\{ \sum_{a=1}^S p_{t}^{\theta}(a)g_{\theta}(Y_{t} \mid X_{t} = a)  \right\}  - \log \left\{ \sum_{a=1}^S \tilde{p}_{t}^{\theta}(a) g_{\theta}(Y_{t} \mid X_{t} = a) \right\}  \right| 
  \nonumber\\
  &\overset{(ii)}{\leq}\sum_{t=m+1}^{2m}\left\{ h_{\theta}(Y_{t}) - 1 \right\} \|p_{t}^{\theta} - \tilde{p}_{t}^{\theta}\|_{\text{TV}} \nonumber\\
  &= \left\{ h_{\theta}(Y_{m+1}) - 1 \right\} \| r_{\theta}(X_{m+1}) - p_{\theta}(X_{m+1} \mid Y_1^m )\|_{\text{TV}} + 
    \left\{ h_{\theta}(Y_{m+2}) - 1 \right\} \|p_{m+2}^{\theta} - \tilde{p}_{m+2}^{\theta}\|_{\text{TV}} \nonumber\\
  & \quad \quad + \sum_{t=m+3}^{2m}\left\{ h_{\theta}(Y_{t}) - 1 \right\} \|p_{t}^{\theta} - \tilde{p}_{t}^{\theta}\|_{\text{TV}}
  \nonumber\\
  &\overset{(iii)}{\leq}    \left\{ h_{\theta}(Y_{m+1}) - 1 \right\} \| r_{\theta}(X_{m+1}) - p_{\theta}(X_{m+1} \mid Y_1^m )\|_{\text{TV}} 
\nonumber
\\
& \quad \quad+
    \left\{ h_{\theta}(Y_{m+2}) - 1 \right\} \epsilon_{\theta}^{-1} h_{\theta}(Y_{m+2}) \| r_{\theta}(X_{m+1}) 
p_{\theta}(X_{m+1} \mid Y_1^m )\|_{\text{TV}} 
\nonumber
\\ 
& \quad \quad +
\sum_{t=m+3}^{2m}\left\{ h_{\theta}(Y_{t}) - 1 \right\} \epsilon_{\theta}^{-1} h_{\theta}(Y_t) \prod_{i=m+2}^{t-1} \{1 - \epsilon_{\theta} h_{\theta}(Y_i)^{-1}\} 
\| r_{\theta}(X_{m+1}) - p_{\theta}(X_{m+1} \mid Y_1^m )\|_{\text{TV}}
  \nonumber\\
  &\overset{(iv)}{\leq}    \left\{ h_{\theta}(Y_{m+1}) - 1 \right\} \epsilon_{\theta}^{-1} h_{\theta}(Y_{m+1}) \| r_{\theta}(X_{m+1}) - p_{\theta}(X_{m+1} \mid Y_1^m )\|_{\text{TV}}  \nonumber\\
  &\quad \quad  +\left\{ h_{\theta}(Y_{m+2}) - 1 \right\} \epsilon_{\theta}^{-1} h_{\theta}(Y_{m+2}) \| r_{\theta}(X_{m+1}) - p_{\theta}(X_{m+1} \mid Y_1^m )\|_{\text{TV}} 
  \nonumber\\
  &\quad \quad +\sum_{t=m+3}^{2m}\left\{ h_{\theta}(Y_{t}) - 1 \right\} \epsilon_{\theta}^{-1} h_{\theta}(Y_t) \prod_{i=m+2}^{t-1} \{1 - \epsilon h_{\theta}(Y_i)^{-1}\} 
   \| r_{\theta}(X_{m+1}) - p_{\theta}(X_{m+1} \mid Y_1^m )\|_{\text{TV}}, 
\end{align}
where $(i)$ follows from the triangle inequality, $(ii)$ follows from Lemma  \ref{lemma:bvm3:2}, $(iii)$ follows from Lemma \ref{lemma:bvm3:1}, and $(iv)$ follows from from Lemma \ref{lemma:bvm3:1} because $\epsilon^{-1} h_{\theta}(Y_{m+1}) > 1$.  Assumption \ref{a8} implies that $\sup_{\theta \in \Theta} h_{\theta}(Y_{t}) \leq M$ for $1\leq t \leq n$ $\PP_0$-almost surely. The right side of \eqref{eq:log-liks-filt} is a decreasing function of $\epsilon_{\theta}$ and $\epsilon = \inf_{\theta \in \Theta}\epsilon_{\theta} > 0 $. Hence, with $\PP_0$-probability almost 1, by taking $\sup_{\theta \in \Theta}$ over  
the right hand side of \eqref{eq:log-liks-filt} we have that
\begin{align}
& \sup_{\theta \in \Theta} |w_{1}(\theta) - w_{2}(\theta)|
  \leq  \frac{1}{\epsilon} \sup_{\theta \in\Theta } \| r_{\theta}(X_{m+1}) - p_{\theta}(X_{m+1} \mid Y_1^m )\|_{\text{TV}} \times 
\nonumber \\
& \quad \quad
\sup_{\theta \in \Theta}
\left[
\left\{ h_{\theta}(Y_{m+1}) - 1 \right\} h_{\theta}(Y_{m+1}) + \left\{ h_{\theta}(Y_{m+2}) - 1 \right\} h_{\theta}(Y_{m+2}) \right.
\nonumber
\\
&
\left. \quad \quad
+ \sum_{t=m+3}^{2m}\left\{ h_{\theta}(Y_{t}) - 1 \right\} \epsilon^{-1} h_{\theta}(Y_t) \prod_{i=m+2}^{t-1} \{1 - \epsilon h_{\theta}(Y_i)^{-1}\} 
 \right] 
\nonumber \\
&\leq  \frac{1}{\epsilon}\sup_{\theta \in \Theta} \| r_{\theta}(X_{m+1}) - p_{\theta}(X_{m+1} \mid Y_1^m )\|_{\text{TV}} \times 
\nonumber
\\
& \quad \quad
\left\{ 2M(M-1) + 
 \sum_{t=m+3}^{2m} M(M-1) \prod_{i=m+2}^{t-1} (1 - \epsilon M^{-1}) \right\} \nonumber \\
  &\leq \frac{1}{\epsilon}  M(M-1)  \sup_{\theta \in \Theta} \| r_{\theta}(X_{m+1}) - p_{\theta}(X_{m+1} \mid Y_1^m )\|_{\text{TV}} \left\{ 2 + \sum_{t=1}^{\infty} (1 - \epsilon M^{-1})^{t}\right\} \nonumber \\
  &\leq \frac{1}{\epsilon^2} M(M^{2}-1)\sup_{\theta \in \Theta}  \| r_{\theta}(X_{m+1}) - p_{\theta}(X_{m+1} \mid Y_1^m )\|_{\text{TV}}
\nonumber \\
& \overset{(i)}{\leq} 
 \frac{1}{\epsilon^2} M(M^{2}-1)S \rho^{m-2} = \tilde C \rho^{m-2}
 ,\label{eq:32}
\end{align}
where $(i)$ follows from assumption \ref{a8}\ref{a}: $\sup_{\theta \in \Theta}\| r_{\theta}(X_{m+1}) - p_{\theta}(X_{m+1} \mid Y_1^m )\|_{\text{TV}} < S\rho^{m-2}$ $\PP_0$-almost surely.

\vspace{5mm}
\noindent
\emph{Step 3: Show that the total variation distance of subset quasi posterior distributions induced by likelihood $\exp\{w_1(\theta)\}$ and $\exp\{w_2(\theta)\}$ tends to 0 in $\PP_0$-probability as $m$ tends to infinity.} 

The definition of $w_1(\theta)$ implies that $w_1(\theta) = L_{2m}(\theta)$, where $L_{2m}(\theta)$ is the log-likelihood of $\theta$ on the second subset $Y_{m+1}^{2m}$. Let $\hat{\theta}_{2}$ denote the maximum likelihood estimator of $\theta$ that solves $L_{2m}'(\theta) = 0$ or $w_{1}'(\theta) = 0$. For large $m$, $\hat{\theta}_{2} \in B_{\delta_0}(\theta_0)$, hence the limiting set of local parameters $ \lim_{m \rightarrow \infty} \{t = (Km)^{1/2} (\theta - \hat \theta_2): \theta \in \Theta\} = \RR^{d}$. Then the densities of the second subset posterior distributions of $t = (Km)^{1/2} (\theta - \hat \theta_2)$ with likelihoods $\exp\{w_1(\hat \theta_2 + t / (Km)^{1/2})\}$ and $\exp\{w_2(\hat \theta_2 + t / (Km)^{1/2})\}$ are
\begin{align*}
  \tilde \pi_{w_1}(t \mid Y_{m+1}^{2m}) &= C_{1m}^{-1}\, {e^{K\left  [w_1\left\{\hat \theta_2 + \tfrac{t} {(Km)^{1/2}}\right\}- w_{1}\left\{ \hat{\theta}_2\right\}  \right]}\pi \left\{\hat \theta_2 + \tfrac{t}{(Km)^{1/2}} \right\}}, \nonumber\\ 
  \tilde \pi_{w_2}(t \mid Y_{m+1}^{2m}) &= C_{2m}^{-1}\, {e^{K \left [w_2\left\{\hat \theta_2 + \tfrac{t} {(Km)^{1/2}}\right\}- w_{2}\left\{ \hat{\theta}_2\right\}  \right]}\pi \left\{\hat \theta_2 + \tfrac{t}{(Km)^{1/2}} \right\}}, \nonumber\\   
  C_{1m} &= \int_{\RR^d} e^{K\left [w_1\left\{\hat \theta_2 + \tfrac{z} {(Km)^{1/2}}\right\} - w_{1}\left\{ \hat{\theta}_2\right\}  \right]}\pi \left\{\hat \theta_2 + \tfrac{z}{(Km)^{1/2}} \right\} dz, \\
  C_{2m} &= \int_{\RR^d} e^{K\left [w_2\left\{\hat \theta_2 + \tfrac{z} {(Km)^{1/2}}\right\}- w_{2}\left\{ \hat{\theta}_2\right\}  \right]}\pi \left\{\hat \theta_2 + \tfrac{z}{(Km)^{1/2}} \right\} dz
\end{align*}
The total variation distance between subset posterior distributions of $\theta$ with the likelihoods $\exp\{w_1(\theta)\}$ and $\exp\{w_2(\theta)\}$ equals $\int_{\RR^d} \left| \tilde \pi_{w_1}(t \mid Y_{m+1}^{2m})  - \tilde \pi_{w_2}(t \mid Y_{m+1}^{2m})  \right| dt$; therefore, we want to prove that 
\begin{align}
  \int_{\RR^{d}} &\left| C_{1m}^{-1}\, e^{K\left [w_1\left\{\hat \theta_2 + \tfrac{t} {(Km)^{1/2}}\right\}- w_{1}\left\{ \hat{\theta}_2\right\} \right ] } - C_{2m}^{-1}\, e^{K \left [w_2\left\{\hat \theta_2 + \tfrac{t} {(Km)^{1/2}}\right\}- w_{2}\left\{ \hat{\theta}_2\right\} \right ] }  \right|
 \times
\nonumber
\\
& \quad \quad \quad 
\pi \left\{\hat \theta_2 + \tfrac{t}{(Km)^{1/2}} \right\}  dt \rightarrow 0
  \label{eq:ax1}
\end{align}
in $\PP_0$-probability as $m \rightarrow \infty$. An application of triangular inequality implies that an upper bound for the integral in \eqref{eq:ax1} is 
\begin{align}
  \label{eq:ax2}
  &C_{1m}^{-1} \int_{\RR^d} \left|e^{K  \left [w_1\left\{\hat \theta_2 + \tfrac{t} {(Km)^{1/2}}\right\}- w_{1}\left\{ \hat{\theta}_2\right\} \right ] } -  e^{K  \left [w_2\left\{\hat \theta_2 + \tfrac{t} {(Km)^{1/2}}\right\}- w_{2}\left\{ \hat{\theta}_2\right\} \right ] }  \right| \pi \left\{\hat \theta_2 + \tfrac{t}{(Km)^{1/2}} \right\}   dt  \nonumber\\
  & + \left| C_{1m}^{-1} C_{2m} - 1 \right| \int_{\RR^d} C_{2m}^{-1} \, e^{K  \left[w_2\left\{\hat \theta_2 + \tfrac{t} {(Km)^{1/2}}\right\}- w_{2}\left\{ \hat{\theta}_2\right\} \right ]} \pi \left\{\hat \theta_2 + \tfrac{t}{(Km)^{1/2}} \right\} dt,
\end{align}
and we will show that the upper bound in \eqref{eq:ax2} tends to 0 in $\PP_0$-probability as $m \rightarrow \infty$.

First, we show that $C_{1m}$ is bounded away from 0 with a large $\PP_0$-probability as $m \rightarrow \infty$. Because $w_1(\theta) = L_{2m}(\theta)$, we have that 
$w_1 \left\{ \hat \theta_2 + \tfrac{t}{(Km)^{1/2}} \right\} = L_{2m} \left\{ \hat \theta_2 + \tfrac{t}{(Km)^{1/2}} \right\}$ and that 
\begin{align}
  \label{eq:ax3}
  C_{1m} \rightarrow (2 \pi)^{d/2} \det(I_0)^{-1/2} \pi(\theta_0)  
\end{align}
in $\PP_0$-probability as $m \rightarrow \infty$ based on the same arguments used to prove \eqref{eq:13}; therefore, $C_{1m}^{-1}$ is finite with a large $\PP_0$-probability as $m \rightarrow \infty$.

Second, we find a sufficient condition which guarantees that  $|C_{1m}^{-1} C_{2m} - 1 | \rightarrow 0$ in $\PP_0$-probability as $m \rightarrow \infty$.
We have that
\begin{align}
  \label{eq:ax5}
  &|C_{2m} - C_{1m}| 
\nonumber
\\
&=   \left| \int_{\RR^d} e^{K  \left [w_1\left\{\hat \theta_2 + \tfrac{t} {(Km)^{1/2}}\right\}- w_{1}\left\{ \hat{\theta}_2\right\} \right ] } -  e^{K  \left [w_2\left\{\hat \theta_2 + \tfrac{t} {(Km)^{1/2}}\right\}- w_{2}\left\{ \hat{\theta}_2\right\} \right ] }   \pi \left\{\hat \theta_2 + \tfrac{t}{(Km)^{1/2}} \right\}   dt \right|,\nonumber\\
  &\leq \int_{\RR^d} \left|e^{K   \left [w_1\left\{\hat \theta_2 + \tfrac{t} {(Km)^{1/2}}\right\}- w_{1}\left\{ \hat{\theta}_2\right\} \right ] } -  e^{K   \left [w_2\left\{\hat \theta_2 + \tfrac{t} {(Km)^{1/2}}\right\}- w_{2}\left\{ \hat{\theta}_2\right\} \right ] }  \right| \pi \left\{\hat \theta_2 + \tfrac{t}{(Km)^{1/2}} \right\}   dt
\end{align}
where the last term is the first integral in \eqref{eq:ax2}; therefore, it is sufficient to show that the first integral in \eqref{eq:ax2} tends to 0 in $\PP_0$-probability as $m \rightarrow \infty$ because $C_{1m}$ is bounded away from zero with a large $\PP_0$-probability as $m \rightarrow \infty$. 

Thirdly, we find a sufficient condition which gaurantees that the second term of the upper bound in \eqref{eq:ax2} tends to 0 in $\PP_0$-probability as $m \rightarrow \infty$. If the first integral in \eqref{eq:ax2} tends to 0 in $\PP_0$-probability as $m \rightarrow \infty$, then $|C_{1m}^{-1} C_{2m} - 1 | \rightarrow 0$ and $|C_{2m} - C_{1m}| \rightarrow 0$ in $\PP_0$-probability as $m \rightarrow \infty$, and by triangle inequality we have that
\begin{align}
  \label{eq:ax6}
&\int_{\RR^d} e^{K  \left [w_2\left\{\hat \theta_2 + \tfrac{t} {(Km)^{1/2}}\right\}- w_{2}\left\{ \hat{\theta}_2\right\} \right ] }\pi \left\{\hat \theta_2 + \tfrac{t}{(Km)^{1/2}} \right\}  dt \nonumber\\
&\leq C_{1m} +  \int_{\RR^d} \left|e^{K   \left [w_1\left\{\hat \theta_2 + \tfrac{t} {(Km)^{1/2}}\right\}- w_{1}\left\{ \hat{\theta}_2\right\} \right ] } -  e^{K   \left [w_2\left\{\hat \theta_2 + \tfrac{t} {(Km)^{1/2}}\right\}- w_{2}\left\{ \hat{\theta}_2\right\} \right ] }  \right| \times  
\nonumber
\\
& \quad \quad 
\pi \left\{\hat \theta_2 + \tfrac{t}{(Km)^{1/2}} \right\}   dt
\rightarrow  (2 \pi)^{d/2} \det(I_0)^{-1/2} \pi(\theta_0)
\end{align}
in $\PP_0$-probability as $m \rightarrow \infty$, which implies that  the second integral in \eqref{eq:ax2} is bounded in $\PP_0$-probability as $m \rightarrow \infty$. By Slutsky's theorem we have that 
\begin{align}
  \label{eq:ax7}
\left| C_{1m}^{-1} C_{2m} - 1 \right| \int_{\RR^d} C_{2m}^{-1} \, e^{K  \left[w_2\left\{\hat \theta_2 + \tfrac{t} {(Km)^{1/2}}\right\}- w_{2}\left\{ \hat{\theta}_2\right\} \right ]} \pi \left\{\hat \theta_2 + \tfrac{t}{(Km)^{1/2}} \right\} dt \rightarrow  0 
\end{align}
in $\PP_0$-probability as $m \rightarrow \infty$; therefore, it is sufficient to show that in $\PP_0$-probability as $m \rightarrow \infty$,
\begin{align}
\label{eq:ax7.5}
 \int_{\RR^d} \left|e^{K  \left [w_1\left\{\hat \theta_2 + \tfrac{t} {(Km)^{1/2}}\right\}- w_{1}\left\{ \hat{\theta}_2\right\} \right ] } -  e^{K  \left [w_2\left\{\hat \theta_2 + \tfrac{t} {(Km)^{1/2}}\right\}- w_{2}\left\{ \hat{\theta}_2\right\} \right ] }  \right| \pi \left\{\hat \theta_2 + \tfrac{t}{(Km)^{1/2}} \right\}   dt \rightarrow 0.
\end{align}

\vspace{5mm}

\noindent
\emph{Step 4: Show that the integral in \eqref{eq:ax7.5} tends to 0 in $\PP_0$-probability as $m \rightarrow \infty$.}  

For any  $t \in \RR^d$, let $\theta^{\ast} = \hat{\theta}_2 + \frac{t}{(Km)^{1/2}}$.  
Using \eqref{eq:32} in step 2, as $ m \rightarrow \infty$, with $\PP_0$ probability tending to 1,
\begin{align}
  \label{eq:ax14}
  &\left| K w_{1}(\theta^\ast) - K w_{2}(\theta^\ast) \right|
  \leq   K \underset{\theta \in \Theta}{\sup} \left|  w_{1}(\theta) -  w_{2}(\theta)\right| 
\nonumber \\
&\leq  
    \tilde C\rho^{-2}K \rho^m.
\end{align}
If $K = o( \rho^{-m})$, as $m \rightarrow \infty $, $\left| K w_{1}(\theta^\ast) - K w_{2}(\theta^\ast) \right| \rightarrow 0$ with $\PP_0$-probability tending to 1. Let $t = 0$, we have that  $\left| K w_{1}(\hat{\theta}_2) - K w_{2}(\hat{\theta}_2) \right| \rightarrow 0$ with $\PP_0$-probability tending to 1. By triangle inequality, as $m \rightarrow \infty$, $\left| K [w_{1}(\theta^\ast) - w_{1}(\hat{\theta}_2)]- K [w_{2}(\theta^\ast) - w_{2}(\hat{\theta}_2)]\right| \rightarrow 0$ with $\PP_0$-probability tending to 1. Then by continuous mapping theorem, we have that   
\begin{align}
  \label{eq:35}
  \left|1 - \exp\left(K [w_{2}(\theta^\ast) - w_{2}(\hat{\theta}_2)]- K [w_{1}(\theta^\ast) - w_{1}(\hat{\theta}_2)]\right) \right| 
\rightarrow 0
\end{align}
with $\PP_0$-probability tending to 1 as $m \rightarrow \infty$. The  integral in \eqref{eq:ax7.5} can be expressed as
\begin{align}
  \label{eq:ax15}
  &\int_{\RR^{d}} \left|e^{K  \left [w_1\left\{\hat \theta_2 + \tfrac{t} {(Km)^{1/2}}\right\}- w_{1}\left\{ \hat{\theta}_2\right\} \right ] } -  e^{K  \left [w_2\left\{\hat \theta_2 + \tfrac{t} {(Km)^{1/2}}\right\}- w_{2}\left\{ \hat{\theta}_2\right\} \right ] }  \right| \pi \left\{\hat \theta_2 + \tfrac{t}{(Km)^{1/2}} \right\}   dt
\nonumber \\
& =
  \int_{\RR^{d}} e^{K  \left [w_1\left\{\hat \theta_2 + \tfrac{t} {(Km)^{1/2}}\right\}- w_{1}\left\{ \hat{\theta}_2\right\} \right ] }  \left|1 - \exp\left(K [w_{2}(\theta^\ast) - w_{2}(\hat{\theta}_2)]- K [w_{1}(\theta^\ast) - w_{1}(\hat{\theta}_2)]\right) \right| \times
\nonumber
\\
&\quad \quad \quad
  \pi \left\{\hat \theta_2 + \tfrac{t}{(Km)^{1/2}} \right\}   dt ,
\end{align}
and we know that $  \int_{\RR^{d}} e^{K  \left [w_1\left\{\hat \theta_2 + \tfrac{t} {(Km)^{1/2}}\right\}- w_{1}\left\{ \hat{\theta}_2\right\} \right ] } \pi \left\{\hat \theta_2 + \tfrac{t}{(Km)^{1/2}} \right\}   dt= C_{1m}$, which is bounded in $\PP_0$-probability by \eqref{eq:ax3}. An application of the dominated convergence theorem implies that integral in \eqref{eq:ax15} tends to 0 in $\PP_0$-probability as $m \rightarrow \infty$.

\vspace{5mm}
\noindent
\emph{Step 5: Show that Theorem \ref{thm1}\ref{thm1:2} holds for $j = 2$.}
 
The proof of Theorem \ref{thm1}\ref{thm1:1} in Section \ref{apsub1} implies that as $m \rightarrow \infty$,
\begin{align}
\int_{\mathbb{R}^{d}} \left| C_{1m}^{-1} e^{K  \left [w_1\left\{\hat \theta_2 + \tfrac{t} {(Km)^{1/2}}\right\}- w_{1}\left\{ \hat{\theta}_2\right\} \right ] }\pi \left\{\hat \theta_2 + \tfrac{t}{(Km)^{1/2}} \right\} - (2\pi)^{-d/2}\det(I_{0})^{1/2} e^{-\tfrac{1}{2} t^{\T} I_{0}t}\right| dt \rightarrow 0 \label{eq:ax151}
\end{align}
in $\PP_0$-probability. For $j=2$, \eqref{eq:sub2-subK} reduces to 
\begin{align}
  \label{eq:ax16}
  &\int_{\mathbb{R}^{d}} \left| C_{2m}^{-1} e^{K \left [w_2\left\{\hat \theta_2 + \tfrac{t} {(Km)^{1/2}}\right\}- w_{2}\left\{ \hat{\theta}_2\right\} \right ]  }\pi \left\{\hat \theta_2 + \tfrac{t}{(Km)^{1/2}} \right\} - (2\pi)^{-d/2}\det(I_{0})^{1/2} e^{-\tfrac{1}{2} t^{\T} I_{0}t}\right| dt  \nonumber\\
\leq & 
\int_{\mathbb{R}^{d}} \left| C_{2m}^{-1} e^{K  \left [w_2\left\{\hat \theta_2 + \tfrac{t} {(Km)^{1/2}}\right\}- w_{2}\left\{ \hat{\theta}_2\right\} \right ] }\pi \left\{\hat \theta_2 + \tfrac{t}{(Km)^{1/2}} \right\} -
       C_{1m}^{-1} e^{K  \left [w_1\left\{\hat \theta_2 + \tfrac{t} {(Km)^{1/2}}\right\}- w_{1}\left\{ \hat{\theta}_2\right\} \right ] }\right|\times
\nonumber\\
& \quad \quad
\pi \left\{\hat \theta_2 + \tfrac{t}{(Km)^{1/2}} \right\}  dt  
\nonumber\\
& + \int_{\mathbb{R}^{d}} \left| C_{1m}^{-1} e^{K  \left [w_1\left\{\hat \theta_2 + \tfrac{t} {(Km)^{1/2}}\right\}- w_{1}\left\{ \hat{\theta}_2\right\} \right ] }\pi \left\{\hat \theta_2 + \tfrac{t}{(Km)^{1/2}} \right\} - (2\pi)^{-d/2}\det(I_{0})^{1/2} e^{-\tfrac{1}{2} t^{\T} I_{0}t}\right| dt  .
\end{align}
The first and second integrals on the right hand side of \eqref{eq:ax16} tend to 0 in $\PP_0$-probability as $m \rightarrow \infty$ using steps 3, 4 and \eqref{eq:ax151}, respectively. This proves \eqref{eq:sub2-subK} for $j=2$

We conclude the proof by showing that Theorem \ref{thm1}\ref{thm1:2} holds for $j = 3, \ldots, K$. 
\vspace{5mm}

\noindent \emph{Step 6: Prove Theorem \ref{thm1} \ref{thm1:2} for $j$th subset quasi posterior distributions $(j = 3, \ldots, K)$}.

For any $\theta \in \Theta$ and $j = 3,\ldots,K$, use \eqref{eq:28}, \eqref{eq:29} to 
define the $j$th subset likelihood functions that are equivalent to $\exp\{w_1(\theta)\}$ and $\exp\{w_2(\theta)\}$, respectively, as
\begin{align*}
\exp\{w_{1}^{j}(\theta)\} &= p_{\theta}(Y_{(j-1)m+1}^{jm})  = \prod_{t = (j-1)m+1}^{jm} \sum_{a=1}^S g_{\theta}(Y_{t} \mid X_{t} = a) p^{\theta}_{t}(a),
\\
\exp\{w_{2}^{j}(\theta)\}  &= p_{\theta}(Y_{(j-1)m+1}^{jm} \mid Y_{(j-2)m + 1}^{(j-1)m}) = \prod_{t = (j-1)m+1}^{jm} \sum_{a=1}^S g_{\theta}(Y_{t} \mid X_{t} = a) \tilde{p}^{\theta}_{t}(a),
\end{align*}
where \begin{align*}
p_{(j-1)m+1}^{\theta}(X_{(j-1)m+1}) &= r_{\theta}(X_{(j-1)m+1}),\nonumber \\  \tilde{p}_{(j-1)m+1}^{\theta}(X_{(j-1)m+1}) &= p_{\theta}( X_{(j-1)m+1}  \mid Y_{(j-2)m + 1}^{(j-1)m})
\end{align*}
and for $t=(j-1)m+2,\ldots,jm$,
\begin{align*}
p_{t}^{\theta}(X_{t}) &= f^{\theta}_{t - m - 2}\left\{ Y^{t-1}_{(j-1)m+1}, r_{\theta}(X_{(j-1)m+1}  ) \right\},
\\
\tilde{p}_{t}^{\theta}(X_{t}) &= f^{\theta}_{t - m - 2}\left\{ Y^{t-1}_{(j-1)m+1}, p_{\theta}( X_{(j-1)m+1}  \mid Y_{(j-2)m + 1}^{(j-1)m}) \right\},
\end{align*}
where $f^{\theta}_{t-m-2}$ is defined in \eqref{eq:27}. 
By Assumption \ref{a8}, we have that with $\PP_0$-probability almost 1, $$\sup_{\theta \in B_{\delta_0}(\theta_0)}\| r_{\theta}(X_{(j-1)m+1}) - p_{\theta}(X_{(j-1)m+1} | Y_{(j-2)m + 1}^{(j-1)m}) \|_{\text{TV}} < \rho^{m-2}.$$
For any  $t \in \RR^d$, let $\theta^{\ast} = \hat{\theta}_j + \frac{t}{(Km)^{1/2}}$.  
Using a similar argument of \eqref{eq:32} in step 2, we have that 
\begin{align}
  \label{eq:ax14:j}
  &\left| K w^{j}_{1}(\theta^\ast)  - K w^{j}_{2}(\theta^\ast) \right|
\nonumber
\\ 
&\leq  
    \frac{1}{\epsilon^2} M(M^{2}-1) K \|r_{\theta^\ast}(X_{(j-1)m+1}) - p_{\theta^\ast}(X_{(j-1)m+1}  \mid Y_{(j-2)m + 1}^{(j-1)m} )\|_{\text{TV}} \nonumber\\
  &\leq  
    \frac{1}{\epsilon^2} M(M^2-1) K \underset{\theta \in \Theta}{\sup}\|r_{\theta}(X_{(j-1)m+1}) - p_{\theta}(X_{(j-1)m+1}  \mid Y_{(j-2)m + 1}^{(j-1)m})\|_{\text{TV}} 
\nonumber\\
& \leq
    \epsilon^{-2}C \rho^{-2}M(M^2-1)K \rho^{m}.
\end{align}
As $m \rightarrow \infty $,  if $K = o( \rho^{-m})$, $\left| K w^{j}_{1}(\theta^\ast) - K w^{j}_{2}(\theta^\ast) \right| \rightarrow 0$ with $\PP_0$-probability tending to 1.
For $t \in  \lim_{m \rightarrow \infty} \{t = (Km)^{1/2} (\theta - \hat \theta_j): \theta \in \Theta\} = \mathbb{R}^{d}$, we define
\begin{align}
  \label{eq:ax15:j}
  \tilde \pi^{j}_{w_1}(t \mid Y_{m+1}^{2m}) &= C_{1m}^{-1}\, {e^{K\left  [w^{j}_1\left\{\hat \theta_j + \tfrac{t} {(Km)^{1/2}}\right\}- w^{j}_{1}\left\{ \hat{\theta}_j\right\}  \right]}\pi \left\{\hat \theta_j + \tfrac{t}{(Km)^{1/2}} \right\}}, \nonumber\\ 
  \tilde \pi_{w^{j}_2}(t \mid Y_{m+1}^{2m}) &= C_{2m}^{-1}\, {e^{K \left [w^{j}_2\left\{\hat \theta_j + \tfrac{t} {(Km)^{1/2}}\right\}- w^{j}_{2}\left\{ \hat{\theta}_j\right\}  \right]}\pi \left\{\hat \theta_j + \tfrac{t}{(Km)^{1/2}} \right\}}, \nonumber\\   
  C^{j}_{1m} &= \int_{\RR^d} e^{K\left [w^{j}_1\left\{\hat \theta_2 + \tfrac{z} {(Km)^{1/2}}\right\} - w^{j}_{1}\left\{ \hat{\theta}_j\right\}  \right]}\pi \left\{\hat \theta_j + \tfrac{z}{(Km)^{1/2}} \right\} dz, \\
  C^{j}_{2m} &= \int_{\RR^d} e^{K\left [w^{j}_2\left\{\hat \theta_j + \tfrac{z} {(Km)^{1/2}}\right\}- w^{j}_{2}\left\{ \hat{\theta}_j\right\}  \right]}\pi \left\{\hat \theta_j + \tfrac{z}{(Km)^{1/2}} \right\} dz
\end{align}
Following a similar argument of step 3 and step 4, we can prove that as $m \rightarrow \infty$, with large $\PP_0$-probability
\begin{align}
  \label{eq:ax1:j}
&C^{j}_{1m} \rightarrow (2 \pi)^{d/2} \det(I_0)^{-1/2} \pi(\theta_0)  ,
\nonumber \\
&C^{j}_{2m} \rightarrow (2 \pi)^{d/2} \det(I_0)^{-1/2} \pi(\theta_0)  ,
\nonumber \\
  &\int_{\mathbb{R}^{d}}  \left| (C^{j}_{1m})^{-1}\, e^{K\left [w^{j}_1\left\{\hat \theta_j + \tfrac{t} {(Km)^{1/2}}\right\} - w^{j}_{1}\left\{ \hat{\theta}_j\right\} \right ] } - (C^{j}_{2m})^{-1}\, e^{K \left [w^{j}_2\left\{\hat \theta_j + \tfrac{t} {(Km)^{1/2}}\right\}- w^{j}_{2}\left\{ \hat{\theta}_j\right\} \right ] }  \right| \times
\nonumber
\\
&
\quad \quad
 \pi \left\{\hat \theta_j + \tfrac{t}{(Km)^{1/2}} \right\}  dt \rightarrow 0.
\end{align}
By the proof of Theorem \ref{thm1}\ref{thm1:1} in Section \ref{apsub1} implies that as 
$m \rightarrow \infty$,

\begin{align}
\int_{\mathbb{R}^{d}} \left| (C^{j}_{1m})^{-1}\, e^{K\left [w^{j}_1\left\{\hat \theta_j + \tfrac{t} {(Km)^{1/2}}\right\} - w^{j}_{1}\left\{ \hat{\theta}_j\right\} \right ] } \pi \left\{\hat \theta_j + \tfrac{t}{(Km)^{1/2}} \right\} - (2\pi)^{-d/2}\det(I_{0})^{1/2} e^{-\tfrac{1}{2} t^{\T} I_{0}t}\right| dt \rightarrow 0 \label{eq:ax151:j}
\end{align}
in $\PP_0$-probability.

Theorem \ref{thm1} \ref{thm1:2} at $j$ can be expressed as
\begin{align}
  \label{eq:ax16:j}
  &\int_{\mathbb{R}^{d}} \left| (C^{j}_{2m})^{-1}\, e^{K\left [w^{j}_2\left\{\hat \theta_j + \tfrac{t} {(Km)^{1/2}}\right\} - w^{j}_{2}\left\{ \hat{\theta}_j\right\} \right ] } \pi \left\{\hat \theta_j + \tfrac{t}{(Km)^{1/2}} \right\} - (2\pi)^{-d/2}\det(I_{0})^{1/2} e^{-\tfrac{1}{2} t^{\T} I_{0}t}\right| dt \nonumber\\
\leq & 
\int_{\mathbb{R}^{d}}  \left| (C^{j}_{1m})^{-1}\, e^{K\left [w^{j}_1\left\{\hat \theta_j + \tfrac{t} {(Km)^{1/2}}\right\} - w^{j}_{1}\left\{ \hat{\theta}_j\right\} \right ] } - (C^{j}_{2m})^{-1}\, e^{K \left [w^{j}_2\left\{\hat \theta_j + \tfrac{t} {(Km)^{1/2}}\right\}- w^{j}_{2}\left\{ \hat{\theta}_j\right\} \right ] }  \right|\times  
\nonumber\\
& \quad \quad 
\pi \left\{\hat \theta_j + \tfrac{t}{(Km)^{1/2}} \right\}  dt  
\nonumber\\
&
+ \int_{\mathbb{R}^{d}} \left| (C^{j}_{1m})^{-1}\, e^{K\left [w^{j}_1\left\{\hat \theta_j + \tfrac{t} {(Km)^{1/2}}\right\} - w^{j}_{1}\left\{ \hat{\theta}_j\right\} \right ] } \pi \left\{\hat \theta_j + \tfrac{t}{(Km)^{1/2}} \right\} - (2\pi)^{-d/2}\det(I_{0})^{1/2} e^{-\tfrac{1}{2} t^{\T} I_{0}t}\right| dt   .
\end{align}
The first and second integrals on the right hand side of \eqref{eq:ax16:j} tend to 0 in $\PP_0$-probability as $m \rightarrow \infty$ using  \eqref{eq:ax1:j} and \eqref{eq:ax151:j}, respectively. This proves Theorem\ref{thm1} \ref{thm1:2} for $j$th subset quasi posterior distribution, which completes the proof.  \hfill $\square$
\section{Proof of Theorem \ref{thm2}}
\label{pf:thm2}

\subsection{Proof of Theorem \ref{thm2}\ref{thm2:1}}
\vspace{5mm}
\noindent
\emph{Step 1: Find an upper bound for the total variation distance between $\tilde{\Pi}_n(\cdot \mid Y_1^n)$ and  $d$-variate normal distribution with mean zero and $I_0^{-1}$ as the covariance matrix.}

For any $K \in \NN$ and $j=1,\ldots,K$, let $h_j =  \sqrt{n}(\theta - \hat{\theta}_{j})$ denote the $j$th local parameter, $\Pi_m^{h_j}(\cdot \mid Y_{(j-1)m + 1}^{jm})$ denote the distribution of $h_j$ implied by the $j$th subset posterior distribution $\tilde \Pi_m(\cdot \mid Y_{(j-1)m + 1}^{jm})$ with density $\tilde \pi_m(\theta \mid Y_{(j-1)m + 1}^{jm})$. Theorem \ref{thm1} in the main manuscript implies that 
\begin{align}
    \| \Pi_{m}^{h_j}(\cdot \mid Y_{(j-1)m + 1}^{jm}) - N_d(0, I_0^{-1}) \|_{\text{TV}} \rightarrow 0 \text{ in } \PP_0\text{-probability as } m \rightarrow \infty, 
\;\;j=1,\ldots,K.
\label{eq:thm2:0}
\end{align}

By combination scheme, the density of block filtered posterior distribution $\tilde{\Pi}_{n}(\theta \mid Y_{1}^{n})$ is defined as 
\begin{align}
\tilde{\pi}_{n}(\theta \mid Y_{1}^{n}) = \frac{1}{K}\sum_{j=1}^K  \frac{\det( \Sigma_j)^{1/2}}{\det(\tilde \Sigma)^{1/2}} \tilde \pi_m \left\{ \Sigma_j^{1/2} \tilde \Sigma^{-1/2} (\theta - \tilde \theta) + \hat \theta_j \mid Y_{(j-1)m + 1}^{jm} \right\},\quad \theta \in \Theta.
\label{eq:thm2:1}
\end{align}
Let $\tilde{\Pi}_n^{h}(\cdot \mid Y_{1}^{n})$ denote the distribution of $h = \sqrt{n}(\theta - \tilde{\theta})$ implied by block filtered posterior distribution $\tilde \Pi_n(\cdot \mid Y_{1}^{n})$. Then the density of $\tilde{\Pi}_n^{h}(\cdot \mid Y_{1}^{n})$ is given by 
\begin{align}
\tilde{\pi}_{n}^{h}(h \mid Y_{1}^{n}) &= \frac{1}{K} \sum_{j=1}^K  \frac{\det( \Sigma_j)^{1/2}}{\det(\tilde \Sigma)^{1/2}}\tilde \pi_m \left\{ \Sigma_j^{1/2} \tilde \Sigma^{-1/2} \frac{h}{\sqrt{n}} + \hat \theta_j \mid Y_{(j-1)m + 1}^{jm} \right\} n^{-d/2}
\nonumber \\
&=
 \frac{1}{K} \sum_{j=1}^K  \frac{\det( \Sigma_j)^{1/2}}{\det(\tilde \Sigma)^{1/2}} \pi_m^{h_{j}} \left\{ \Sigma_j^{1/2} \tilde \Sigma^{-1/2} h  \mid Y_{(j-1)m + 1}^{jm} \right\}
,\quad h \in \RR^{d}
,
\label{eq:thm2:2}
\end{align}
where $ \pi_m^{h_{j}} (h \mid Y_{(j-1)m}^{jm})$ is the density of $ \Pi_m^{h_{j}} (\cdot \mid  Y_{(j-1)m}^{jm}).$

Let $\phi(\theta;\mu_{\theta},\Sigma)$ denote the density function of $d$-variate normal distribution with mean $\mu_{\theta}$ and $\Sigma$ as covariance matrix. The total variation distance between $\tilde{\Pi}_{n}^{h}(\cdot \mid Y_{1}^{n})$ and $N_{d}(0, I_{0}^{-1})$  is upper bounded by 
\begin{align}
&\|\tilde{\Pi}_{n}^{h}(\cdot \mid Y_{1}^{n})  -N_{d}(0, I_{0}^{-1})\|_{\text{TV}} 
\nonumber\\& 
\overset{(i)}{=}\frac{1}{2} \int_{\RR^{d}} |\frac{1}{K}\sum_{j=1}^K 
\frac{\det( \Sigma_j)^{1/2}}{\det(\tilde \Sigma)^{1/2}} \pi_m^{h_{j}} \left\{ \Sigma_j^{1/2} \tilde \Sigma^{-1/2} h  \mid Y_{(j-1)m + 1}^{jm} \right\} - \phi(h;0, I_{0}^{-1})| dh
\nonumber\\&
\overset{(ii)}{\leq}  \frac{1}{2K} \sum_{j=1}^{K} \int_{\RR^{d}} | \frac{\det( \Sigma_j)^{1/2}}{\det(\tilde \Sigma)^{1/2}} \pi_m^{h_{j}} \left\{ \Sigma_j^{1/2} \tilde \Sigma^{-1/2} h  \mid Y_{(j-1)m + 1}^{jm} \right\} - \phi(h;0, I_{0}^{-1}) | dh
\nonumber\\
&
\overset{(iii)}{=}  \frac{1}{2K} \sum_{j=1}^{K} \int_{\RR^{d}} |  \pi_m^{h_{j}} \left\{ t  \mid Y_{(j-1)m + 1}^{jm} \right\} - \frac{\det(\tilde \Sigma)^{1/2}}{\det(\Sigma_j)^{1/2}}\phi(\tilde \Sigma^{1/2} \Sigma_j^{-1/2}  t ;0, I_{0}^{-1}) | dt
\nonumber\\
&
=
 \frac{1}{2K} \sum_{j=1}^{K} \int_{\RR^{d}} |\pi_m^{h_{j}} \left\{ t  \mid Y_{(j-1)m + 1}^{jm} \right\} - \phi(t;0,  \Sigma_j^{1/2}\tilde \Sigma^{-1/2}I_{0}^{-1}\tilde \Sigma^{-1/2}\Sigma_j^{1/2})| dt
\nonumber\\
&
= \frac{1}{K}\sum_{j=1}^{K}\left\{ \| \Pi_{m}^{h_j}(\cdot \mid Y_{(j-1)m + 1}^{jm} )  - N_{d}(0, \Sigma_j^{1/2}\tilde \Sigma^{-1/2}I_{0}^{-1}\tilde \Sigma^{-1/2}\Sigma_j^{1/2}) \|_{\text{TV}} \right\}
\nonumber\\&
\overset{(iv)}{\leq} \frac{1}{K}\sum_{j=1}^{K}\left\{ \| \Pi_{m}^{h_j}(\cdot \mid Y_{(j-1)m + 1}^{jm} ) - N_{d}(0, I_0^{-1}) \|_{\text{TV}}
 \right.
\nonumber 
\\
&\quad \quad \left. +  \| N_{d}(0, I_0^{-1}) -  N_{d}\left(0,  \Sigma_j^{1/2}\tilde \Sigma^{-1/2}I_{0}^{-1}\tilde \Sigma^{-1/2}\Sigma_j^{1/2}\right)) \|_{\text{TV}} \right\},
\label{eq:thm2:3}
\end{align}
where $(i)$ follows from the definition of total variation distance and \eqref{eq:thm2:2}, $(ii)$ follows from triangle inequality, $(iii)$ follows from the transformation $t = \Sigma_{j}^{1/2} \tilde \Sigma ^{-1/2} h$, and $(iv)$ again follows from triangle inequality. The first term inside the summation of \eqref{eq:thm2:3} converges to 0 in $\PP_{0}$ probability by \eqref{eq:thm2:0}. The second term inside the summation of \eqref{eq:thm2:3} is the total variation distance between two normal distribution with zero mean and different variance. 

\vspace{5mm}

\noindent
\emph{Step 2: Find an upper bound for the second term inside the summation of \eqref{eq:thm2:3}.}

Let $N_{d}(\mu_1,V_1)$ and $N_{d}(\mu_2,V_2)$ be two $d$-variate normal distributions with mean $\mu_1,\mu_2$ and covariance matrix $V_1, V_2$, respectively. Let $v=\mu_1-\mu_2$.  By Lemma \ref{lemma:thm2}, the total variation distance $\|N_{d}(\mu_1,V_1) - N_{d}(\mu_1,V_1)\|_{\text{TV}}$ is upper bounded by : 
\begin{align}
\|N_{d}(\mu_1,V_1) &-N_{d}(\mu_2,V_2)\|_{\text{TV}}
\nonumber
\\
&
\leq \frac{9}{2}
\max \{  \frac{|v^{\T}(V_1 - V_2) v|}{v^{\T}V_1 v}, \frac{v^\T v}{\sqrt{v^\T V_1 v}}, \|(N^{\T}V_1 N)^{-1}N^{\T}V_2 N - I_{d-1}\|_{F} \},
\label{eq:thm2:4}
\end{align}
where $N$ is a $d\times (d-1)$ matrix whose columns form a basis for the subspace orthogonal to $v$.

Let $A = (a_{ik}) $ and $B = (b_{kj})$ be two matrix such that $AB$ exists. In general, we have an upper bound for $\|AB\|_{F}$
\begin{align}
\|AB\|_{F}  &= \left(\sum_{i}\sum_{j} |\sum_{k}a_{ik}b_{kj}|^{2}\right)^{1/2}
\overset{(i)}{\leq}
\left(\sum_{i}\sum_{j}\sum_{k}a_{ik}^{2}\sum_{k}b_{kj}^{2}\right)^{1/2}
\nonumber \\
&
= \left( \sum_{i,k}a_{ik}^{2}\sum_{j,k}b_{kj}^{2} \right)^{1/2}
= \|A\|_{F}\|B\|_{F},
\label{eq:thm2:5}
\end{align} 
where $(i)$ follows from Cauchy-Schwarz inequality.

Denote $F_{j} = \Sigma_{j}^{1/2} - (nI_0)^{-1/2}$ and $\tilde{F} = \tilde{\Sigma}^{-1/2} - (nI_0)^{1/2}$. By assumption \ref{a9}, we have 
\begin{align}
\label{eq:fix6}
\EE_0 &\|\Sigma_{j}^{1/2}\tilde{\Sigma}^{-1/2} - I_{d}\|_{F}^{2}
=  \EE_0 \|\{(nI_0)^{-1/2} + F_{j}\}\{ (nI_0)^{1/2} + \tilde{F}\} - I_{d}\|_{F}^{2}
\nonumber
\\
&
=
\EE_0 \| n^{-1/2}I_{0}^{-1/2} \tilde{F} + n^{1/2}F_{j} I_{0}^{1/2} + F_{j}\tilde F\|_{F}^{2}
\nonumber
\\
&
\overset{(i)}{\leq} 3n^{-1} \|I_{0}^{-1/2}\|_{F}^{2} \|\tilde F\|_{F}^{2} + 3n\|I_{0}^{1/2}\|_{F}^{2}\EE_0 \| F_j\|_{F}^{2} + 3\EE_{0}n\|F_{j}\|_{F}^{2}n^{-1}\|\tilde F\|_{F}^{2}
\nonumber
\\
&
\overset{(ii)}{\rightarrow}0,
\end{align}
where $(i)$ follows from \eqref{eq:fix6}, and $(ii)$ follows \ref{a9} that $n^{-1}\|\tilde F\|_{F}^{2} ,\EE_0 n\|F_{j}\|_{F}^{2} \rightarrow 0$ and $\|I_0^{1/2}\|_{F},\|I_0^{-1/2}\|_{F}<\infty$.
Hence, for $j = 1,\ldots,K$, denote $M_{j} = \Sigma_{j}^{1/2}\tilde{\Sigma}^{-1/2} - I_{d}$. By Assumption \ref{a9} we have as $m \rightarrow \infty$,
\begin{align}
\label{eq:fix1}
\EE_0 \|M_{j}\|_{F}^{2} \rightarrow 0, \quad \text{uniformly for } j = 1,\ldots,K.
\end{align}

For $j = 1,$ we consider the difference $\| N_{d}(0, I_0^{-1}) -  N_{d}\left(0,  \Sigma_1^{1/2}\tilde{\Sigma}^{-1/2}I_{0}^{-1}\tilde \Sigma^{-1/2}\Sigma_1^{1/2}\right) \|_{\text{TV}}$. And the corresponding vector $v$ and matrix $V_{1},V_{2}$ in \eqref{eq:thm2:4} are 
\begin{align*}
&v =0, 
\\
& V_2-V_1  = \Sigma_{1}^{1/2}\tilde{\Sigma}^{-1/2}I_{0}^{-1}\tilde{\Sigma}^{-1/2}\Sigma_{1}^{1/2} -  I_{0}^{-1} =  M_{1}I_{0}^{-1} +I_{0}^{-1}M_{1}^{\T} + M_{1}I_{0}^{-1}M_{1}^{\T}.
\end{align*}
Since $v = 0$, the first two terms of the right hand side of \eqref{eq:thm2:4} is zero. For the third term we have that
\begin{align}
\|(N^{\T}V_1 N)^{-1}N^{\T}V_2 N - I_{d-1}\|_{F} &= \|(N^{\T}V_1 N)^{-1}N^{\T}(V_{2}-V_{1})N\|_{F} 
\nonumber\\
& = \|(N^{\T}I_{0}^{-1} N)^{-1}N^{\T}(M_{1}I_{0}^{-1} +I_{0}^{-1}M_{1}^{\T} + M_{1}I_{0}^{-1}M_{1}^{\T}) N\|_{F}.
\label{eq:thm2:5}
\end{align}
We have that
\begin{align}
\|(N^{\T}I_{0}^{-1} N)^{-1}&N^{\T}(M_{1}I_{0}^{-1} +I_{0}^{-1}M_{1}^{\T} + M_{1}I_{0}^{-1}M_{1}^{\T}) N\|_{F} 
\nonumber
\\
&
\overset{(i)}{\leq}\|I_{0}^{-1} \|_{F}\|M_{1}\|_{F}(2 + \|M\|_{F}) \|(N^{\T}I_{0}^{-1} N)^{-1}\|_{F} \|N\|_{F}^{2} 
\nonumber 
\\&
\overset{(ii)}{\leq}
\|I_{0}^{-1}\|_{F} \overline{\lambda}(I_{0}) \|M_{1}\|_{F}(2 + \|M\|_{F}) (d-1)^{3/2}  ,
\label{eq:thm2:6}
\end{align}
where $\overline{\lambda}(I_{0})$ denotes the largest singular value of $I_{0}$, $(i)$ follows from \eqref{eq:thm2:5}, and $(ii)$ follows from $\|(N^{\T}I_{0}^{-1} N)^{-1}\|_{F} \leq (d-1)^{1/2} \overline{\lambda}(I_{0})$ and $\|N\|_{F} = \sqrt{d-1}$ Hence by \eqref{eq:thm2:4} and \eqref{eq:thm2:6}, we have that
\begin{align}
\| N_{d}(0, I_0^{-1}) -  N_{d}\left(0,  \Sigma_1^{1/2}\tilde \Sigma^{-1/2}I_{0}^{-1}\tilde \Sigma^{-1/2}\Sigma_1^{1/2}\right)) \|_{\text{TV}}  \leq \tilde{C}\|M_{1}\|_{F}(2 + \|M_{1}\|_{F}),
\label{eq:thm2:7}
\end{align}
where $\tilde{C} = \frac{9 (d-1)^{3/2} \overline{\lambda}(I_{0})\|I_{0}^{-1} \|_{F}}{2} < \infty$.
By a simlar argument, we can show that for $j=2,\ldots, K$,
\begin{align}
\| N_{d}(0, I_0^{-1}) -  N_{d}\left(0,  \Sigma_j^{1/2}\tilde \Sigma^{-1/2}I_{0}^{-1}\tilde \Sigma^{-1/2}\Sigma_j^{1/2}\right)) \|_{\text{TV}}  \leq \tilde{C}\|M_{j}\|_{F}(2 + \|M_{j}\|_{F}).
\label{eq:thm2:8}
\end{align}

\vspace{5mm}
\noindent
\emph{Step 3: Show that Theorem \ref{thm2}\ref{thm2:1} holds for any sequence $K_{m} \rightarrow \infty$ as $m \rightarrow \infty$ and $K_{m} = o(\rho^{-m})$.}

For any $K = 1,2,\ldots$, we define a sequence of random variables $(D_{j,m}^{K})_{m=1}^{\infty}$ as
\begin{align}
D_{j,m}^{K} = 
\| \Pi_{m}^{h_j}(\cdot \mid Y_{(j-1)m + 1}^{jm}) -  N_{d}\left(0,  I_{0}^{-1}\right) \|_{\text{TV}} ,\quad j = 1,\ldots,K,\;\;
m = 1,2,\ldots.
\label{eq:thm2:10}
\end{align}   
Since $D_{j.m}^{K} \in [0,1]$ for any $ 1 \leq j \leq K$ and $K ,m \in \NN$, hence for any given sequence $K_{m}  \in \NN$ with  $K_{m} \overset{m \rightarrow \infty}{\rightarrow} \infty$, we obtain that $\{D_{j,m}^{K_{m}},j = 1,\ldots,K_{m},m=1,2,\ldots\}$ is uniformly integrable. Together with $D_{j,m}^{K_{m}} $ converging to 0 in $\PP_0$-probability we have that as $m \rightarrow \infty$,
\begin{align}
\EE_0 D_{j,m}^{K_{m}} \rightarrow 0,\quad j =1,\ldots,K_{m}.
\label{eq:mk1}
\end{align}
Based on definition of subset quasi posterior distribution $\tilde{\Pi}_{m}(\cdot \mid Y_{(j-1)m +1}^{jm})$, the density of the first subset posterior $\tilde{\Pi}_{m}(\cdot \mid Y_{1}^{m})$ is a function of $Y_{1}^{m}$, whereas the density of the $\tilde{\Pi}_{m}(\cdot \mid Y_{(j-1)m + 1}^{jm})$ is a function of $Y_{(j-2)m +1 }^{jm}$ for $j =2,\ldots, K_{m}$.
By assumption \ref{a7}, $\{Y_{t},t \geq 1\}$ is stationary. Hence we have that for any $K_{m}, m \geq 1$,
\begin{align}\EE_0 D_{1,m}^{K_{m}} \neq  \EE_0 D_{2,m}^{K_{m}} =\cdots =  \EE_0 D_{K_{m},m}^{K_{m}}.
\label{eq:mk2}
\end{align}
For any $\epsilon >0$, we have that as $m \rightarrow \infty$,
\begin{align}
&\PP_0  \left[ \|\tilde{\Pi}_{n}^{h}(\cdot \mid Y_{1}^{n})  -N_{d}(0, I_{0}^{-1})\|_{\text{TV}}  > \epsilon \right ]
\nonumber \\
&\overset{(i)}{\leq}
\PP_0 \left [ \frac{1}{K_m}\sum_{j=1}^{K_m}\left\{  D_{j,m}^{K_m} +  \| N_{d}(0, I_0^{-1}) -  N_{d}(0,  \Sigma_j^{1/2}\tilde \Sigma^{-1/2}I_{0}^{-1}\tilde \Sigma^{-1/2}\Sigma_j^{1/2}) \|_{\text{TV}} \right\}  > \epsilon \right ]
\nonumber \\
& \overset{(ii)}{\leq }
 \frac{1}{ \epsilon K_{m}} \sum_{j=1}^{K_{m}} 
\EE_0 \left \{D_{j,m}^{K_m} +  \| N_{d}(0, I_0^{-1}) -  N_{d}(0,  \Sigma_j^{1/2}\tilde \Sigma^{-1/2}I_{0}^{-1}\tilde \Sigma^{-1/2}\Sigma_j^{1/2}) \|_{\text{TV}} \right \} 
\nonumber \\
& 
\overset{(iii)}{ = } \frac{1}{ \epsilon K_{m} } \EE_0  D_{1,m}^{K_{m}} + \frac{K_{m} - 1 }{ \epsilon K_{m}  } \EE_0  D_{2,m}^{K_{m}}
+  \frac{1}{ \epsilon K_{m}} \sum_{j=1}^{K_{m}} 
\EE_0  \| N_{d}(0, I_0^{-1}) -  N_{d}(0,  \Sigma_j^{1/2}\tilde \Sigma^{-1/2}I_{0}^{-1}\tilde \Sigma^{-1/2}\Sigma_j^{1/2}) \|_{\text{TV}}
\nonumber \\
& \overset{(iv)}{\leq}
 \frac{1}{ \epsilon K_{m} } \EE_0  D_{1,m}^{K_{m}} + \frac{K_{m} - 1 }{ \epsilon K_{m}  } \EE_0  D_{2,m}^{K_{m}}
+  \frac{\tilde{C}}{\epsilon} \frac{1}{K_{m}}\sum_{j=1}^{K_{m}}\EE_0\|M_{j}\|_{F}(2 + \|M_{j}\|_{F})
\label{eq:mk3}
 \\
& \overset{(v)}{ \rightarrow  } 0,
\nonumber
\end{align}
where $(i)$ follows from \eqref{eq:thm2:3} in step 1, $(ii)$ follows from Markov inequality, $(iii)$ follows from \eqref{eq:mk2}, $(iv)$ follows from \eqref{eq:thm2:8} in step 2. The convergence in $(v)$ follows from that the first term of \eqref{eq:mk3} converges to $0$ as $\frac{1}{K_{m}} \rightarrow  0$ and $ \EE_0  D_{1,m}^{K_{m}} \rightarrow 0$ as $m \rightarrow \infty$; the second term of \eqref{eq:mk3} converges to 0 as $\frac{K_{m} - 1 }{  K_{m}  } \rightarrow 1$ and $ \EE_0  D_{2,m}^{K_{m}} \rightarrow 0$ as $K_{m}\overset{m \rightarrow \infty}{ \rightarrow}\infty$;  for the thrid term of \eqref{eq:mk3}, by \eqref{eq:fix1} and Jensen's inequality we have $\frac{1}{K_{m}}\sum_{j=1}^{K_{m}}\EE_0\|M_{j}\|_{F}(2 + \|M_{j}\|_{F}) \rightarrow 0$  as $m,K_{m}\rightarrow \infty$. And this completes the proof of Theorem \ref{thm2} \ref{thm2:1} for any given sequence $K_{m}\overset{m \rightarrow \infty}{ \rightarrow} \infty$ with $K_{m} = o(\rho^{-m})$. \hfill $\square$

\subsection{Proof of Theorem \ref{thm2}\ref{thm2:2}}

Let $\mu,\nu$ be two probablity measures on $(\Theta,\mathcal{F})$ where $\mathcal{F}$ is the Borel $\sigma$-field. The 1-Wasserstein distance is bounded by the total variation distance \citep[Theorem 6.15]{villani2008optimal}
$$W_{1}\{\mu,\nu\} \leq \text{diam}(\Theta) \|\mu-\nu\|_{\text{TV}},$$
where $ \text{diam}(\Theta)  = \underset{\theta_1 , \theta_2 \in \Theta}{\sup}\|\theta_1 - \theta_2\|_2 < \infty$ by the compactness of $\Theta \subset \mathbb{R}^{d}$. 
Hence, as $m\rightarrow \infty$ with $\PP_0$-probability tending to 1,
\begin{align}
\sqrt{n}W_{1}\left \{\tilde{\Pi}_{n}(\cdot\mid Y_{1}^{n}) , N_{d}(\tilde{\theta}, (nI_{0})^{-1})\right\} & \overset{(i)}{=}   W_{1}\left\{\tilde{\Pi}_{n}^{\tilde{h} = \sqrt{n}(\theta - \tilde{\theta})}(\cdot\mid Y_{1}^{n}) , N_{d}(0, I_{0}^{-1})\right\}  \nonumber\\
&\leq \text{diam}(\Theta) \|\tilde{\Pi}_{n}^{\tilde{h}}(\cdot\mid Y_{1}^{n})  -N_{d}(0, I_{0}^{-1})\|_{\text{TV}} \overset{(ii)}{\rightarrow} 0,
\nonumber\\
\sqrt{n}W_{1}\left \{\Pi_{n}(\cdot\mid Y_{1}^{n}) , N_{d}(\hat{\theta}, (nI_{0})^{-1})\right\} &\overset{(i)}{=}    W_{1}\left\{\Pi_{n}^{h = \sqrt{n}(\theta - \hat{\theta})}(\cdot\mid Y_{1}^{n}) , N_{d}(0, I_{0}^{-1})\right\} \nonumber\\
&\leq \text{diam}(\Theta) \|\Pi_{n}^{h}(\cdot\mid Y_{1}^{n})  -N_{d}(0, I_{0}^{-1})\|_{\text{TV}}\overset{(iii)}{\rightarrow} 0,
\label{eq:39}
\end{align}
where $(i)$ follows from the rescaling property of $W_{1}$-distance, $(ii)$ follows from Theorem \ref{thm2}\ref{thm2:2} and $(iii)$ follows from Theorem \ref{thm1} in the case $K= 1$.
By \citet{Vil03}, if $\Theta$ is bounded, we have the following relation between $W_1$ distance and $W_2$ distance. For two square integrable probability measures $\mu,\nu$ on $(\Theta, \mathcal{F})$,
\begin{align}
W_{2}\{\mu,\nu\}^{1/2} \text{diam}(\Theta)^{1/2}\leq W_{1}\{\mu,\nu\} \leq W_{2}\{\mu,\nu\},
\label{eq:w12}
\end{align}
Furthermore, let $\mu$ be the probability measure induced by $N_{d}(m_{1},\Sigma_1)$ and $\nu$ be the probability measure induced by $N_{d}(m_{2},\Sigma_2)$, then $W_2(\mu,\nu)$ has a closed form \citep{olkin1982distance}
\begin{align}
W_{2}\{N_{d}(m_{1},\Sigma_1) ,N_{d}(m_{1},\Sigma_1)\} = \|m_{1} - m_2\|_{2} + \|\Sigma_{1}^{1/2} - \Sigma_{2}^{1/2}\|_{F}.
\label{eq:40}
\end{align}

Now we consider the upper bound of Theorem \ref{thm2} \ref{thm2:2}:
\begin{align}
& \sqrt{n} W_{1}\left\{\tilde{\Pi}_{n}(\cdot\mid Y_{1}^{n}),\Pi_{n}(\cdot\mid Y_{1}^{n})\right\}
\nonumber\\&
\overset{(i)}{\leq} \sqrt{n}W_{1}\left\{\tilde{\Pi}_{n}(\cdot\mid Y_{1}^{n}) , N_{d}(\tilde{\theta}, (nI_{0})^{-1})\right\}+ \sqrt{n}W_{1}\left\{\tilde{\Pi}_{n}(\cdot\mid Y_{1}^{n}) , N_{d}(\hat{\theta}, (nI_{0})^{-1})\right\}
\nonumber
\\
&
\quad \quad
+ \sqrt{n}W_{1}\left\{N_{d}(\tilde{\theta}, (nI_{0})^{-1}) , N_{d}(\hat{\theta}, (nI_{0})^{-1})\right\}
\nonumber\\&
\overset{(ii)}{\leq}
\text{diam}(\Theta) \left\{\|\tilde{\Pi}_{n}^{\tilde{h}}(\cdot\mid Y_{1}^{n})  -N_{d}(0, I_{0}^{-1})\|_{\text{TV}} +  \|\Pi_{n}^{h}(\cdot\mid Y_{1}^{n})  ,N_{d}(0, I_{0}^{-1})\|_{\text{TV}} \right\} 
\nonumber
\\
&
\quad \quad
+ \sqrt{n}W_{1}\left\{N_{d}(\tilde{\theta}, (nI_{0})^{-1}) , N_{d}(\hat{\theta}, (nI_{0})^{-1})\right\}
\label{eq:41}
\end{align}
where $(i)$ follows from triangle inequality, $(ii)$ follows from \eqref{eq:39}. The first term of the right hand side of \eqref{eq:41} converges to $0$ in $\PP_0$-probability by \eqref{eq:39}. As for the second term, by \eqref{eq:w12} and \eqref{eq:40}, we have that
\begin{align}
\text{diam}(\Theta)^{1/2}\left( \sqrt{n} \|\tilde{\theta}- \hat{\theta}\|_{2}\right)^{1/2} \leq \sqrt{n}W_{1}\left\{N_{d}(\tilde{\theta}, (nI_{0})^{-1}) , N_{d}(\hat{\theta}, (nI_{0})^{-1})\right\} \leq \sqrt{n} \|\tilde{\theta}- \hat{\theta}\|_{2}.
\label{eq:42}
\end{align}
Hence up to a negligible term in $\PP_0$-probability, we have 
\begin{align}
\label{eq:42.5}
\sqrt{n} W_{1}\left\{\tilde{\Pi}_{n}(\cdot\mid Y_{1}^{n}),\Pi_{n}(\cdot\mid Y_{1}^{n})\right\}
\leq \sqrt{n} \|\tilde{\theta}- \hat{\theta}\|_{2}.
\end{align}
For the lower bound of Theorem \ref{thm2}\ref{thm2:2}, we have that 
\begin{align}
& \sqrt{n} W_{1}\left\{\tilde{\Pi}_{n}(\cdot\mid Y_{1}^{n}),\Pi_{n}(\cdot\mid Y_{1}^{n})\right\}
\nonumber\\&
\geq \sqrt{n}W_{1}\left\{N_{d}(\tilde{\theta}, (nI_{0})^{-1}), N_{d}(\hat{\theta}, (nI_{0})^{-1})\right\} - \sqrt{n}W_{1}\left\{\tilde{\Pi}_{n}(\cdot\mid Y_{1}^{n}) , N_{d}(\tilde{\theta}, (nI_{0})^{-1})\right) 
\nonumber
\\
&
\quad \quad
- \sqrt{n}W_{1}\left\{\tilde{\Pi}_{n}(\cdot\mid Y_{1}^{n}) , N_{d}(\hat{\theta}, (nI_{0})^{-1})\right\}
\nonumber \\&
\geq 
\sqrt{n}W_{1}\left\{N_{d}(\tilde{\theta}, (nI_{0})^{-1}) , N_{d}(\hat{\theta}, (nI_{0})^{-1})\right\} 
\nonumber
\\
&
\quad \quad - \text{diam}(\Theta) \left\{\|\tilde{\Pi}_{n}^{\tilde{h}}(\cdot\mid Y_{1}^{n}) , N_{d}(0, I_{0}^{-1})\|_{\text{TV}} +  \|\Pi_{n}^{h}(\cdot\mid Y_{1}^{n})  
-N_{d}(0, I_{0}^{-1})\|_{\text{TV}} \right\} 
\label{eq:43},
\end{align}
where the second term of \eqref{eq:43} converges to 0 in $\PP_0$-probability and the first term is lower bounded by $\text{diam}(\Theta)^{1/2}\left( \sqrt{n} \|\tilde{\theta}- \hat{\theta}\|_{2}\right)^{1/2} $ by \eqref{eq:42}. Hence up to a negligible term in $\PP_0$-probability, we have 
\begin{align}
\label{eq:44}
\sqrt{n} W_{1}\left\{\tilde{\Pi}_{n}(\cdot\mid Y_{1}^{n}),\Pi_{n}(\cdot\mid Y_{1}^{n})\right\}
\geq \text{diam}(\Theta)^{1/2}\left( \sqrt{n} \|\tilde{\theta}- \hat{\theta}\|_{2}\right)^{1/2} .
\end{align}

Combining \eqref{eq:42.5} and \eqref{eq:43} completes the proof of Theorem \ref{thm2}\ref{thm2:2}.
 \hfill $\square$

\section{Further Details of Experiments}
\label{app:exp}
\subsection{Illustrative Example of Simulations}
\label{illus-ex}

We consider an illustrative example of HMM in Section \ref{sim-dat} of the main manuscript. Using the notation from Section \ref{setup-back} in the main manuscript, let $\Ycal = \RR$, $g(\cdot \mid X = a)=\phi(\cdot;\mu_{a},\sigma_{a}^{2})$ be the density of normal distribution with mean $\mu_a$ and variance $\sigma^2_a$ $(a=1, \ldots, S)$, $\theta = \{r, Q, \mu_1, \ldots, \mu_S, \sigma^2_1, \ldots, \sigma^2_S\}\subset \mathbb{R}^{d}$, and $d = S^2 + 2S - 1$. Our example is motivated from \citet{Ryd08}, who uses a similar example with $S=3$, $\sigma_1 = \ldots = \sigma_3 = \sigma$, and a specific choice of $Q$, which automatically determines $r$ because $r^\T Q = r^\T$. Under this setup, posterior computations on subsets using the data augmentation based on \eqref{eq:s6} in the main manuscript are analytically tractable.

The prior distribution of $\theta$ is taken conjugate to the likelihood $p_{\theta} (X_{1}^{n},Y_{1}^{n})$ in \eqref{eq:3} of the main manuscript. The initial distribution $r = (r_1,\ldots,r_{S})$ and rows of the Markov transition matrix $\{(q_{a1},\ldots,q_{aS})\}_{a=1}^{S}$ are given an independent Dirichlet prior Dir$(1,\ldots,1)$; the mean parameters $\mu_{1},\ldots,\mu_{S}$ are given an independent normal prior $N(\xi,\kappa^{-1})$ with $\xi = (\min y_{i} + \max y_{i})/2 $ and $\kappa = 1/(\max y_{i} - \min y_{i})^{2}$; the variance parameters $\sigma^{-2}_{1},\ldots,\sigma^{-2}_{S}$ are given an independent gamma prior $\Gamma(1,1)$. 

For $j = 1,\ldots K$, the full conditional distributions of $\theta = \{r, Q, \mu_1, \ldots, \mu_S, \sigma^2_1, \ldots, \sigma^2_S\}$ after stochastic approximation given $(X_{(j-1)m+1}^{(jm)}, Y_{(j-1)m+1}^{(jm)})$  are as follows. The full conditional distribution of $r$ is
\begin{align}
\label{pos:init}
(r_{1},\ldots,r_{S})\mid \cdots \sim \text{Dir}\left(K\cdot I\{X_{(j-1)m+1}=1\}+1,\ldots, K\cdot I\{X_{(j-1)m+1}=S\}+1 \right)
\end{align}
where $\mid \cdots$ denotes the conditioning on the remaining random quantities and $I\{\cdot\}$ denotes the indicator function. The rows of transition matrix $Q$ are conditional independent over $a = 1,\ldots,S$ and drawn as
\begin{align}
\label{pos:Q}
(q_{a1},\ldots,q_{aS})\mid \cdots \sim \text{Dir}\left(Kn_{a1}+1,\ldots, Kn_{aS}+1 \right)
\end{align}
where $n_{ab} $ is the cardinality of the set $\{(j-1)m+1 < i \leq jm \;:\; X_{i-1}=a,X_{i}=b\}$. The mean parameters are conditionally independent over $a = 1,\ldots,S,$ and drawn as
\begin{align}
\label{pos:mean}
\mu_{a}\mid \cdots \sim \text{N} 
\left(\frac{KS_{a} + \kappa\xi\sigma^{2}}{Kn_{a} + \kappa \sigma^{2}},\frac{\sigma^{2}}{Kn_{a} + \kappa \sigma^{2}} \right)
\end{align}
where $S_{a} = \sum_{X_{i} = a,(j-1)m+1 \leq i \leq jm} y_{i}$ and $n_{a}$ is the cardinality of the set $\{(j-1)m+1 \leq i \leq jm\;:\;X_{i} = a\}$. The variance parameters are conditionally independent over $a = 1,\ldots,S$ and drawn as
\begin{align}
\label{pos:var}
\sigma^{-2}_{a}\mid \cdots \sim \Gamma 
\left(1 + \frac{K}{2}n_{a},1 + \frac{K}{2}\sum_{i \in S_{a}} (y_{i} - \mu_{a})^{2} \right).
\end{align}
Given the parameters and observed chain $\{Y_{t}\}$, the hidden Markov chain is updated as 
\begin{align}
\label{pos:X:init}
\PP(X_{1} = a \mid \ldots) &\propto \{r_{a}\phi(Y_{1};\mu_{a},\sigma_{a}^{2})\PP(Y_{2}^{m}\mid X_{1}=a,\theta)\}^{K},
\end{align}
and as 
\begin{align}
\label{pos:X:tran}
\PP(X_{i} = b\mid X_{i-1} = a)&\propto \{ q_{ab}\phi(Y_{i};\mu_{b},\sigma_{b}^{2})\PP(Y_{i+1}^{jm}\mid X_{i}=b,\theta)\}^{K},
\end{align}
for $1\leq j \leq K, (j-1)m+1 < i \leq jm$. The prediction filter is updated as 
\begin{align}
\PP(&X_{(j-1)m+1} = b \mid Y_{(j-2)m+1}^{(j-1)m}) 
\nonumber
\\
& \propto\sum_{a}\{ \PP(X_{(j-1)m+1} = b \mid X_{(j-1)m} = a)\PP_{\theta}(Y_{(j-2)m+1}^{(j-1)m},X_{(j-1)m}=a)\}^{K}
\nonumber
\\
& = \sum_{a}\{q_{ab}\PP_{\theta}(Y_{(j-2)m+1}^{(j-1)m},X_{(j-1)m}=a)\}^{K}
\label{pos:X:pf}
\end{align}
for $1<j\leq K$,
where $\PP_{\theta}(Y_{(j-2)m+1}^{(j-1)m},X_{(j-1)m}=a)$ is calculated using forward-backward recursion. 

The subset sampling scheme is carried out in parallel using Gibbs sampler that alternates between updating parameter $\theta$ and the hidden Markov chain $X$. On the first subset, we draw parameters by cycling through the sequence for the following six steps until the Markov chain convergent to the stationary distribution: 
\begin{enumerate}[label=(\subscript{a}{{\arabic*}})]
\item  Draw $X_{1},\ldots,X_{m}$ from $\{1,\ldots,S\}$ with equal weights as initial value for hidden Markov chain.
\label{a.1}

\item Update $(r_{1},\ldots,r_{S})$ according to \eqref{pos:init}.

\item Update $Q$ by drawing $(q_{a1},\ldots,q_{aS})$ according to \eqref{pos:Q} independently for $a = 1,\ldots,S$.
\label{a.3}

\item Update $\mu_{a}$ according to \eqref{pos:mean} independently for $a = 1,\ldots,K$.

\item Update $\sigma^{-2}_{a}$ according to \eqref{pos:var} independently for $a = 1,\ldots,K$.

\item Update $(X_{1},\ldots,X_{m})$ by drawing $X_{1}$ from \eqref{pos:X:init} and $X_{i-1}\rightarrow X_{i}$ from \eqref{pos:X:tran} for $i = 2,\ldots,m$.
\label{a.6}
\end{enumerate}
On $j$th subset ($j=2, \ldots, K$), we use the prediction filter on the immediately preceding subset to approximate the initial distribution of the hidden Markov chain; therefore, we require access to $2m$ observations ($j$th and $(j-1)$th subsets), where the $(j-1)$th subset is used to estimate the prediction filter and the $j$th subset is used to sample posterior draws $\theta$. For $j =2,\ldots,K$, with access to $Y_{(j-2)m+1}^{(j-1)m}$ and $Y_{(j-1)m+1}^{jm}$, run the above Gibbs sampler \ref{a.1} - \ref{a.6} on $Y_{(j-2)m+1}^{(j-1)m}$ to get an estimation of prediction filter according to \eqref{pos:X:pf}, and then run Gibbs sampler on $Y_{(j-1)m+1}^{jm}$ while setting $r = (r_{1},\ldots,r_{S})$ equal to the estimation of prediction filter obtained from $Y_{(j-2)m+1}^{(j-1)m}$.

\begin{enumerate}[label=(\subscript{b}{{\arabic*}})]
\item Run the Gibbs sampler \ref{a.1} - \ref{a.6} with observations $Y_{(j-2)m+1}^{(j-1)m}$.

\item Calculate the prediction filter weights $(w_{1},\ldots,w_{S})$ according to \eqref{pos:X:pf} based on the MCMC estimation of paramters.

\item Set $(r_{1},\ldots,r_{S}) =  (w_{1},\ldots,w_{S}) / \sum_{a}w_{a}$ as initial distribution for hidden Markov chain. Then draw $X_{(j-1)m+1}$ with weight $(r_{1},\ldots,r_{S})$ and $X_{(j-1)m+2},\ldots,X_{jm}$ with equal weight from $\{1,\ldots,S\}$.

\item Run the Gibbs sampler \ref{a.3} - \ref{a.6} with observations $Y_{(j-1)m+1}^{jm}$.
\end{enumerate}
If we set $K =1$, then we get the same sampling scheme as those in \citet{Ryd08}. Let $\{\theta^{(t)}_{(j)}\}_{t = 1,j=1}^{T,K}$ be the posterior draws collected from subset posterior sampling scheme. Then, the final step of divide-and-conquer method simply re-scales and re-centers the scaled and centered draws from subsets as follows:
\begin{enumerate}[label=(\subscript{c}{{\arabic*}})]
\item Calculate the maximum likelihood estimator $\hat{\theta}$ using the full data $Y_{1}^{n}$ by Baum-Welch EM algorithm and the Monte Carlo estimate of subset posterior variance matrices $(\Sigma_{\theta_{(j)}})_{j=1}^ K$ by
\begin{align}
  \label{eq:10}
  \mu_{\theta_{(j)}} = \frac{1}{T} \sum_{t=1}^T \theta_{(j)}^{(t)}, \quad \Sigma_{\theta_{(j)}} = \frac{1}{T} \sum_{t=1}^T \left(\theta^{(t)}_{(j)} - \mu_{\theta_{(j)}} \right) \left(\theta^{(t)}_{(j)} - \mu_{\theta_{(j)}} \right)^\T,\quad j=1,\ldots,K.
\end{align} 
\item Adopt $COMB(\tilde\theta = \hat{\theta}, \tilde{\Sigma} = \frac{1}{K}\sum_{j=1}^{K} \Sigma_{\theta_{(j)}})$ and then calculate the $\theta$ draws from block filtered posterior distribution according to the following
\begin{align} 
 \tilde \theta_{(j)}^{(t)} = \tilde \theta + \tilde \Sigma^{1/2} \left\{ \Sigma_{\theta_{(j)}}^{-1/2}(\theta_{(j)}^{(t)}  - \hat \theta_j) \right\}, \quad j = 1, \ldots, K; \; t = 1, \ldots, T.
\end{align}

\end{enumerate}

\subsection{Time Comparisons and Accuracy for posterior inference on Q}
\label{app:sim}

\emph{Simulation Study of Accuracy and Efficiency.}
Following the discussion of the simulation study in the main manuscript of HMM with $(S=3,Q_{\epsilon} = 0.1,  \mu_1 = -2, \mu_2 = 0, \mu_3 = 2)$, we compare the accuracy of block filtered posterior distribution and its divide-and-conquer competitors for posterior inference on $Q$ (Table \ref{tab:Q-sim}). The divide-and-conquer competitors of block filtered posterior distribution have low accuracy for $n = 10^{6}$ and all $K$. In comparison, the block filtered posterior distribution has the highest accuracy for all $(n,K)$ combinations, and its accuracy is close to the benchmark accuracy of the Hamiltonian Monte Carlo. These observations agree with that of the simulation study for posterior inference on the emission distribution parameters in the main manuscript.

The block filtered posterior distribution is about ten times faster than Hamiltonian Monte Carlo and about fifteen times faster than data augmentation when $n=10^6$ and $K= n^{1/3}$ (Table \ref{tab:t-sim}). For all $n$, the running time of block filtered posterior distribution decreases as $K$ increases from $\log n$ to $n^{1/3}$ because larger $K$ results in smaller subsets size. The run-time of subset posterior sampling algorithms dominates the time required to combine the parameter draws from the subsets; therefore, the block filtered posterior distribution is more efficient than data augmentation and Hamiltonian Monte Carlo for all values of $K$ when $n = 10^{5} $ and $n= 10^{6}$.

\label{tables}
\begin{table}[ht]
\let\TPToverlap=\TPTrlap
\centering
\caption{\it Accuracy of the approximate posterior distributions of $Q$. The accuracies are averaged over ten simulation replications and across all the dimensions of $Q$. The maximum Monte Carlo error for the block filtered posterior distribution is 0.02}
\label{tab:Q-sim}
\begin{threeparttable}
  \begin{tabular}{|l|l|c|c|c|c|c|}
    \hline
   $n$ & $K$ & BFP & WASP & DPMC & PIE & HMC\\ 
    \hline

    & $\log n$ & 0.96 & 0.87 & 0.87 & 0.88 & \\ 
   $10^4$  &  $n^{1/4}$ & 0.96 & 0.87 & 0.87 & 0.87 & 0.96\\ 
    & $n^{1/3}$ & 0.93 & 0.58 & 0.57 & 0.58&  \\ 
    \hline

    & $\log n$ & 0.97 & 0.62 & 0.62 & 0.62 & \\ 
   $10^5$ & $n^{1/4}$ & 0.97 & 0.63 & 0.63 & 0.63 & 0.95\\ 
    & $n^{1/3}$ & 0.97 & 0.66 & 0.66 & 0.66 & \\ 
    \hline

    & $\log n$ & 0.96 & 0.22 & 0.22 & 0.23 & \\ 
   $10^6$ & $n^{1/4}$ & 0.97 & 0.22 & 0.22 & 0.23 & 0.96\\ 
    & $n^{1/3}$ & 0.96 & 0.25 & 0.25 & 0.25 &  \\ 
    \hline
  \end{tabular}
\begin{tablenotes}[normal,flushleft]
\small
\item[]
 BFP, block filtered posterior distribution; WASP, Wasserstein posterior; DPMC, double parallel Monte Carlo; PIE, posterior interval estimation; HMC, Hamiltonian Monte Carlo.
\end{tablenotes}
\end{threeparttable}
\end{table}


\begin{table}[ht]
\let\TPToverlap=\TPTrlap
\caption{\it Time (in hours) for computing the posterior distributions of $\theta$. The times are averaged over ten simulation replications}
\label{tab:t-sim}
\centering
\begin{threeparttable}
  \begin{tabular}{|l|c|c|c|c|c|}
    \hline
    & DA & \multicolumn{3}{c|}{BFP} & HMC\\
    \hline
      &  & $\log n$ & $n^{1/4}$ & $n^{1/3}$ & \\ 
    \hline
    $n = 10^4$ &1.36  & 1.88 & 0.73 & 0.33 & 0.68 \\ 
    $n = 10^5$ &14.80  & 5.58 & 4.97 & 1.22  & 8.60 \\ 
    $n = 10^6$ &131.13   & 59.76 & 21.16 & 8.83 & 91.89 \\ 
    \hline
  \end{tabular}
\begin{tablenotes}
\small
\item[]
DA, data augmentation; BFP, block filtered posterior distribution; HMC, Hamiltonian Monte Carlo.
\end{tablenotes}
\end{threeparttable}
\end{table}

\emph{Simulation Study of Mixing Property.} Following the discussion of the simulation study in the main manuscript of three more HMMs with $S = 2,5,7$, the accuracy of block filtered posterior distribution and its divide-and-conquer competitors for posterior inference on $Q$ is compared in Tables \ref{tab:Q-sim} and \ref{tab:rho:Q-sim}. For HMMs with $S = 2,3$,  $\rho$ is relatively small and the block filtered posterior distribution has similar accuracy compared to Hamiltonian Monte Carlo over all $(n,K)$ combinations. On the other hand, HMMs with $S = 5,7$ have  $\rho \approx 1$ and the block filtered posterior distribution has a smaller accuracy compared to Hamiltonian Monte Carlo. Furthermore, as $K$ increases, the accuracy of block filtered posterior distribution decreases because of smaller subset size and increased dependence.

For HMMs with $S=2,3,5,7$ and every $(n,K)$ combinations, the block filtered posterior distribution is the top performer compared to its divide-and-conquer competitors. For HMMs with $S=2,3$ and relatively weak dependence, the divide-and-conquer competitors have acceptable accuracy, but their accuracy values decrease quickly as the dependence $\rho$ increases and drop to $0$ for $S=5,7.$ In contrast, the block filtered posterior distribution keeps its great performance on the accuracy for $S=5,7$ and all $(n,K)$ combinations. These observations agree with that of the simulation study for posterior inference on the emission distribution parameters in the main manuscript.
\begin{table}[ht]
\let\TPToverlap=\TPTrlap
\centering
\caption{\it Accuracy of the approximate posterior distributions of $Q$. The accuracies are averaged over ten simulation replications and across all the dimensions of $Q$. The maximum Monte Carlo error for the block filtered posterior distribution is 0.02}
\label{tab:rho:Q-sim}
\begin{threeparttable}
  \begin{tabular}{|l|l|c|c|c|c|c|}
    \hline
    $n$ & $K$ & BFP & WASP & DPMC & PIE & HMC\\
    \hline
    \multicolumn{7}{|c|}{$S = 2$}\\
    \hline
    & $\log n$ & 0.97 & 0.98 & 0.98 & 0.98 & \\
   $10^4$  &  $n^{1/4}$ & 0.97 & 0.98 & 0.98 & 0.99 & 0.96 \\
    & $n^{1/3}$ & 0.97 & 0.96 & 0.96 & 0.96 & \\
    \hline
    & $\log n$ & 0.97 & 0.98 & 0.98 & 0.99 & \\
   $10^5$ & $n^{1/4}$ & 0.97 & 0.98 & 0.98 & 0.99 & 0.96 \\
    & $n^{1/3}$ & 0.97 & 0.98 & 0.98 & 0.98 & \\
    \hline
    \multicolumn{7}{|c|}{$S = 5$}\\
    \hline
    & $\log n$ & 0.65 & 0.00 & 0.00 & 0.00 &  \\
   $10^4$  &  $n^{1/4}$ & 0.66 & 0.00 & 0.00 & 0.00 & 0.95  \\
    & $n^{1/3}$ & 0.62 & 0.00 & 0.00 & 0.00 &  \\
    \hline
    & $\log n$ & 0.68 & 0.00 & 0.00 & 0.00 &  \\
   $10^5$ & $n^{1/4}$ & 0.65 & 0.00 & 0.00 & 0.00 & 0.95\\
    & $n^{1/3}$ & 0.43 & 0.00 & 0.00 & 0.00 &  \\
    \hline
    \multicolumn{7}{|c|}{$S = 7$}\\
    \hline
    & $\log n$ & 0.56 & 0.01 & 0.01 & 0.00 &  \\
   $10^4$  &  $n^{1/4}$ & 0.57 & 0.00 & 0.01 & 0.00 & 0.94  \\
    & $n^{1/3}$ & 0.56 & 0.00 & 0.00 & 0.00 &  \\
    \hline
    & $\log n$ & 0.51 & 0.00 & 0.00 & 0.00 &  \\
   $10^5$ & $n^{1/4}$ & 0.54 & 0.00 & 0.00 & 0.00 &  0.94\\
    & $n^{1/3}$ & 0.55 & 0.00 & 0.00 & 0.00 &  \\
    \hline        
  \end{tabular}
\begin{tablenotes}[normal,flushleft]
\small
\item[]
 BFP, block filtered posterior distribution; WASP, Wasserstein posterior; DPMC, double parallel Monte Carlo; PIE, posterior interval estimation; HMC, Hamiltonian Monte Carlo.
\end{tablenotes}
\end{threeparttable}
\end{table}

\emph{Real Data Analysis.}
Following the discussion of the read data analysis in the main manuscript of HMM, we compare the accuracy of block filtered posterior distribution and its divide-and-conquer competitors for posterior inference on $Q$ (Table \ref{tab:par-real-q}). For $K = \log n, n^{1/4}$, and $n^{1/3}$, the block filtered posterior distribution outperforms its divide-and-conquer competitors, and its accuracy is close to that of Hamiltonian Monte Carlo. As for the efficiency comparison (Table \ref{tab:t-real}), the block filtered posterior distribution is more efficient than data augmentation for all choices of $K$. For $K = n^{1/3}$, computing block filtered posterior distribution is more efficient than Hamiltonian Monte Carlo.

\begin{table}[ht]
  \caption{\it Accuracy of the approximate posterior distributions of $Q$. The entries in the table are the median of the accuracies computed across the $Q$ dimensions. The maximum Monte Carlo error for the block filtered posterior distribution is 0.01}
\label{tab:par-real-q}  
\centering
\begin{threeparttable}
{\centering
  \begin{tabular}{|l|c|c|c|c|c|}
    \hline
    $K$ & BFP & WASP & DPMC & PIE & HMC \\
    \hline
    $\log n$ & 0.89 & 0.05 & 0.05 & 0.05 &0.96\\
    $n^{1/4}$ & 0.87 & 0.05 & 0.05 & 0.05 &\\
    $n^{1/3}$ & 0.84 & 0.06 & 0.06 & 0.06 &\\
    \hline
  \end{tabular}
}
\begin{tablenotes}
\small
\item[] BFP, block filtered posterior distribution; WASP, Wasserstein posterior; DPMC, double parallel Monte Carlo; PIE, posterior interval estimation; HMC, Hamiltonian Monte Carlo.
\end{tablenotes}
\end{threeparttable}
\end{table}

\begin{table}[ht]
  \caption{\it Time (in hours) for computing the posterior distributions of $\theta$ }
\label{tab:t-real}    
\centering
\begin{threeparttable}
  \begin{tabular}{|c|c|c|c|c|}
    \hline
    DA &\multicolumn{3}{c|}{BFP} & HMC \\
    \hline
     & $\log n$ & {$n^{1/4}$} & {$n^{1/3}$}& \\ 
    \hline
    4.53 & 2.27 & 1.53 & 0.70 & 1.11\\ 
    \hline
  \end{tabular}
 \begin{tablenotes}
\small
\item[]
 DA, data augmentation; BFP, block filtered posterior distribution; HMC, Hamiltonian Monte Carlo.
 \end{tablenotes}
\end{threeparttable}
\end{table}

\section*{Acknowledgements}
Chunlei Wang and Sanvesh Srivastava are partially supported by grants from the Office of Naval Research (ONR-BAA N000141812741) and the National Science Foundation (DMS-1854667/1854662).

\bibliographystyle{Chicago}
\bibliography{papers}

\begin{thebibliography}{}

\bibitem[\protect\citeauthoryear{Aicher, Ma, Foti, and Fox}{Aicher
  et~al.}{2019}]{aicher2019stochastic}
Aicher, C., Y.-A. Ma, N.~J. Foti, and E.~B. Fox (2019).
\newblock Stochastic gradient mcmc for state space models.
\newblock {\em SIAM Journal on Mathematics of Data Science\/}~{\em 1\/}(3),
  555--587.

\bibitem[\protect\citeauthoryear{Andrieu, Doucet, and Holenstein}{Andrieu
  et~al.}{2010}]{andrieu2010particle}
Andrieu, C., A.~Doucet, and R.~Holenstein (2010).
\newblock Particle markov chain monte carlo methods.
\newblock {\em Journal of the Royal Statistical Society: Series B (Statistical
  Methodology)\/}~{\em 72\/}(3), 269--342.

\bibitem[\protect\citeauthoryear{Andrieu, Doucet, and Tadic}{Andrieu
  et~al.}{2005}]{andrieu2005line}
Andrieu, C., A.~Doucet, and V.~B. Tadic (2005).
\newblock On-line parameter estimation in general state-space models.
\newblock In {\em Proceedings of the 44th IEEE Conference on Decision and
  Control}, pp.\  332--337. IEEE.

\bibitem[\protect\citeauthoryear{Bardenet, Doucet, and Holmes}{Bardenet
  et~al.}{2017}]{Baretal17}
Bardenet, R., A.~Doucet, and C.~Holmes (2017).
\newblock On markov chain monte carlo methods for tall data.
\newblock {\em The Journal of Machine Learning Research\/}~{\em 18\/}(1),
  1515--1557.

\bibitem[\protect\citeauthoryear{Bickel, Ritov, and Ryden}{Bickel
  et~al.}{1998}]{bickel1998asymptotic}
Bickel, P.~J., Y.~Ritov, and T.~Ryden (1998).
\newblock Asymptotic normality of the maximum-likelihood estimator for general
  hidden markov models.
\newblock {\em The Annals of Statistics\/}~{\em 26\/}(4), 1614--1635.

\bibitem[\protect\citeauthoryear{Bierkens, Fearnhead, Roberts, et~al.}{Bierkens
  et~al.}{2019}]{Bieetal19}
Bierkens, J., P.~Fearnhead, G.~Roberts, et~al. (2019).
\newblock The zig-zag process and super-efficient sampling for bayesian
  analysis of big data.
\newblock {\em The Annals of Statistics\/}~{\em 47\/}(3), 1288--1320.

\bibitem[\protect\citeauthoryear{Bouchard-C{\^o}t{\'e}, Vollmer, and
  Doucet}{Bouchard-C{\^o}t{\'e} et~al.}{2018}]{Bouetal18}
Bouchard-C{\^o}t{\'e}, A., S.~J. Vollmer, and A.~Doucet (2018).
\newblock The bouncy particle sampler: A nonreversible rejection-free markov
  chain monte carlo method.
\newblock {\em Journal of the American Statistical Association\/}~{\em
  113\/}(522), 855--867.

\bibitem[\protect\citeauthoryear{Capp{\'e}}{Capp{\'e}}{2011}]{cappe2011online}
Capp{\'e}, O. (2011).
\newblock Online em algorithm for hidden markov models.
\newblock {\em Journal of Computational and Graphical Statistics\/}~{\em
  20\/}(3), 728--749.

\bibitem[\protect\citeauthoryear{Capp{\'e} and Moulines}{Capp{\'e} and
  Moulines}{2009}]{CapMou09}
Capp{\'e}, O. and E.~Moulines (2009).
\newblock On-line expectation--maximization algorithm for latent data models.
\newblock {\em Journal of the Royal Statistical Society: Series B (Statistical
  Methodology)\/}~{\em 71\/}(3), 593--613.

\bibitem[\protect\citeauthoryear{Capp{\'e}, Moulines, and Ryd{\'e}n}{Capp{\'e}
  et~al.}{2006}]{Capetal06}
Capp{\'e}, O., E.~Moulines, and T.~Ryd{\'e}n (2006).
\newblock {\em Inference in hidden Markov models}.
\newblock Springer Science \& Business Media.

\bibitem[\protect\citeauthoryear{Capp{\'e}, Robert, and Ryd{\'e}n}{Capp{\'e}
  et~al.}{2003}]{Capetal03}
Capp{\'e}, O., C.~P. Robert, and T.~Ryd{\'e}n (2003).
\newblock Reversible jump, birth-and-death and more general continuous time
  markov chain monte carlo samplers.
\newblock {\em Journal of the Royal Statistical Society: Series B (Statistical
  Methodology)\/}~{\em 65\/}(3), 679--700.

\bibitem[\protect\citeauthoryear{Carpenter, Gelman, Hoffman, Lee, Goodrich,
  Betancourt, Brubaker, Guo, Li, and Riddell}{Carpenter et~al.}{2017}]{Stan17}
Carpenter, B., A.~Gelman, M.~D. Hoffman, D.~Lee, B.~Goodrich, M.~Betancourt,
  M.~Brubaker, J.~Guo, P.~Li, and A.~Riddell (2017).
\newblock {Stan: A Probabilistic Programming Language}.
\newblock {\em Journal of Statistical Software\/}~{\em 76\/}(1).

\bibitem[\protect\citeauthoryear{Celeux, Hurn, and Robert}{Celeux
  et~al.}{2000}]{Celetal00}
Celeux, G., M.~Hurn, and C.~P. Robert (2000).
\newblock Computational and inferential difficulties with mixture posterior
  distributions.
\newblock {\em Journal of the American Statistical Association\/}~{\em
  95\/}(451), 957--970.

\bibitem[\protect\citeauthoryear{De~Gunst and Shcherbakova}{De~Gunst and
  Shcherbakova}{2008}]{de2008asymptotic}
De~Gunst, M. and O.~Shcherbakova (2008).
\newblock Asymptotic behavior of bayes estimators for hidden markov models with
  application to ion channels.
\newblock {\em Mathematical Methods of Statistics\/}~{\em 17\/}(4), 342--356.

\bibitem[\protect\citeauthoryear{Devroye, Mehrabian, and Reddad}{Devroye
  et~al.}{2018}]{devroye2018total}
Devroye, L., A.~Mehrabian, and T.~Reddad (2018).
\newblock The total variation distance between high-dimensional gaussians.
\newblock {\em arXiv preprint arXiv:1810.08693\/}.

\bibitem[\protect\citeauthoryear{Fearnhead, Bierkens, Pollock, Roberts,
  et~al.}{Fearnhead et~al.}{2018}]{Feaetal18}
Fearnhead, P., J.~Bierkens, M.~Pollock, G.~O. Roberts, et~al. (2018).
\newblock Piecewise deterministic markov processes for continuous-time monte
  carlo.
\newblock {\em Statistical Science\/}~{\em 33\/}(3), 386--412.

\bibitem[\protect\citeauthoryear{Foti, Xu, Laird, and Fox}{Foti
  et~al.}{2014}]{Fotetal14}
Foti, N., J.~Xu, D.~Laird, and E.~Fox (2014).
\newblock Stochastic variational inference for hidden markov models.
\newblock In {\em Advances in neural information processing systems}, pp.\
  3599--3607.

\bibitem[\protect\citeauthoryear{Fr{\"u}hwirth-Schnatter}{Fr{\"u}hwirth-Schnatter}{2006}]{Fru06}
Fr{\"u}hwirth-Schnatter, S. (2006).
\newblock {\em Finite mixture and Markov switching models}.
\newblock Springer Science \& Business Media.

\bibitem[\protect\citeauthoryear{Gassiat, Rousseau, et~al.}{Gassiat
  et~al.}{2014}]{gassiat2014posterior}
Gassiat, E., J.~Rousseau, et~al. (2014).
\newblock About the posterior distribution in hidden markov models with unknown
  number of states.
\newblock {\em Bernoulli\/}~{\em 20\/}(4), 2039--2075.

\bibitem[\protect\citeauthoryear{Giordano, Broderick, and Jordan}{Giordano
  et~al.}{2018}]{Giaetal18}
Giordano, R., T.~Broderick, and M.~I. Jordan (2018).
\newblock Covariances, robustness and variational bayes.
\newblock {\em The Journal of Machine Learning Research\/}~{\em 19\/}(1),
  1981--2029.

\bibitem[\protect\citeauthoryear{Golightly and Wilkinson}{Golightly and
  Wilkinson}{2006}]{golightly2006bayesian}
Golightly, A. and D.~J. Wilkinson (2006).
\newblock Bayesian sequential inference for nonlinear multivariate diffusions.
\newblock {\em Statistics and Computing\/}~{\em 16\/}(4), 323--338.

\bibitem[\protect\citeauthoryear{Jensen, Petersen, et~al.}{Jensen
  et~al.}{1999}]{jensen1999asymptotic}
Jensen, J.~L., N.~V. Petersen, et~al. (1999).
\newblock Asymptotic normality of the maximum likelihood estimator in state
  space models.
\newblock {\em The Annals of Statistics\/}~{\em 27\/}(2), 514--535.

\bibitem[\protect\citeauthoryear{Jordan, Lee, and Yang}{Jordan
  et~al.}{2019}]{Joretal19}
Jordan, M.~I., J.~D. Lee, and Y.~Yang (2019).
\newblock Communication-efficient distributed statistical inference.
\newblock {\em Journal of the American Statistical Association\/}~{\em
  114\/}(526), 668--681.

\bibitem[\protect\citeauthoryear{Le~Corff and Fort}{Le~Corff and
  Fort}{2013}]{LeFor13}
Le~Corff, S. and G.~Fort (2013).
\newblock Online expectation maximization based algorithms for inference in
  hidden markov models.
\newblock {\em Electronic Journal of Statistics\/}~{\em 7}, 763--792.

\bibitem[\protect\citeauthoryear{Le~Gland and Mevel}{Le~Gland and
  Mevel}{2000}]{le2000exponential}
Le~Gland, F. and L.~Mevel (2000).
\newblock Exponential forgetting and geometric ergodicity in hidden markov
  models.
\newblock {\em Mathematics of Control, Signals and Systems\/}~{\em 13\/}(1),
  63--93.

\bibitem[\protect\citeauthoryear{Lehmann and Casella}{Lehmann and
  Casella}{1998}]{LehCas98}
Lehmann, E.~L. and G.~Casella (1998).
\newblock {\em Theory of point estimation}, Volume~31.
\newblock Springer.

\bibitem[\protect\citeauthoryear{Leroux}{Leroux}{1992}]{leroux1992maximum}
Leroux, B.~G. (1992).
\newblock Maximum-likelihood estimation for hidden markov models.
\newblock {\em Stochastic processes and their applications\/}~{\em 40\/}(1),
  127--143.

\bibitem[\protect\citeauthoryear{Li, Srivastava, and Dunson}{Li
  et~al.}{2016}]{Lietal16}
Li, C., S.~Srivastava, and D.~B. Dunson (2016).
\newblock Simple, scalable and accurate posterior interval estimation.
\newblock {\em arXiv preprint arXiv:1605.04029\/}.

\bibitem[\protect\citeauthoryear{Li, Srivastava, and Dunson}{Li
  et~al.}{2017}]{li2017simple}
Li, C., S.~Srivastava, and D.~B. Dunson (2017).
\newblock Simple, scalable and accurate posterior interval estimation.
\newblock {\em Biometrika\/}~{\em 104\/}(3), 665--680.

\bibitem[\protect\citeauthoryear{Minsker, Srivastava, Lin, and Dunson}{Minsker
  et~al.}{2014}]{Minetal14}
Minsker, S., S.~Srivastava, L.~Lin, and D.~Dunson (2014).
\newblock Scalable and robust {B}ayesian inference via the median posterior.
\newblock In {\em Proceedings of the 31st International Conference on Machine
  Learning (ICML-14)}, pp.\  1656--1664.

\bibitem[\protect\citeauthoryear{Minsker, Srivastava, Lin, and Dunson}{Minsker
  et~al.}{2017}]{minsker2017robust}
Minsker, S., S.~Srivastava, L.~Lin, and D.~B. Dunson (2017).
\newblock Robust and scalable bayes via a median of subset posterior measures.
\newblock {\em The Journal of Machine Learning Research\/}~{\em 18\/}(1),
  4488--4527.

\bibitem[\protect\citeauthoryear{Nemeth and Fearnhead}{Nemeth and
  Fearnhead}{2021}]{NemFea21}
Nemeth, C. and P.~Fearnhead (2021).
\newblock Stochastic gradient markov chain monte carlo.
\newblock {\em Journal of the American Statistical Association\/}, 1--18.

\bibitem[\protect\citeauthoryear{Olkin and Pukelsheim}{Olkin and
  Pukelsheim}{1982}]{olkin1982distance}
Olkin, I. and F.~Pukelsheim (1982).
\newblock The distance between two random vectors with given dispersion
  matrices.
\newblock {\em Linear Algebra and its Applications\/}~{\em 48}, 257--263.

\bibitem[\protect\citeauthoryear{Pauli, Racugno, and Ventura}{Pauli
  et~al.}{2011}]{pauli2011bayesian}
Pauli, F., W.~Racugno, and L.~Ventura (2011).
\newblock Bayesian composite marginal likelihoods.
\newblock {\em Statistica Sinica\/}, 149--164.

\bibitem[\protect\citeauthoryear{Quiroz, Kohn, Villani, and Tran}{Quiroz
  et~al.}{2019}]{Quietal19}
Quiroz, M., R.~Kohn, M.~Villani, and M.-N. Tran (2019).
\newblock Speeding up mcmc by efficient data subsampling.
\newblock {\em Journal of the American Statistical Association\/}~{\em
  114\/}(526), 831--843.

\bibitem[\protect\citeauthoryear{Robert, Elvira, Tawn, and Wu}{Robert
  et~al.}{2018}]{Robetal18}
Robert, C.~P., V.~Elvira, N.~Tawn, and C.~Wu (2018).
\newblock Accelerating mcmc algorithms.
\newblock {\em Wiley Interdisciplinary Reviews: Computational
  Statistics\/}~{\em 10\/}(5), e1435.

\bibitem[\protect\citeauthoryear{Robert, Ryden, and Titterington}{Robert
  et~al.}{1999}]{Robetal99a}
Robert, C.~P., T.~Ryden, and D.~M. Titterington (1999).
\newblock Convergence controls for mcmc algorithms, with applications to hidden
  markov chains.
\newblock {\em Journal of Statistical Computation and Simulation\/}~{\em
  64\/}(4), 327--355.

\bibitem[\protect\citeauthoryear{Robert, Ryden, and Titterington}{Robert
  et~al.}{2000}]{Robetal00}
Robert, C.~P., T.~Ryden, and D.~M. Titterington (2000).
\newblock Bayesian inference in hidden markov models through the reversible
  jump markov chain monte carlo method.
\newblock {\em Journal of the Royal Statistical Society: Series B (Statistical
  Methodology)\/}~{\em 62\/}(1), 57--75.

\bibitem[\protect\citeauthoryear{Ryd{\'e}n}{Ryd{\'e}n}{1994}]{Ryd94}
Ryd{\'e}n, T. (1994).
\newblock Consistent and asymptotically normal parameter estimates for hidden
  markov models.
\newblock {\em The Annals of Statistics\/}, 1884--1895.

\bibitem[\protect\citeauthoryear{Ryd{\'e}n}{Ryd{\'e}n}{1997}]{ryden1997recursive}
Ryd{\'e}n, T. (1997).
\newblock On recursive estimation for hidden markov models.
\newblock {\em Stochastic Processes and their Applications\/}~{\em 66\/}(1),
  79--96.

\bibitem[\protect\citeauthoryear{Ryd{\'e}n}{Ryd{\'e}n}{2008}]{Ryd08}
Ryd{\'e}n, T. (2008).
\newblock Em versus markov chain monte carlo for estimation of hidden markov
  models: A computational perspective.
\newblock {\em Bayesian Analysis\/}~{\em 3\/}(4), 659--688.

\bibitem[\protect\citeauthoryear{Scott}{Scott}{2002}]{Sco02}
Scott, S.~L. (2002).
\newblock Bayesian methods for hidden markov models: Recursive computing in the
  21st century.
\newblock {\em Journal of the American statistical Association\/}~{\em
  97\/}(457), 337--351.

\bibitem[\protect\citeauthoryear{Scott, Blocker, Bonassi, Chipman, George, and
  McCulloch}{Scott et~al.}{2016}]{Scoetal16}
Scott, S.~L., A.~W. Blocker, F.~V. Bonassi, H.~A. Chipman, E.~I. George, and
  R.~E. McCulloch (2016).
\newblock {Bayes and big data: the consensus Monte Carlo algorithm}.
\newblock {\em International Journal of Management Science and Engineering
  Management\/}~{\em 11\/}(2), 78--88.

\bibitem[\protect\citeauthoryear{Srivastava, Cevher, Dinh, and
  Dunson}{Srivastava et~al.}{2015}]{Srietal15}
Srivastava, S., V.~Cevher, Q.~Dinh, and D.~Dunson (2015).
\newblock {WASP}: {S}calable {B}ayes via {B}arycenters of subset posteriors.
\newblock In {\em Proceedings of the 18th International Conference on
  Artificial Intelligence and Statistics}, pp.\  912--920.

\bibitem[\protect\citeauthoryear{Srivastava, Li, and Dunson}{Srivastava
  et~al.}{2018}]{srivastava2018scalable}
Srivastava, S., C.~Li, and D.~B. Dunson (2018).
\newblock Scalable bayes via barycenter in wasserstein space.
\newblock {\em The Journal of Machine Learning Research\/}~{\em 19\/}(1),
  312--346.

\bibitem[\protect\citeauthoryear{Srivastava and Xu}{Srivastava and
  Xu}{2021}]{XuSri19}
Srivastava, S. and Y.~Xu (2021).
\newblock Multivariate distributed bayesian inference for linear mixed-effects
  models.
\newblock {\em Journal of Computational and Graphical Statistics (to
  appear)\/}.

\bibitem[\protect\citeauthoryear{Varin, Reid, and Firth}{Varin
  et~al.}{2011}]{Varetal11}
Varin, C., N.~Reid, and D.~Firth (2011).
\newblock An overview of composite likelihood methods.
\newblock {\em Statistica Sinica\/}, 5--42.

\bibitem[\protect\citeauthoryear{Vernet}{Vernet}{2015a}]{vernet2015non}
Vernet, E. (2015a).
\newblock Non parametric hidden markov models with finite state space:
  Posterior concentration rates.
\newblock {\em arXiv preprint arXiv:1511.08624\/}.

\bibitem[\protect\citeauthoryear{Vernet}{Vernet}{2015b}]{vernet2015posterior}
Vernet, E. (2015b).
\newblock Posterior consistency for nonparametric hidden markov models with
  finite state space.
\newblock {\em Electronic Journal of Statistics\/}~{\em 9\/}(1), 717--752.

\bibitem[\protect\citeauthoryear{Villani}{Villani}{2003}]{Vil03}
Villani, C. (2003).
\newblock {\em Topics in optimal transportation}.
\newblock Number~58. American Mathematical Soc.

\bibitem[\protect\citeauthoryear{Villani}{Villani}{2008}]{villani2008optimal}
Villani, C. (2008).
\newblock {\em Optimal transport: old and new}, Volume 338.
\newblock Springer Science \& Business Media.

\bibitem[\protect\citeauthoryear{Welling and Teh}{Welling and
  Teh}{2011}]{WelTeh11}
Welling, M. and Y.~W. Teh (2011).
\newblock {B}ayesian learning via stochastic gradient {L}angevin dynamics.
\newblock In {\em Proceedings of the 28th International Conference on Machine
  Learning}, pp.\  681--688.

\bibitem[\protect\citeauthoryear{Wu and Robert}{Wu and Robert}{2019}]{WuRob19}
Wu, C. and C.~P. Robert (2019).
\newblock Parallelising mcmc via random forests.
\newblock {\em arXiv preprint arXiv:1911.09698\/}.

\bibitem[\protect\citeauthoryear{Xue and Liang}{Xue and
  Liang}{2019}]{xue2019double}
Xue, J. and F.~Liang (2019).
\newblock Double-parallel monte carlo for bayesian analysis of big data.
\newblock {\em Statistics and computing\/}~{\em 29\/}(1), 23--32.

\end{thebibliography}

\end{document}